\newtheorem{thm}{Theorem}[section]
\newtheorem{lem}[thm]{Lemma}
\newtheorem{definition}[thm]{Definition}
\newtheorem{prop}[thm]{Proposition}
\newenvironment{proof}{\noindent {\bf Proof \phantom{9}}}
{\hfill $\square$ \vspace{0.25cm}}
\def\be{\begin{eqnarray}}
\def\ee{\end{eqnarray}}
\def\ben{\begin{eqnarray*}}
\def\een{\end{eqnarray*}}
\numberwithin{equation}{section}
\numberwithin{figure}{section}
\def\be{\begin{eqnarray}}
\def\ee{\end{eqnarray}}
\def\me{\medskip\noindent}
\newcommand{\Co}{\mathcal{C}}
\newcommand{\Var}{\mbox{Var}}
\newcommand{\card}{\mbox{Card}}
\newcommand{\supp}{\mbox{supp}}
\newcommand{\TnMRCA}{T_n^{\mbox{{\scriptsize MRCA}}}}
\def\N{\mathbb{N}}
\def\P{\mathbb{P}}
\def\R{\mathbb{R}}
\def\E{\mathbb{E}}
\def\X{\mathcal{X}}
\def\U{\mathcal{U}}
\def\ind{{\mathchoice {\rm 1\mskip-4mu l} {\rm 1\mskip-4mu l}
{\rm 1\mskip-4.5mu l} {\rm 1\mskip-5mu l}}}
\def\eg{\textit{e.g.} }
\def\ie{\textit{i.e.} }
\title{Inference with selection, varying population size and evolving population structure: Application of ABC to a forward-backward coalescent process with interactions}
\author{
Clotilde Lepers\thanks{Universit\'e de Paris, IAME, INSERM, F-75108 Paris, France; E-mail: \texttt{clotilde.lepers@inserm.fr}},\qquad
Sylvain Billiard\thanks{Univ. Lille, CNRS, UMR 8198 - Evo-Eco-Paleo, F-59000 Lille, France; E-mail: \texttt{sylvain.billiard@univ-lille.fr}}, \qquad
Matthieu Porte\thanks{CMAP UMR CNRS 7641, Ecole Polytechnique, route de Saclay, 91128 Palaiseau Cedex-France},\qquad
Sylvie M\'el\'eard\thanks{CMAP UMR CNRS 7641, Ecole Polytechnique, route de Saclay, 91128 Palaiseau Cedex-France; E-mail: \texttt{sylvie.meleard@polytechnique.edu}}, \qquad
Viet Chi Tran\thanks{LAMA, Univ Gustave Eiffel, Univ Paris Est Creteil, CNRS, F-77454 Marne-la-Vall\'ee, France; E-mail: \texttt{chi.tran@univ-eiffel.fr}},
}
\date{\today}
\begin{document}

\maketitle

\textit{Word count}:6337.\\

\textit{Running title}: Inference using a forward-backward coalescent with competitive interactions

\begin{abstract}
  Genetic data are often used to infer demographic history and changes or detect genes under selection. Inferential methods are commonly based on models making various strong assumptions: demography and population structures are supposed \textit{a priori} known, the evolution of the genetic composition of a population does not affect demography nor population structure, and there is no selection nor interaction between and within genetic strains. In this paper, we present a stochastic birth-death model with competitive interactions and  asexual reproduction. We develop an inferential procedure for ecological, demographic and genetic parameters. We first show how genetic diversity and genealogies are related to birth and death rates, and to how individuals compete within and between strains. {This leads us to propose an original model of phylogenies, with trait structure and interactions, that allows multiple merging}. Second, we develop an Approximate Bayesian Computation framework to use our model for analyzing genetic data. We apply our procedure to simulated data from a toy model, and to real data by analyzing the genetic diversity of microsatellites on Y-chromosomes sampled from Central Asia human populations in order to test whether different social organizations show significantly different fertility.
\end{abstract}


\textit{Key-words:} Phylogenies; nested coalescent processes; genealogical inference; Approximate Bayesian Computation; Mutation-selection; competition and interactions; neutral diversity; adaptive dynamics; multiple merges; neutrality tests.\\

\bigskip


\section{Introduction}

Demographic, spatial or genetic structures affect genetic diversity because they determines genetic flows between lineages, relationships between individuals, and coalescent rates \citep{charlesworthetal03}.  In turn, genetic polymorphism within and between taxa is commonly used for estimating population structures \citep{goldsteinandchikhi, mulleretal} or demographic changes \citep{beichmanetal}, to infer population history, migration patterns, or to search for genes under selection \citep{stephan}. These methods are mostly based either on the site frequency spectrum, the identity per state or descent, or on summary statistics in an Approximate Bayesian Computation (ABC) framework \citep{beaumontzhangbalding}.\\
Statistical testing and model selection are generally performed under simplifying assumptions which allow computations of quantities such as the likelihood of a model, in particular under neutrality.For instance, under the Wright-Fisher model, the population size is supposed deterministic: it is known at any given time and independent of the composition of the population, \ie it is supposed that the mechanisms underlying the variations of the population size are extrinsic and without noise. Individuals thus compete for space but the carrying capacity of the environment does not change because of the evolution of the population itself, or because of extrinsic or intrinsic stochasticity. In birth-death models, population size can vary but populations can grow indefinitely because individuals do not interact. In addition, the Wright-Fisher and birth-death models are most often supposed neutral when used for demographic inference, \ie the reproduction and survival rates do not depend on the genetic lineage \cite[but see a recent birth-death model without interactions where rates can depend on mutations,][]{rasmussenandstadler}. \\
Yet, the assumptions of neutrality, extrinsic control of population size or non-interacting individuals are certainly often violated. For instance, genealogies of the seasonal influenza virus show important departure from neutrality which might suggest that selection and interaction between lineages are important enough to significantly affect evolution and the shapes of the phylogenetic trees \citep{bedfordetal,strelkowalassing}. Reproduction rates and carrying capacities have also been shown to depend on strains in the domesticated yeasts \citep{sporetal}, and the ecological literature contains many cases where competitive interactions vary among strains or species \citep{gallieni2017}. Finally, not explicitly including competition in spatially structured population leads to biological inconsistencies in population genetics models \citep{felsenstein75}. Developing {models and} inference methods which relax such hypotheses is thus a contemporaneous challenge, in order to improve our knowledge of the history and ecological features of species and populations. As emphasized by \cite{frostpybusgogviboudbonhoefferbedford}, this challenge is particularly important for the analysis of phylodynamics in clonal species such as viruses.\\
Some of these assumptions have been already relaxed. For instance, \cite{rasmussenandstadler} developed a model where reproductive and death rates can differ between lineages which can emerge because of spontaneous mutations. They applied their method on Ebola and influenza viruses in order to have estimate of fitness effects of mutations from phylodynamics. Indeed, variation of death and birth rates between lineages can affect viruses phylogenies, which can be detected and used to infer the effect of mutations. However, they supposed no interaction between lineages, discarding a possible effect of competition between viruses strains. \\
In this paper, we present a model and an inference method which allow the relaxation of several of these assumptions. First, in Section \ref{sec:micro}, we recall the stochastic process describing the eco-evolution of a structured population with ecological feedbacks \cite[introduced in][]{billiardferrieremeleardtran}. This model takes into account: i) A trait structure that can affect birth, death and competitive rates. The traits, which evolves because of mutations and selection, are seen as proxies for the species, taxa or strains; ii) Explicit competitive interactions between and within lineages; iii) Varying population sizes depending on the genetic composition of the population, \ie the carrying capacity depends on the ecological properties of existing strains (their birth, death and competitive rates). The model assumes that reproduction is asexual, that mutations affecting fitness are rare, and that neutral mutation follows an intermediate timescale between reproduction and death rates (the ecological timescale) and the rate at which mutations affecting fitness appear (the evolutionary timescale). {Second, in Section \ref{sec:def-FBcoal}, a new forward-backward coalescent process is proposed to describe the phylogenies in such a population. The forward step accounts for interactions, demography and evolution of trait structures, defining the skeleton on which the phylogenies of sampled individuals can be reconstructed in the backward step. Phylogenies of structured populations have been previously already modeled in nested coalescent models, \textit{e.g.} \citep{blancasbenitezduchampslambertsirijegousse,blancasbenitezguflerkliemtranwakolbinger,duchamps,verduausterlitzetal}, but in our case interactions within and between lineages, ecological feedbacks between selection and population size, and multiple coalescence mergers are taken into account. Contrarily to $\Lambda$-coalescent models proposed in the literature \citep{donnellykurtz_99,pitman,sagitov}, multiple merging here are not due to sweepstakes reproductive successes but they appear as a consequence of natural selection via mutation-competition and timescales.} Third, in Section \ref{sec:ABC}, we develop an ABC framework in order to estimate the parameters of the model from genetic diversity data. We show how ecological parameters such as individual birth and death rates, and competitive abilities can be estimated. Finally, we applied our inferential procedure on the one hand on simulated data from a eco-evolutionary toy model, and on the other hand, on genetic data from Y-chromosomes sampled in Central Asia human populations \citep{chaixetal,heyeretal} in order to test whether different social organizations can be associated with difference in fertility.

\section{The forward-backward coalescent model}\label{sec:fundations}

In the current work, we extend the population model developed in \cite{billiardferrieremeleardtran}  \citep[following][]{metzgeritzmeszenajacobsheerwaarden,champagnat06,champagnatmeleard} to include phylogenies and develop a statistical ABC procedure  that we apply on simulated and real datasets. The eco-evolution of a structured population with ecological feedbacks is described by a stochastic process. The population is structured by traits, considered as proxies for species, taxa or strains. These traits can affect birth, death and competitive rates, and new traits are generated by mutations. Explicit competitive interactions are modeled between individuals of the population with intensities depending on the traits, inducing varying population sizes depending on the genetic composition of the population. Also, a marker structure is added. Markers are assumed neutral in the sense that they have no impact on fecundity, survival or competition. They are introduced in the model to measure the neutral diversity and allow the reconstruction of the phylogenies. The model assumes asexual reproduction and complete linkage between traits and markers, and that the population evolves following three timescales. First, the ecological timescale: birth and death rates occur at a fast rate. Second, marker mutations arise slightly slower than the ecological timescale. Finally, mutations on the trait under selection occur at the slowest timescale. This reflects for instance that a large proportion of a genome is not composed of traits under selection. This happens for example in the influenza virus which shows a large diversity within seasons despite a very rapid evolution and adaptation \citep{neherbedford}.

\me Before precisely describing the application of the model to infer demographic and genetic parameters within an ABC framework, we summarize hereafter the main features and outcomes of the model.

\subsection{Genetic diversity in an eco-evolutionary dynamics with three timescales: The substitution Fleming-Viot process (SFVP)}\label{sec:micro}

We assume a population of clonal individuals characterized, on the one hand, by a trait $x\in {\cal X}\subset \mathbb{R}^d$, which affects the demographic processes  such as birth, death and competitive interactions between individuals and, on the other hand, by a vector of genetic markers $u\in {\cal U}\subset \mathbb{R}^q$ , supposed neutral (\emph{i.e.} $u$ does not affect the demographic process). Individuals with trait $x$ give birth at rate $b(x)$ and $d(x)$ is their intrinsic death rate. The competitive interactions between individuals with traits $x$ and $y$ add an effect $C(x,y)$ on the individual death rate. When the population is large,  the evolution of the population can be decomposed into the succession of invasions of favorable mutations on the trait $x$, because ecological processes are very fast, and the population jumps from one state to another.  The neutral marker also evolves between each adaptive jump, at a faster timescale that is compensated by mutations of small effect. Since the ecological parameters change after each adaptive jump on trait $x$ (the birth rate, death rate and the population size change), the evolution of the neutral marker also changes. Hence, even if the marker is neutral, its own evolution depends on the state of the population at a given time, especially on the competitive interactions $C(x,y)$ between individuals with traits $x$ and $y$. Overall, the joint eco-evolutionary dynamics of the neutral marker and the selected traits can be approximated by the so-called Substitution Flewing-Viot Process \cite[SFVP,][see  Appendix \ref{sec:maths} for details]{billiardferrieremeleardtran}.\\

\textit{Distribution of the trait $x$ between two adaptive jumps.} At the ecological timescale, when the population is large, $p$ strains with traits $x_1,\dots x_p$ can coexist. Between two adaptive jumps, the trait distribution in the population remains almost constant. Indeed, the size of subpopulations can vary but are expected to stay close to their equilibria $\widehat{n}(x_1; x_1,\dots, x_p),\dots \widehat{n}(x_p; x_1,\dots, x_p),$ given by the following competitive Lotka-Volterra system of ordinary differential equations (ODE) {that approximates the evolution in the ecological timescale}:
\begin{align}
 \frac{dn_t(x_j)}{dt}=\Big(b(x_j)-d(x_j)-\sum_{\ell=1}^p C(x_j,x_\ell)n_t(x_\ell)\Big)n_t(x_j),\ j\in \{1,\dots , p\}, \label{eq:lotka-volterra}
\end{align}
where {$n_t(x)$}  can be seen as the density of individuals of strain with trait $x$. 
The equilibrium $\widehat{n}(x_i ; x_1,\dots ,x_p)$ of the population of the strain with trait $x_i$ depends on the whole trait structure of the population which is in turn defined entirely by the set of traits present in the population (the arguments of $\widehat{n}$ given after the semicolon).\\

\textit{Change of the distribution of the trait $x$ during an adaptive jump.} In the timescale of trait mutations and in the population composed of $p$ strains with traits $x_1,\dots x_p$ and respective sizes $\ \widehat{n}(x_1 ; x_1,\dots, x_p), \dots, \widehat{n}(x_p ; x_1,\dots, x_p)$, when a mutation on trait $x_i$ occurs at time $t$,  a new strain is introduced with trait $x_i+h$ where $h$ is drawn in a distribution $m(x_i,h)dh$ (mutations on trait $x$ are not necessarily small, \ie selection can be strong). Whether the mutant strain invades or not the population depends on its invasion fitness defined by
\begin{equation}
 f(y ; x_1,\dots, x_p)= b(y)-d(y)- \sum_{j=1}^p \widehat{n}(x_j ; x_1,\dots, x_p) C(y,x_j) \label{def:fitness}
 \end{equation}
 \citep{metzgeritzmeszenajacobsheerwaarden,champagnat06,champagnatferrieremeleard}. The mutant strain invades with probability $\frac{[f(x_i+h ; x_1,\dots, x_p)]_+}{b(x_i+h)}$, in which case the population jumps to a new state given by the solution of the Lotka-Volterra ODE system (Eq. \ref{eq:lotka-volterra}) updated with the introduction of the mutant strain $(\widehat{n}(x_1 ; x_1,\dots, x_p, x_i+h),\dots \widehat{n}(x_i+h ; x_1, \dots, x_{p}, x_i+h))$. {In the new equilibrium, some former traits $x_1,\dots ,x_p$ may be lost.} The evolution of the trait can thus be described by a Polymorphic Evolution Sequence (PES), \ie the succession of the adaptive jumps of the population from one state to another \citep{champagnatmeleard2011}. For a visual abstract of the PES, see Fig. \ref{Fig:PES} in Appendix.

\bigskip
\textit{Evolution of the neutral marker.}
When the mutant strain with trait $x=x_i+h$ invades the population, {say at time 0}, an adaptive jump occurs. {Let us denote by $u$ the marker of the first mutant individual $(x,u)$.} Initially, the distribution of the neutral marker within strain $i$ and trait $x$, is thus composed of a single individual with marker $u$. 
The evolution of the marker distribution within this strain is given by $F_{t}^{u}(x,dv)$, the distribution at time $t$ of the marker values within the strain with trait $x$ given the initial value $u$. This distribution changes with time depending on the supposed mutation kernel on the marker, on the birth and death rates of individuals with trait $x$, and on the competitive interactions $C(x,y)$ with all the other individuals of any trait value $y \in \{x_1,\dots, x_p, x_i+h\}$. Between two adaptive jumps, assuming small marker mutations  but not necessarily small trait mutations, how the distribution $F_{t}^{u}(x,dv)$ evolves with time is given by the following stochastic differential equation \cite[see][]{billiardferrieremeleardtran} (derivation details and a more general form are given in  Appendix \ref{sec:maths})
\begin{equation}\label{eq:FV}
\int_{\mathcal{U}} \phi(v) F^{u}_{t}(x,dv) =   \phi(u)  +  b(x) \int_{0}^t \bigg(\int_{\mathcal{U}} \Delta \phi(v) F^{u}_{s}(x,dv)\bigg) ds+M^x_{t}(\phi).
\end{equation}
 The left side of the equation can be seen as the expectation of the distribution of the marker value at time $t$, where $\phi$ is a test function (supposed twice differentiable on $\mathcal{U}$). Different choices of functions $\phi$ will provide descriptors of the distribution $F_{t}^u$ (for example $\phi(v)=v$ gives the mean of the distribution).  The right side of the equation tells what is the expected form of the distribution. The first term on the right side gives the initial conditions: the first mutant with trait $x$ has a marker value $u$, hence the initial condition for the distribution is $\phi(u)$. The second term on the right side integrates the changes of the distribution which are only due to mutations on the marker between time $0$ (the invasion time of $x$) and $t$. Since mutation only occurs at birth, the rate at which $F$ changes with mutation is proportional to the birth rate $b(x)$. Within the integral, $\Delta \phi(v)$ is the Laplacian of the function $\phi$ which gives the rate of change of $F$ in all the dimensions of the marker values (which depends on the assumptions made on the mutation kernel and can be generalized, see Appendix \ref{sec:maths}). The last term $M_t^x(\phi)$ on the right side gives the changes of $F$ which are due to the ecological processes, \ie the fluctuations due to the birth and death of the individuals with trait $x$. $M_t^x(\phi)$ is a martingale \ie a square integrable random variable with mean 0 and variance
\begin{align}\label{eq:crochet-PMB-FV}
\Var(M^x_t(\phi))= \frac{2b(x)} {\widehat{n}(x ; x_1,\dots, x_p, x_i+h)}
\int_{0}^t  \E\bigg[\bigg(\int_{\mathcal{U}} \phi^2(v) F^{u}_{s}(x,dv) -   \Big(\int_{\mathcal{U}} \phi(v) F^{u}_{s}(x,dv)\Big)^2\bigg) \bigg] \ ds.
\end{align}
The fraction in the right hand side (r.h.s.) of Eq. \ref{eq:crochet-PMB-FV} corresponds to the demographic variance $\,2b(x)\,$ divided by the effective population size
\begin{equation}
\label{def:Ne}
N_e(x)=\widehat{n}(x ; x_1,\dots, x_p, x_i+h).
\end{equation}The population effective size, which partially governs the evolution of the diversity at the neutral marker, depends on the trait value $x$, but also on the whole trait distribution $x_1,\dots, x_p, x_i+h$. In particular, it means that the variance in the neutral diversity within the strain with trait $x$ depends on the competitive interactions of the latter with all the other strains.


 \subsection{Genealogies in a forward-backward coalescent with competitive interactions}\label{sec:def-FBcoal}

Genealogies are piecewise-defined and constructed by dividing time between intervals separating adaptive jumps of the PES, following a forward-backward coalescent process. Since the evolution of trait  $x$  depends on the current distribution of the traits in the population, the PES tree is constructed forward in time where the successive adaptive jump times are denoted by $(T_k)_{k\in \{1,\dots J\}}$, with $T_0=0$ and $J$ is the number of jumps that occurred before time $t$.                                                                                                                                  \me During the PES, a subpopulation with trait $x_i$ has its own coalescent rate on the markers which depends on its reproductive rate $b(x_i)$ and on the distribution of the traits in the whole population (Eq. \ref{def:Ne}). Genealogies are thus expected to be different among the different strains and between different adaptive jumps of the PES. Between adaptive jumps, since under our assumptions trait $x$ distribution and population size are supposed fixed, within-strains genealogies can be constructed backward in time. Given the PES during the time interval $[T_k,T_{k+1})$ ($k\in \{0,\dots J-1\}$)and the trait distribution $\{x_1,\dots x_p\}$,  the genealogy of the individuals within the strain with trait $x_i$ is obtained by simulating a Kingman coalescent with coalescence rate $\frac{2b(x_i)}{\widehat{n}\big(x_i ; x_1,\dots, x_p\big)}$ (Eq. \ref{eq:crochet-PMB-FV}).  When an adaptive jump occurs at time $T_k$, all lineages in the subpopulation of strain with trait $x_i$ instantaneously coalesce because a single mutant is always at the origin of a new strain during the PES. Note that coalescence is instantaneous under the assumptions underlying the PES, \ie at the timescale governing the evolution of the trait, the transition to fixation of the mutant trait is negligible. The allelic state at the marker is determined given the previously constructed genealogy, depending on the mutational model considered.

A more formal definition of the coalescent and associated proofs are given in App. \ref{sec:phylo}. A simulation algorithm for the construction of genealogies under our model is given in App. \ref{sec:simulations}.\\

\section{ABC inference in an eco-evolutionary framework}\label{sec:ABC}

We showed in the previous sections that the genetic structure of a sample of $n$ individuals can be related to the parameters of our eco-evolutionary model. We now aim at using this framework to infer genealogies, ecological and genetic parameters from genetic and/or phenotypic data sampled in a population at time $t$. In other words, given a dataset containing the genotype at the marker $u$ and the genotype or phenotype at the trait $x$ for the $n$ sampled individuals, we want to infer the parameters of the model: birth, death and competitive interaction rates, mutation rates, etc. Since we have only a partial information on the population ($n$ individuals are sampled and {possible extinct lineages are unobserved}), the likelihood of a model given the data has no tractable form. Given a possible genealogy of the $n$ individuals, an infinite number of continuous genealogical trees could be obtained from the model. The likelihood of each tree depends on the number and the traits of the different subpopulations (strains) during the history of the population, including the unobserved and extinct ones. Because summing over all possible unobserved data (number of unobserved and extinct lineages with their traits and adaptive jump times) is not feasible in practice, we have to make inference without likelihood computations.\\

{An alternative to likelihood-based inference methods is given by the Approximate Bayesian Computation (ABC) \citep{beaumontzhangbalding,beaumontcornuetmarinrobert}, which relies on repeated simulations of the forward-backward coalescent trees (Section \ref{sec:def-FBcoal}).} In the following, we briefly give a general presentation of the application of the ABC method to our model. We then apply the method to simulations of a toy model (the Dieckmann-Doebeli model) and to real data (genetic data on microsatellites on the Y chromosomes of human populations from Central Asia, with their social and geographic structures).

\subsection{ABC estimation of the ecological parameters based on the genealogical tree}
The dataset denoted $\mathbf{z}$ contains the genotype and/or phenotype on the trait $x$ and the marker $u$ for each of the $n$ sampled individuals. The trait $x$ can be geographic locations, species or strain identity, size, color, genotypes or anything that affect the ecological parameters and fitness. The marker $u$ can also be genotypic or phenotypic measures, discrete or continuous, qualitative or quantitative, but with no effect on fitness (the marker is supposed neutral). Our goal is to use the dataset $\mathbf{z}$ to estimate the parameters of the model denoted $\theta$ (in our case, birth and death rate, competition kernel, mutation probabilities and kernel) using an ABC approach. To do so, the following procedure is repeated a large number of times:
\begin{enumerate}
\item[$1^{st}$ step.] A parameter set $\theta_i$ is drawn in a prior distribution $\pi(d\theta)$;
\item[$2^{nd}$ step.] A PES and its neutral nested genealogies of the $n$ sampled individuals are simulated in each model associated with the parameters $\theta_i$;
\item[$3^{rd}$ step.] A set of summary statistics $S_i$ is computed from the data simulated under $\theta_i$, for each $i$.
\end{enumerate}
The posterior distribution of the model is then approximated by comparing, for each simulation $i$, the simulated summary statistics $S_i$ to the ones from the real dataset {and by computing for each parameter $\theta_i$ a weight $W_i$ that defines the approximated posterior distribution (see Formula \ref{def:posterior} in Appendix)}. Three categories of summary statistics have been used, each associated with a different aspect of the genealogical tree (the complete list of summary statistics is given in the Appendix \ref{sec:summarystatistics}):
 \begin{itemize}
 \item[-] The trait distribution describing the strains diversity and their abundances (\eg number of strains, the mean and variance of strains abundance, ...);
 \item[-] The marker distribution in the sampled population describing the neutral diversity within each sampled strain (\eg the M-index, $F_{st}$, Nei genetic distances,...);
 \item[-] The shape of the genealogy (\eg most recent common ancestor, length of external branches, number of cherries, ...).
 \end{itemize}

Depending on the dataset and the information that is available for a given population, four scenarios can be encountered:
\begin{itemize}
\item[Scenario 1.] {\bf Complete information}: The evolutionary history of the trait and the genealogies, populations and subpopulations abundances, values of the sampled individuals on the trait $x$ and the marker $u$. This situation certainly never occurs but it is a reference which allows to evaluate the expected ABC estimation in a perfect situation where all information is available. This situation can also include cases where independent information can be added such as fossil records;
\item[Scenario 2.] {\bf Population information}: Total population abundance, values of the trait $x$ and marker $u$ of the sampled individuals. The estimations given with those statistics represent the estimations one could expect with a complete knowledge of the present population;
\item[Scenario 3.] {\bf Sample information}: The number of sampled sub-populations, the values of the trait $x$ and the marker $u$ of the sampled individuals;
\item[Scenario 4.] {\bf Partial sample information}: Only the number of sampled sub-populations and the values of the marker $u$ of the sampled individuals.
\end{itemize}
The four situations will be compared regarding the quality of the ABC estimations of the model parameters.

\subsection{Application 1: Inference of the parameters in the Dieckmann-Doebeli model}\label{sec:Dieckmann-Doebeli}

In this section, we applied the ABC statistical procedure on the traits distribution and their phylogenies generated by a simple eco-evolutionary model \citep{roughgarden, dieckmanndoebeli, champagnatferrieremeleard}. The birth rate of an individual with trait $x$ is $b(x)=\exp(-x^2/2\sigma_b^2)$, the individual natural death rate is constant $d(x)=d_C$, and the competition between two individuals with traits $x$ and $y$ is $C(x,y)=\eta_c\,\exp(-(x-y)^2/2\sigma^2_c)$, $\sigma_c>0$ . The trait space is chosen to be $\X=[-1,1]$. The effect of a mutation on the trait $x$ is randomly drawn in a Gaussian mutation kernel with mean 0 and variance $\sigma_m^2$ (values outside $\X$ are excluded). The probability of mutation is $p$. The markers are assumed to be a vector of 10 microsatellites, each of them mutating with the same rate $q$. When a microsatellite mutates, we increase or decrease its value by 1 with equal probability.\\
The distribution of the phylogenies depends on the parameter $\theta=(p,q ,\sigma_b,\sigma_c,\sigma_m,d_c, \eta_c,t_{\sc{sim}})$, where $t_{\sc{sim}}$ is the duration of the PES ($t_{\sc{sim}}$ is not known \textit{a priori} and must be considered as a nuisance parameter).

\subsubsection{Posterior distribution and parameters estimation}
We ran $N=400\ 000$ simulations with identical prior distributions and scaling parameter $K=1000$ (see details in App. \ref{sec:ABC-recap}). Chosen parameter sets and prior distributions are given in Appendix \ref{sec:simulations}. We randomly chose four simulations runs among the $N$ simulations as \textit{pseudo-data} sets (these sets are named $A$, $B$, $C$ and $D$, see App. Table \ref{Table:data_parameters} and Fig. \ref{Fig:PES_line3}). All other simulations runs were used for the parameters estimates. Fig \ref{Fig:PriorPost_line3} shows the posterior distribution for one of the the pseudo-dataset (see App. \ref{sec:posterior-distrib} for full results). Our results show that estimates based on all statistics (Scenario 1, blue distribution) are not always the most accurate, suggesting that some of the descriptive statistics introduce noise and worsen estimate accuracy. However, the descriptive statistics providing knowledge about how population is trait-structured do not belong to this group and importantly improves estimation when available (compare orange \textit{vs.} red posterior distributions).

The impact of the number of microsatellites on the quality of the estimation is tested for the first pseudo-dataset $A$ (see App. Tab. \ref{Table:data_parameters}) with the number of microsatellites varying from 10 to 100. A sensitivity analysis is shown in App. Fig. \ref{Fig:nb_microsat_all}: the results are quite robust to this number. For some parameters such as $t_{sim}$, better precision is achieved with increased number of microsatellites, and for other parameters such as $q$ or $p$, the impact of the number of microsatellites is more visible under Scenario 4 when we should rely a lot on the information brought by the microsatellites.

\begin{figure}[ht!]
\begin{center}
\begin{tabular}{ccc}
A & \includegraphics[width=7.5cm]{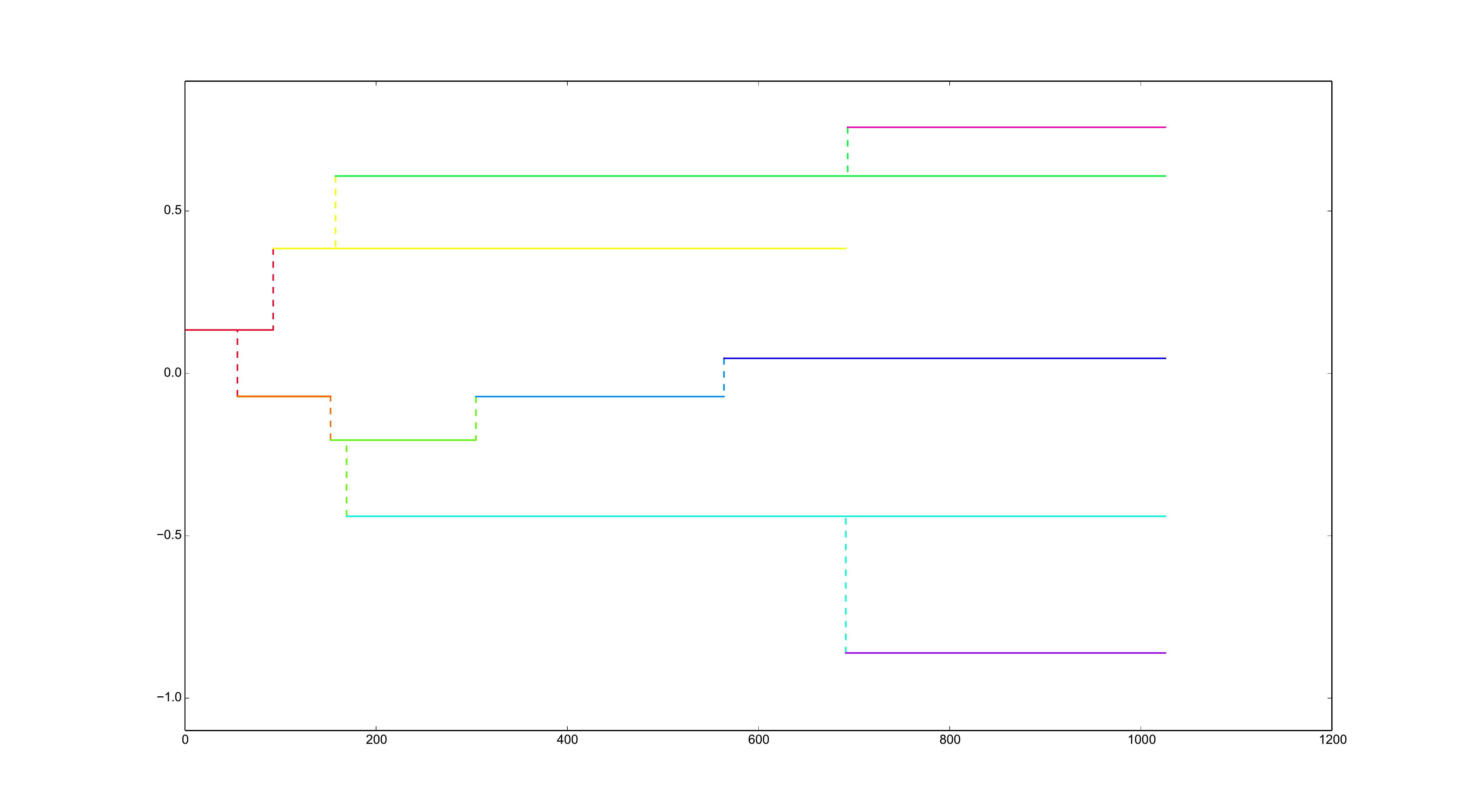} & \includegraphics[width=7.5cm]{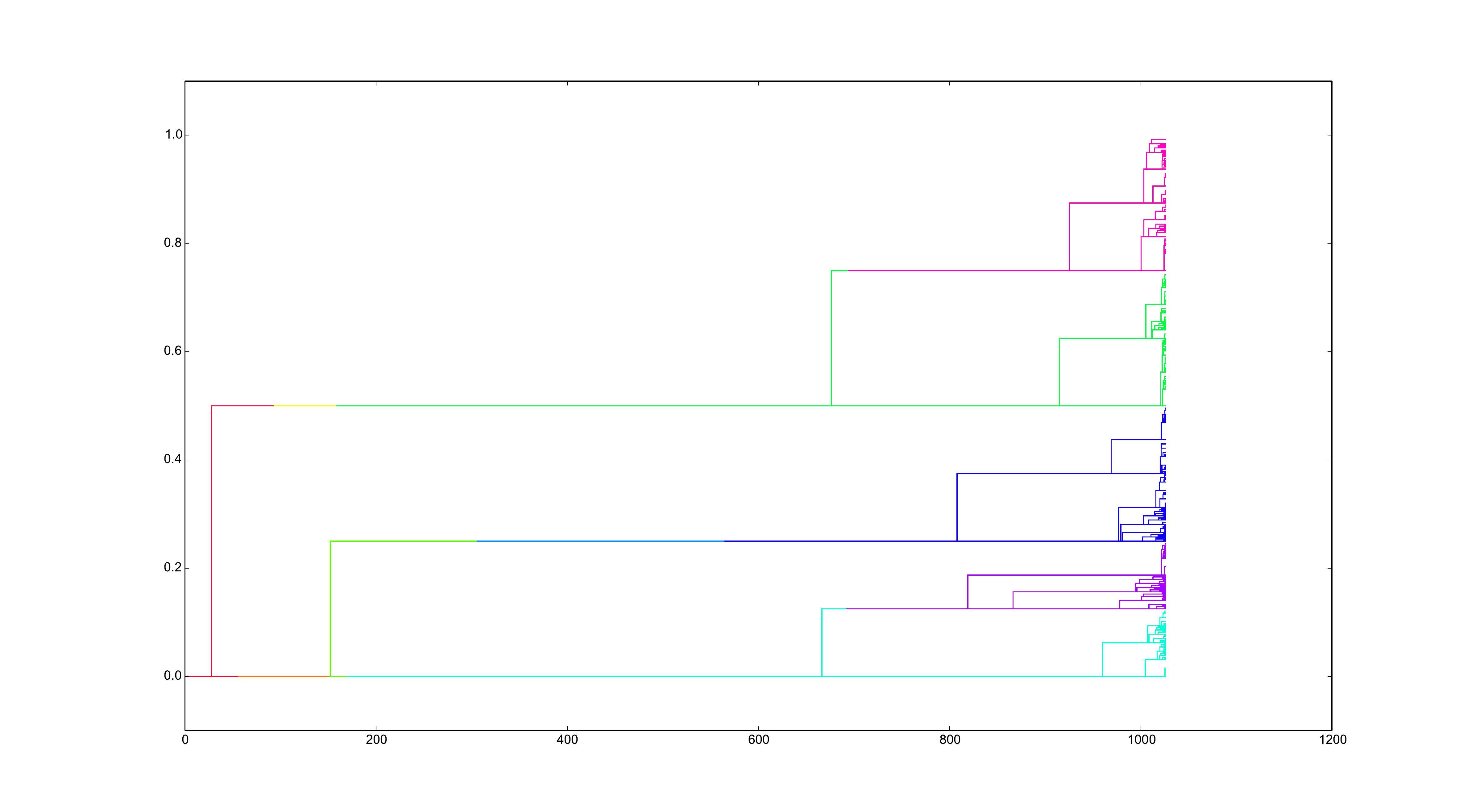}\\
B & \includegraphics[width=7.5cm]{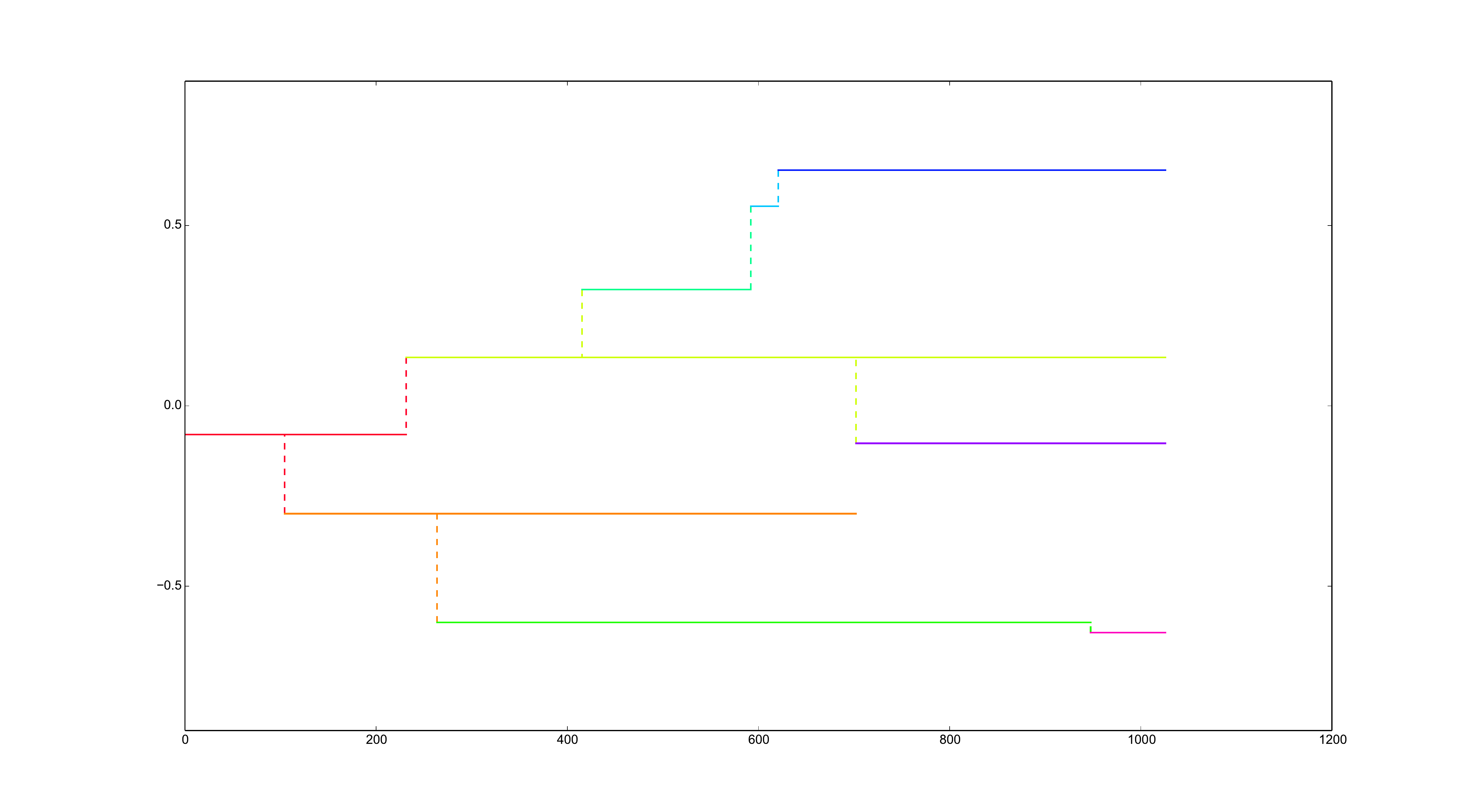} & \includegraphics[width=7.5cm]{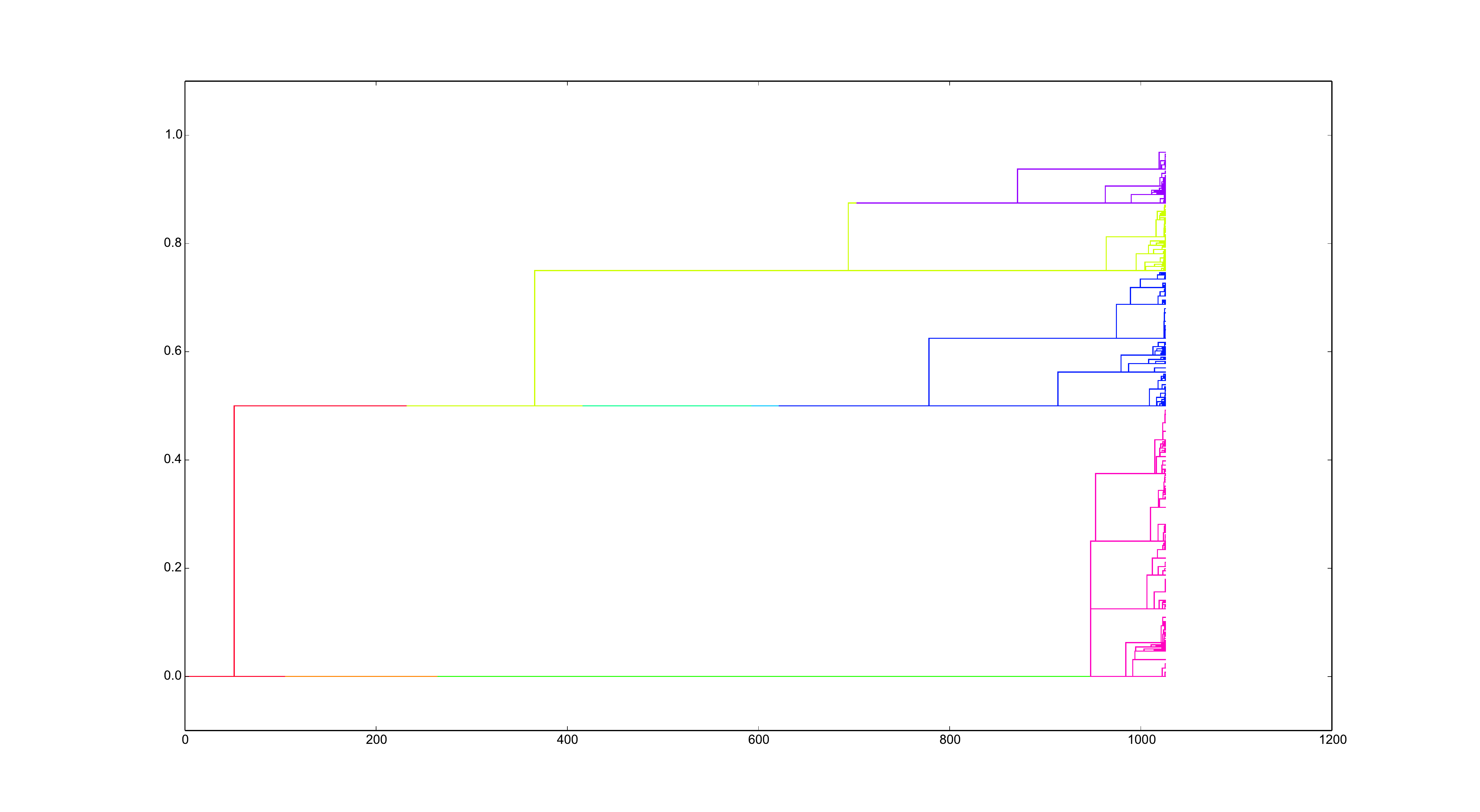}\\
C & \includegraphics[width=7.5cm]{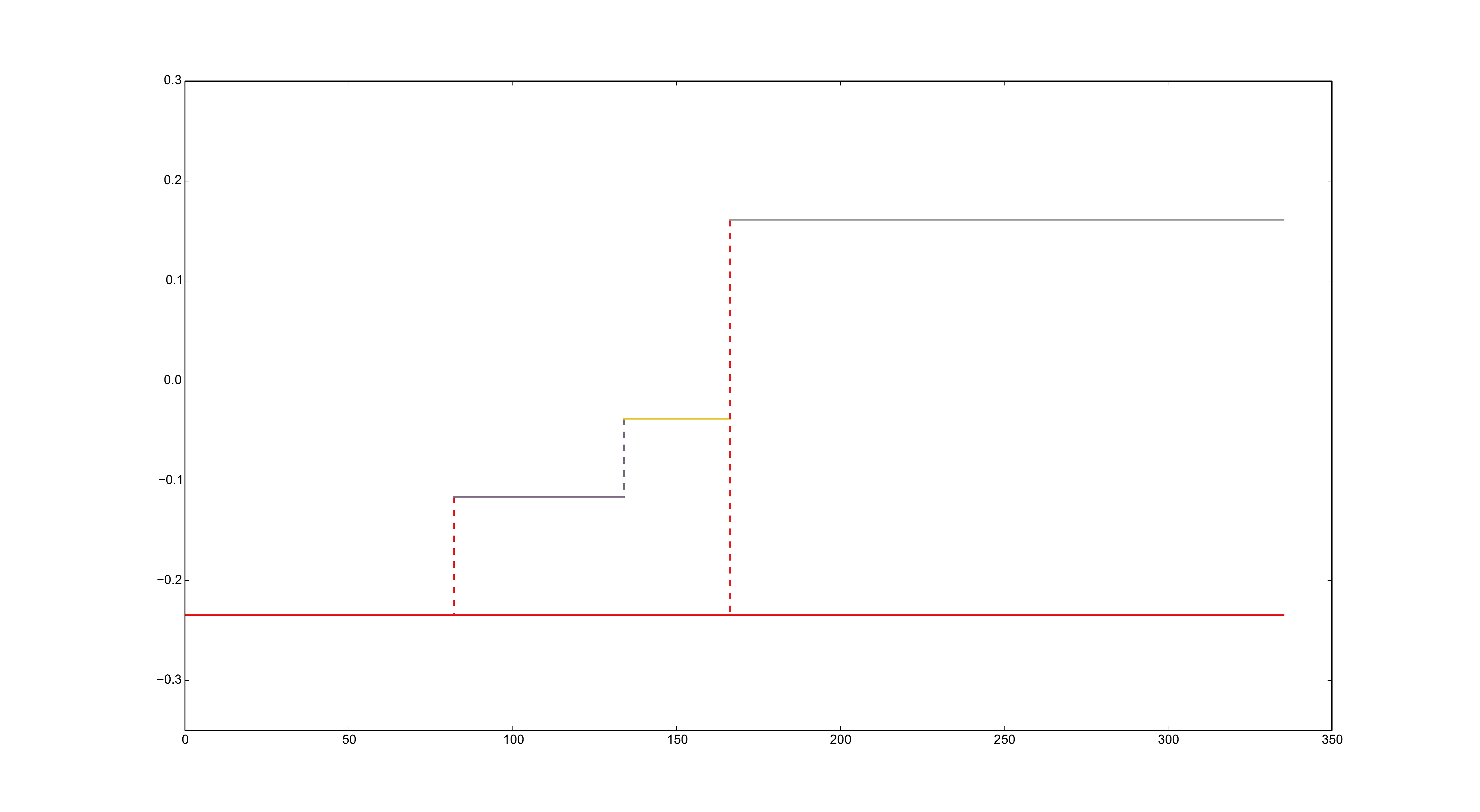} & \includegraphics[width=7.5cm]{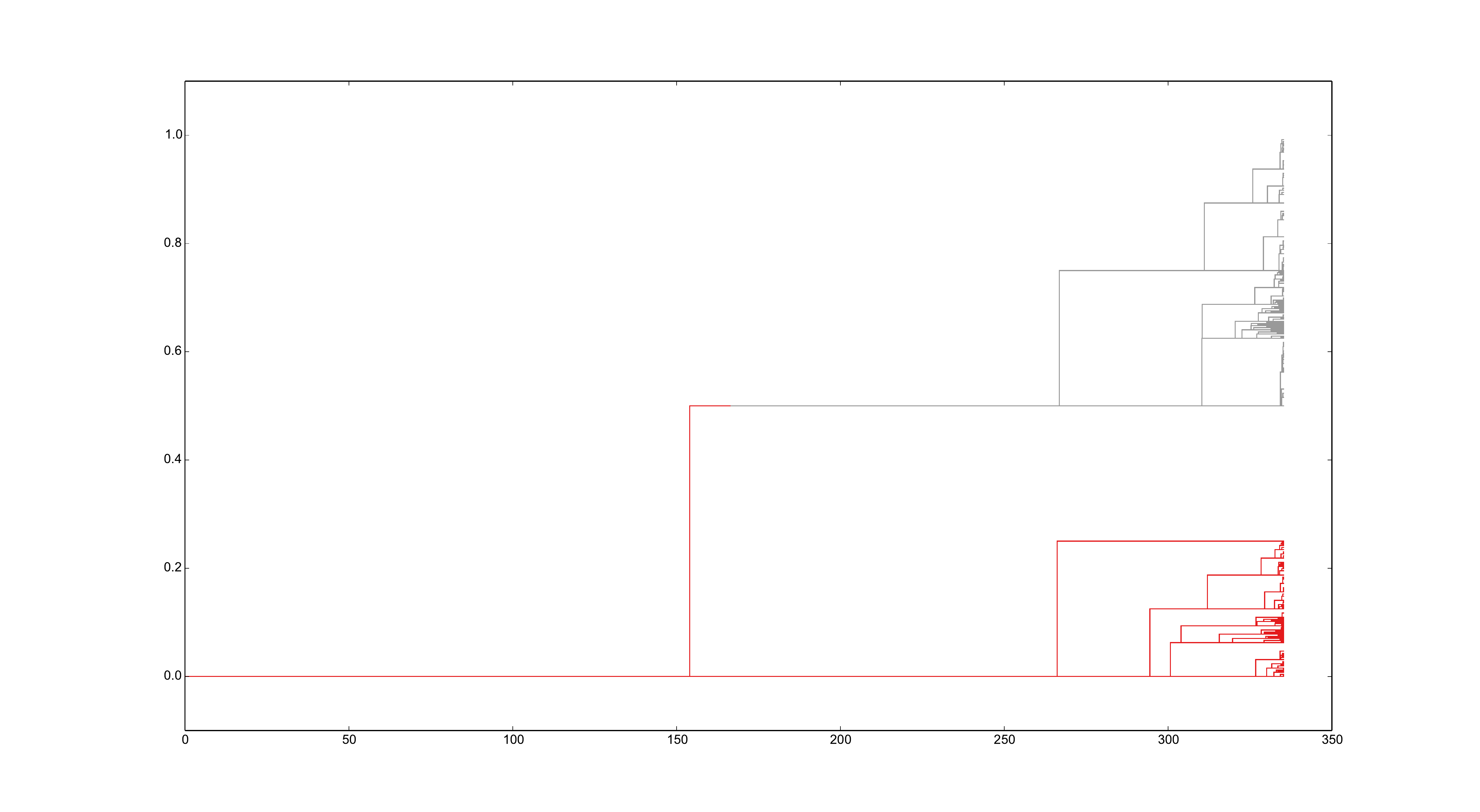}\\
D & \includegraphics[width=7.5cm]{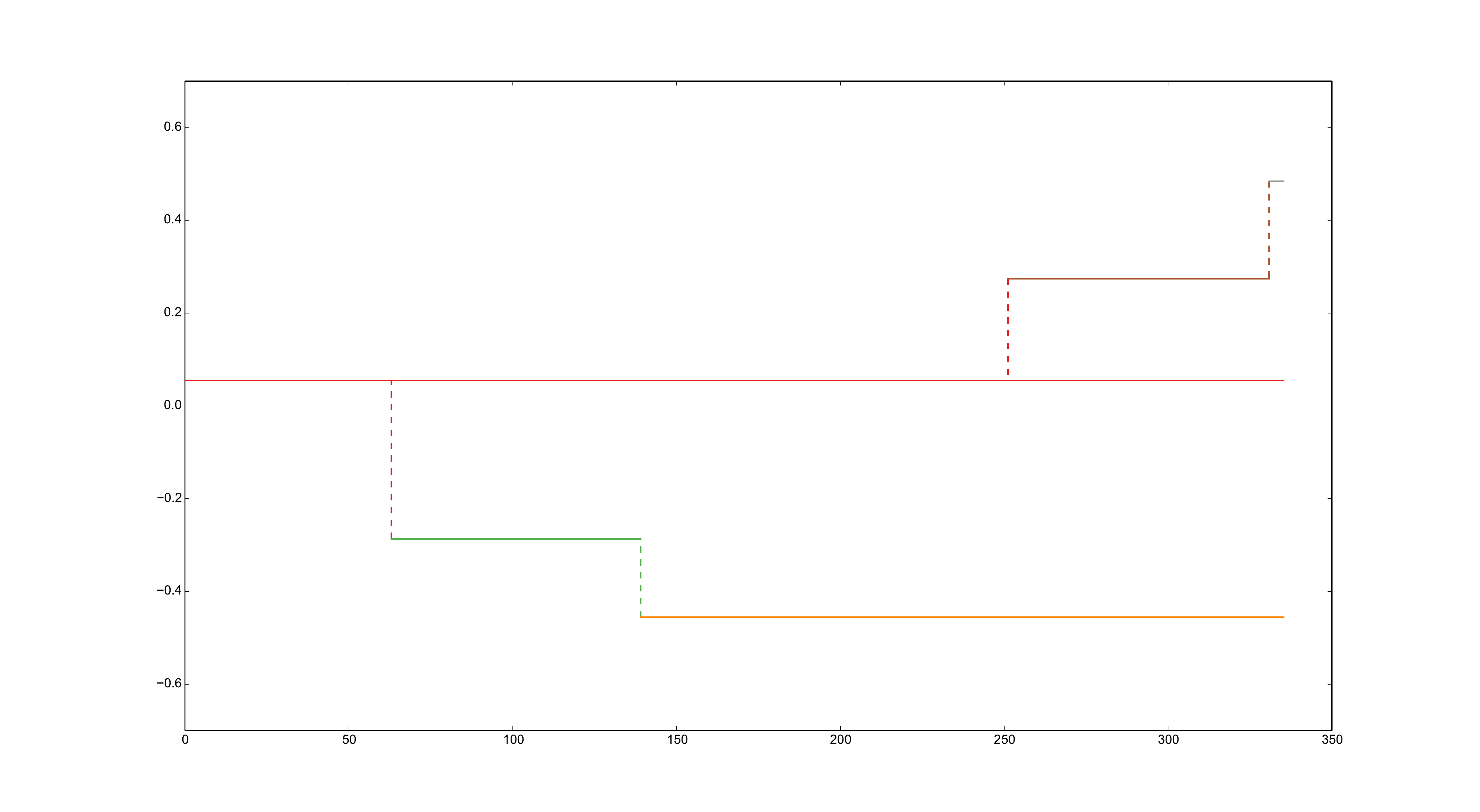} & \includegraphics[width=7.5cm]{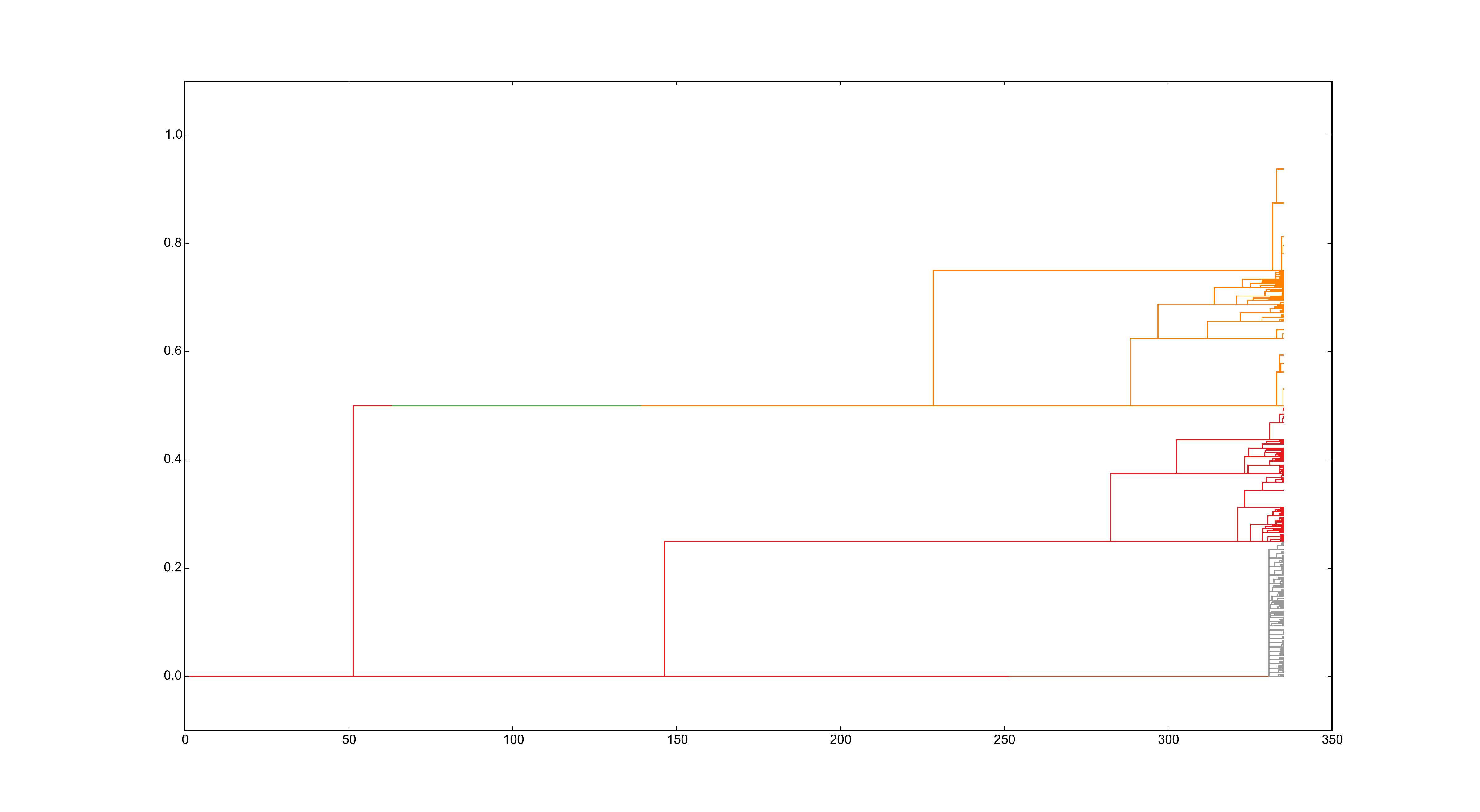}\\
 & (a) & (b)
\end{tabular}
\caption{{\footnotesize \textit{Dynamics of (a) the trait $x$ and (b) the neutral marker $u$ of the four pseudo-data sets $A$, $B$, $C$ and $D$ randomly sampled among $N=400,000$ simulations runs of the Doebeli-Dieckmann's model (Parameter sets are given in App. Table \ref{Table:data_parameters}). Figures show the Substitution Fleming-Viot Process (SFVP) and the nested phylogenetic tree of $n$ individuals sampled at the final time of the simulation. (a): The trait $x$ follows a Polymorphic Evolutionary Substitution (PES) process introduced in \cite{champagnatmeleard}. (b): The genealogies of the marker $u$ follow a forward-backward coalescent process nested in the PES tree as described in Section \ref{sec:def-FBcoal}. The colors refer to the lineage to which one individual belong shown in (a).}}}\label{Fig:PES_line3}
\end{center}\end{figure}

\clearpage
\begin{figure}[ht!]
\begin{center}
\includegraphics[width=14cm]{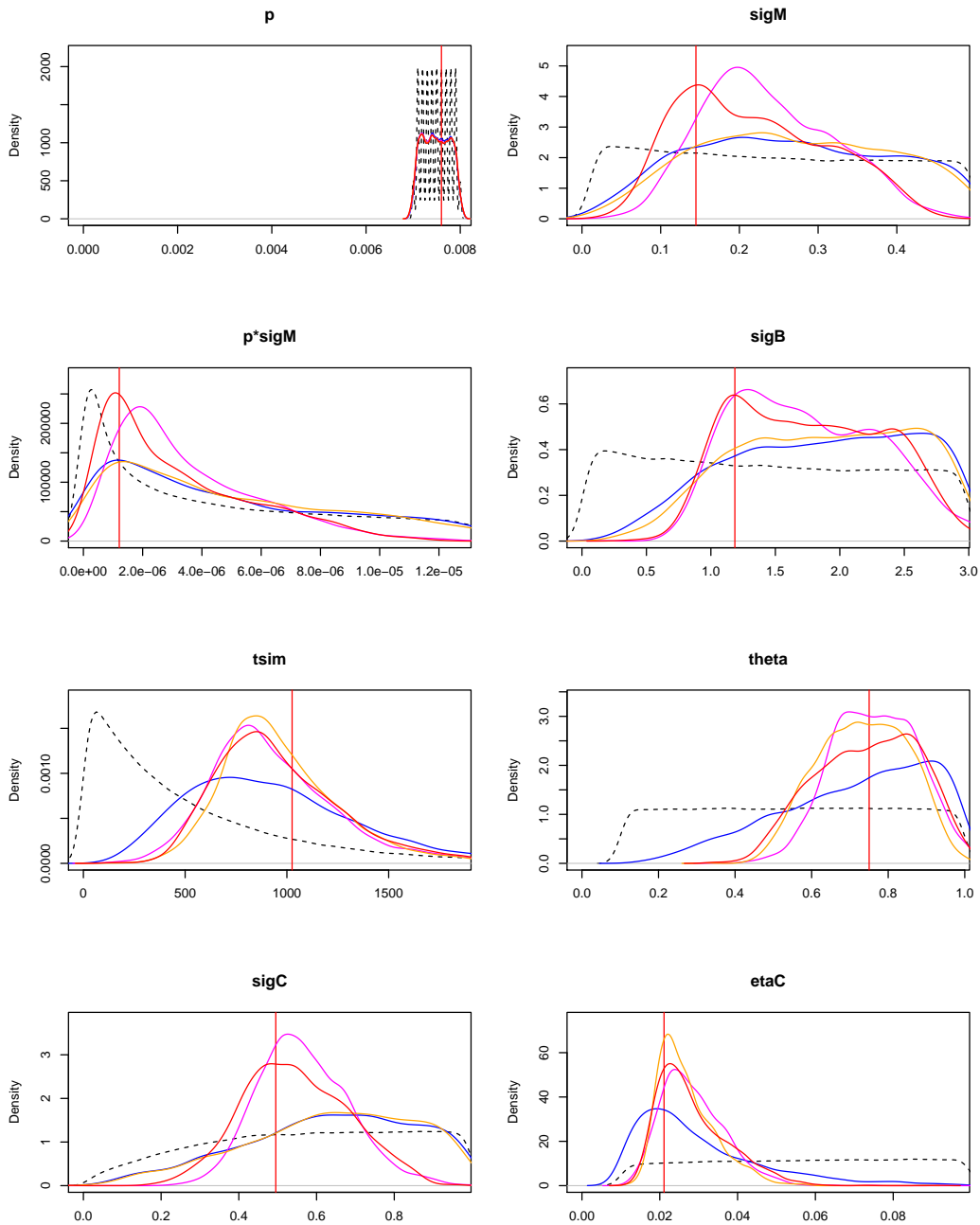}
\caption{\textit{\footnotesize Prior and posterior distributions (pseudo-data set $A$ in Fig. \ref{Fig:PES_line3}). Black dashed curve: prior distribution; Vertical red line: true value. The different colors correspond to different scenario regarding which data are available or not: Blue, Scenario 1 (All descriptive statistics are available); Pink, Scenario 2 (data from the totality of the population); Red, Scenario 3 (data from a sample of the population); Orange, Scenario 4 (data from a sample of the population, the traits $x$ is not known). Results for other pseudo-data sets are given in App.\ref{sec:posterior-distrib}}}\label{Fig:PriorPost_line3}
\end{center}
\end{figure}

\clearpage

\subsubsection{Discrepancy with Kingman's coalescent}\label{sec:docoalescenttreessignificantlydifferfromKingman}

After a correct renormalization, Kingman's coalescent are generally considered as a good approximation of coalescent trees, even in structured populations. However, in our model, the population structure itself can evolve, demographic rates can vary with time, and subpopulations can interact with each other, which might strongly affect the topology of the coalescent trees and its branches length. In this section, our aim is to evaluate to what extent the Kingman's coalescent is a good approximation or not of the genealogies generated by the Doebeli-Dieckmann's model. In case of a significant discrepancy, we further determined the properties of the trees which show important differences between both models, and then we identified and evaluated the type and extent of errors that one would expect when using Kingman's coalescents for inference without taking into account the evolution of population structure.\\

%


We considered statistics commonly used to test the neutrality of the phylogenies of $n$ sampled individuals \citep{fuli}: the number of cherries $C_n$, i.e. the number of internal nodes of the tree having two leaves as descendants, the length of external branches $L_n$, i.e. edges of the phylogenetic tree admitting one of the $n$ leaves as extremity, and the time $\TnMRCA$ to the most recent common ancestor (MRCA). The distributions of the normalized $C_n$ and $L_n$ and the distribution of $\TnMRCA$ for the forward-backward Doebeli-Dieckmann's coalescent and the Kingman's coalescent are compared. {For Kingman's coalescent, asymptotic normality has been established for $C_n$ and $L_n$ \cite[see][]{blumfrancois2005-Sackin,jansonkersting}.} The distribution of $\TnMRCA$ for the Kingman coalescent is computed by using the fact that the trees are binary with exponential durations between each coalescence. Neutrality tests conditionally on the number of lineages $m$ at the time of sampling are performed using the behavior of these statistics under the null assumption $H_0$ that the phylogenies correspond to a Kingman's coalescent. For each $m$, we chose as pseudo-data one of the simulations of our model with $m$ species at the final time, and we performed normality tests for $C_n$ and $L_n$, and an adequation test for the expected distribution under Kingman for $\TnMRCA$. This was repeated 100 times for each value of $m\in \{1,\dots 10\}$ (details given in App. \ref{sec:neutrality-test}).         \\

Fig. \ref{Fig:neutral-test} shows the distributions of the \textit{a posteriori} p-values for the normality tests for $L_n$ and $C_n$. The coalescent trees significantly differ from Kingman's coalescent trees regarding the external branch length $L_n$ (Fig. \ref{Fig:neutral-test}(a)), while the number of cherries $C_n$  is not always  significantly different (the p-values have a median close to 0.05, Fig. \ref{Fig:neutral-test}(b)). Finally, Fig. \ref{Fig:neutral-test}(c) shows the distribution of the time to the MRCA depending on the number of lineages $m$. A mean comparison test shows that the mean of the $\TnMRCA$s obtained from the simulations of our forward-backward coalescent significantly differs from the expected MRCA time under a Kingman's coalescent (see App. \eqref{test:comparaison-moyennes}). Hence, our results show that coalescent tree topologies generated under a Doebelli-Dieckmann's model are expected to be significantly different from a Kingman's coalescent. \\

\begin{figure}[ht!]
\begin{center}
\begin{tabular}{ccc}
\includegraphics[width=5cm,height=5cm, trim= 0 1cm 0cm 2cm, clip=true]{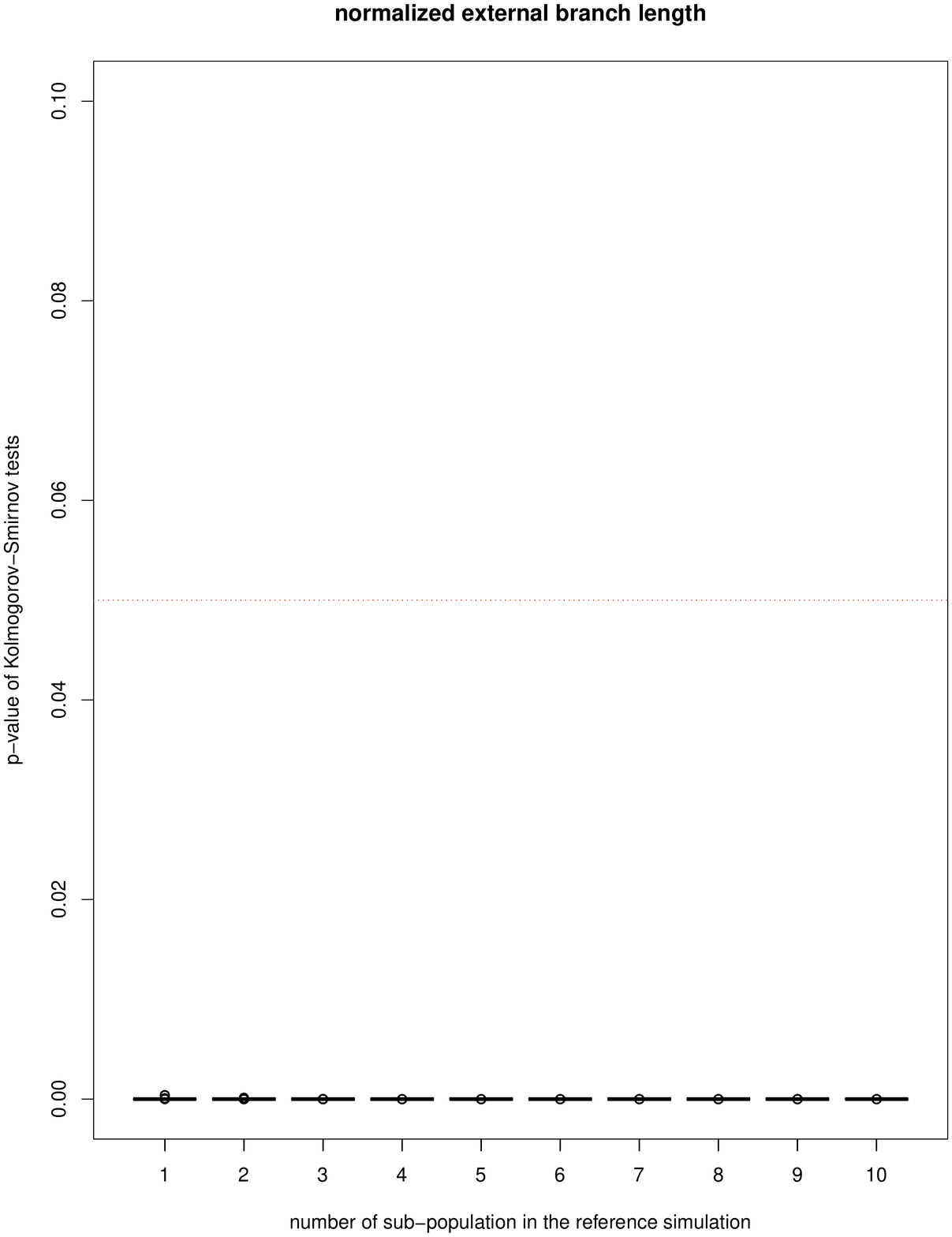}
&
\includegraphics[width=5cm,height=5cm, trim= 0 1cm 0cm 2cm, clip=true]{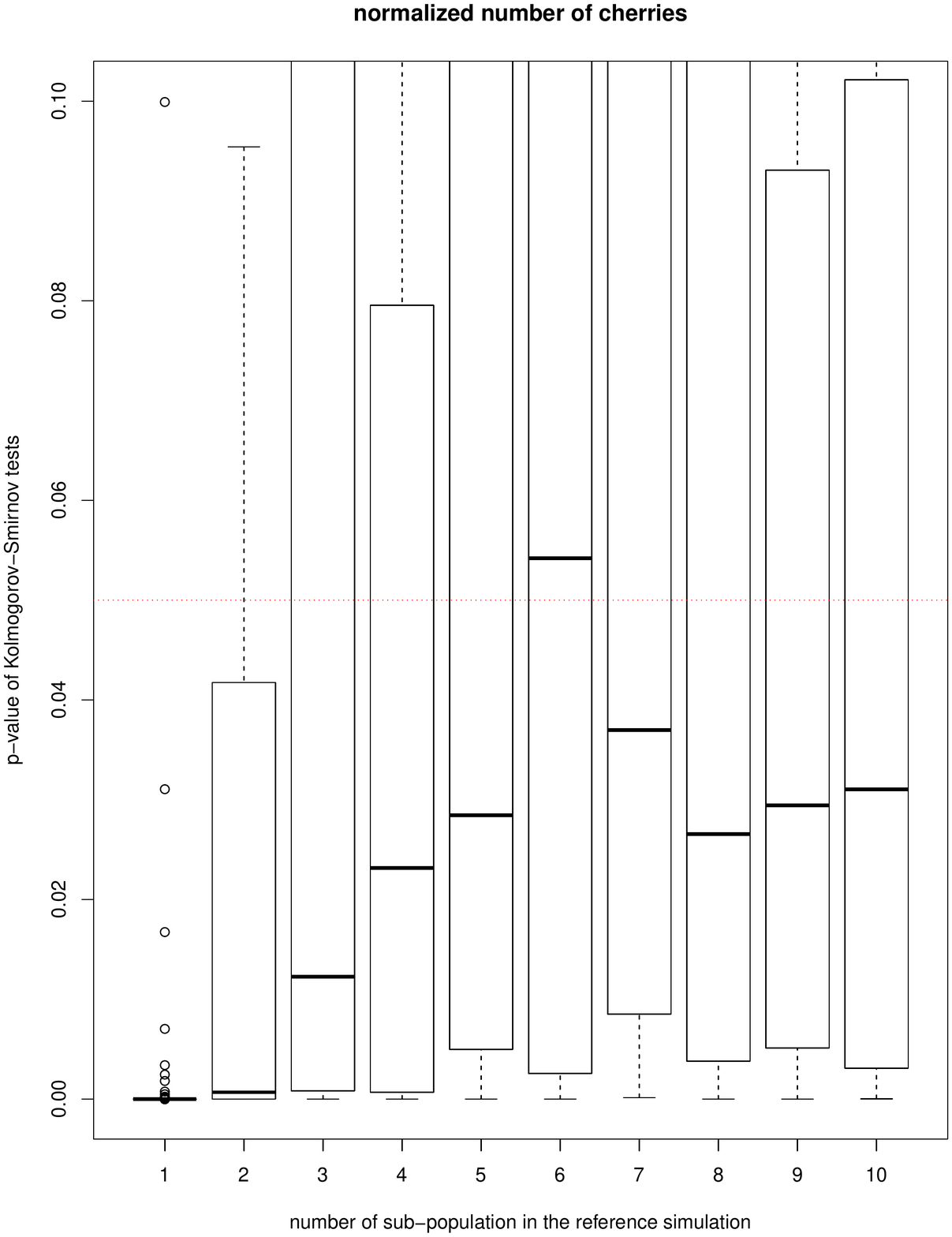}
&
\includegraphics[width=5cm,height=5cm, trim= 0 1cm 0cm 2cm, clip=true]{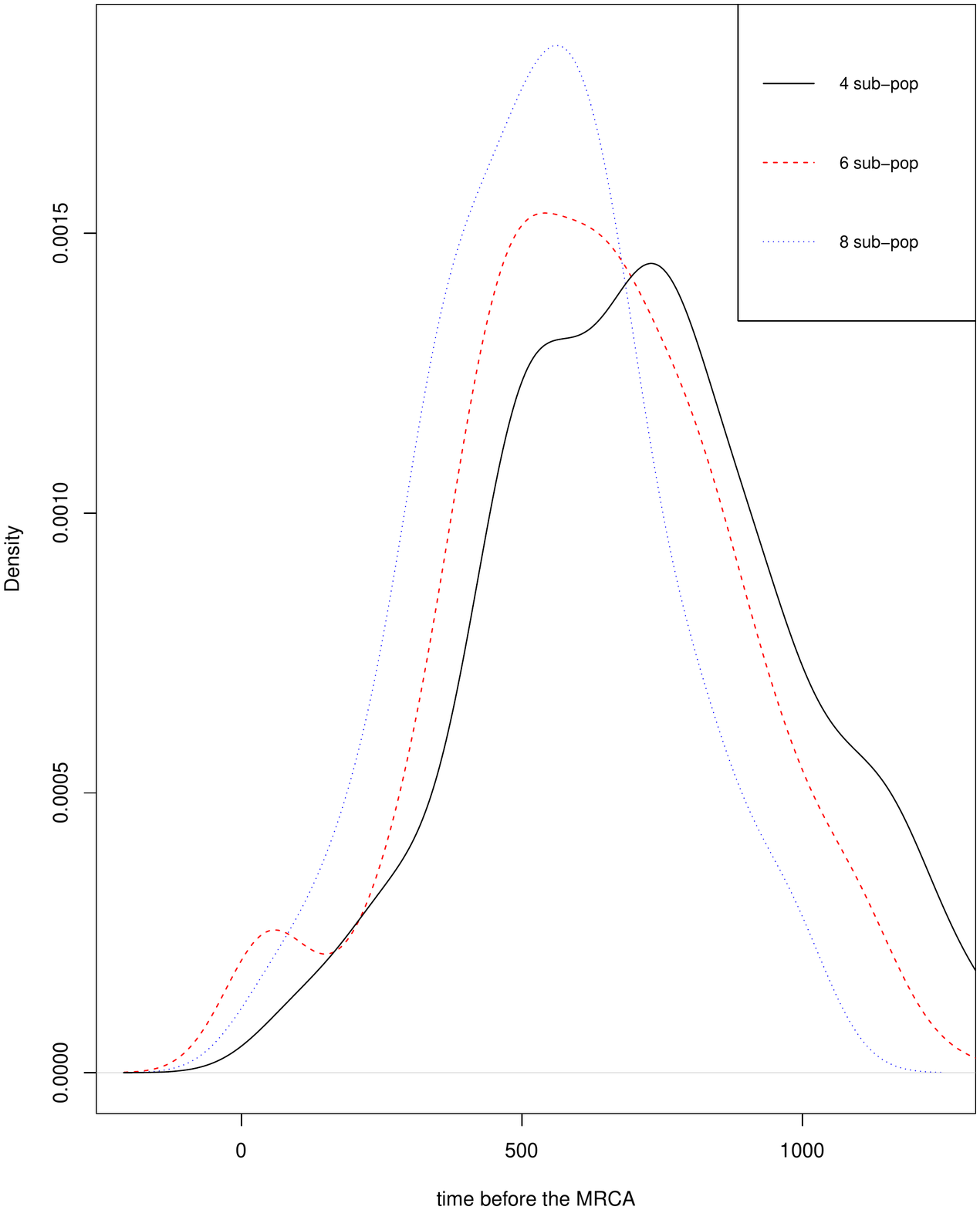} \\
(a) & (b) & (c)
\end{tabular}
\caption{\textit{{\small
(a): External branch length $L_n$: Box-plot of the p-values of the Kolmogorov-Smirnov test, for each value of the number of lineages $m$
at sampling time (in abscissa).
(b): Number of cherries $C_n$: Box-plot of the p-values of the Kolmogorov-Smirnov test, as a function of $m$.
For (a) and (b), $100$ ABC analysis were done for each value of $m$ and we tested if the distribution of the normalized external branch length follows a Gaussian distribution ($H_0$). The threshold value of rejection of $H_0$, $0.05$, is represented by the dashed red line. If the p-values are lower than this threshold, the distribution of the statistics ($L_n$ or $C_n$) of the forward-backward coalescent trees generated by a Doebeli-Dieckmann model is significantly different than the one under a Kingman's coalescent. (c): Compared distributions of the age of the MRCA for the forward-backward coalescent (plain line) and for the Kingman's coalescent (dotted line).}}}\label{Fig:neutral-test}
\end{center}
\end{figure}

Fig. \ref{Fig:Branch-cherries-lines3-6} shows further comparison between Kingman's coalescent and the trees under our model. The distribution of external branch lengths under our model follows an asymmetrical leptokurtic distribution and it tends to be much shorter than under a Kingman's coalescent. The time to the MRCA is also much longer under our model than the Kingman's coalescent. The distribution of the number of cherries follows a symmetrical bell-shaped distribution flattened around the mode.
\begin{figure}[ht!]
\begin{center}
\begin{tabular}{ccc}
\includegraphics[width=6cm, trim= 0 0cm 0cm 2cm, clip=true]{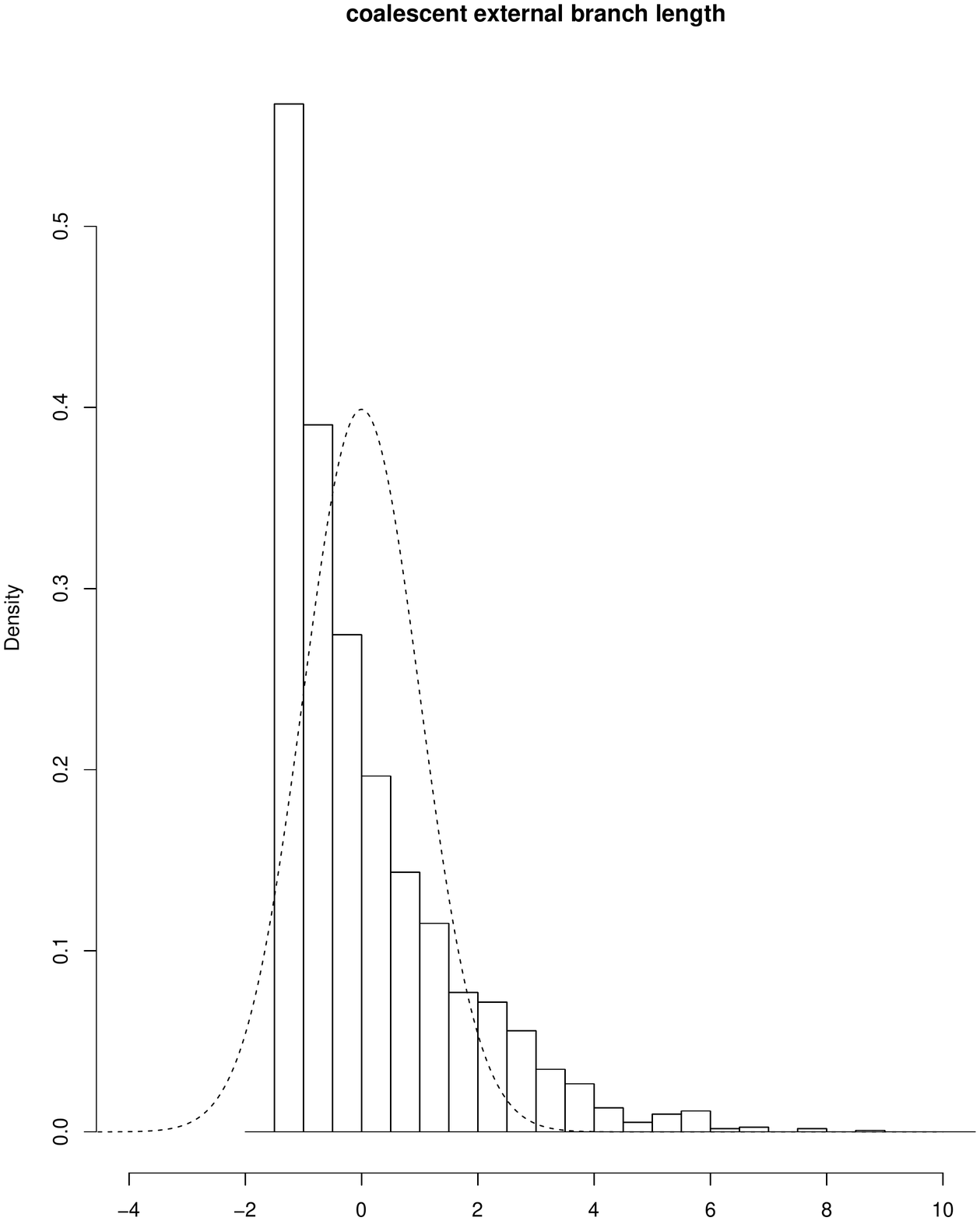}
&\includegraphics[width=6cm, trim= 0 0cm 0cm 2cm, clip=true]{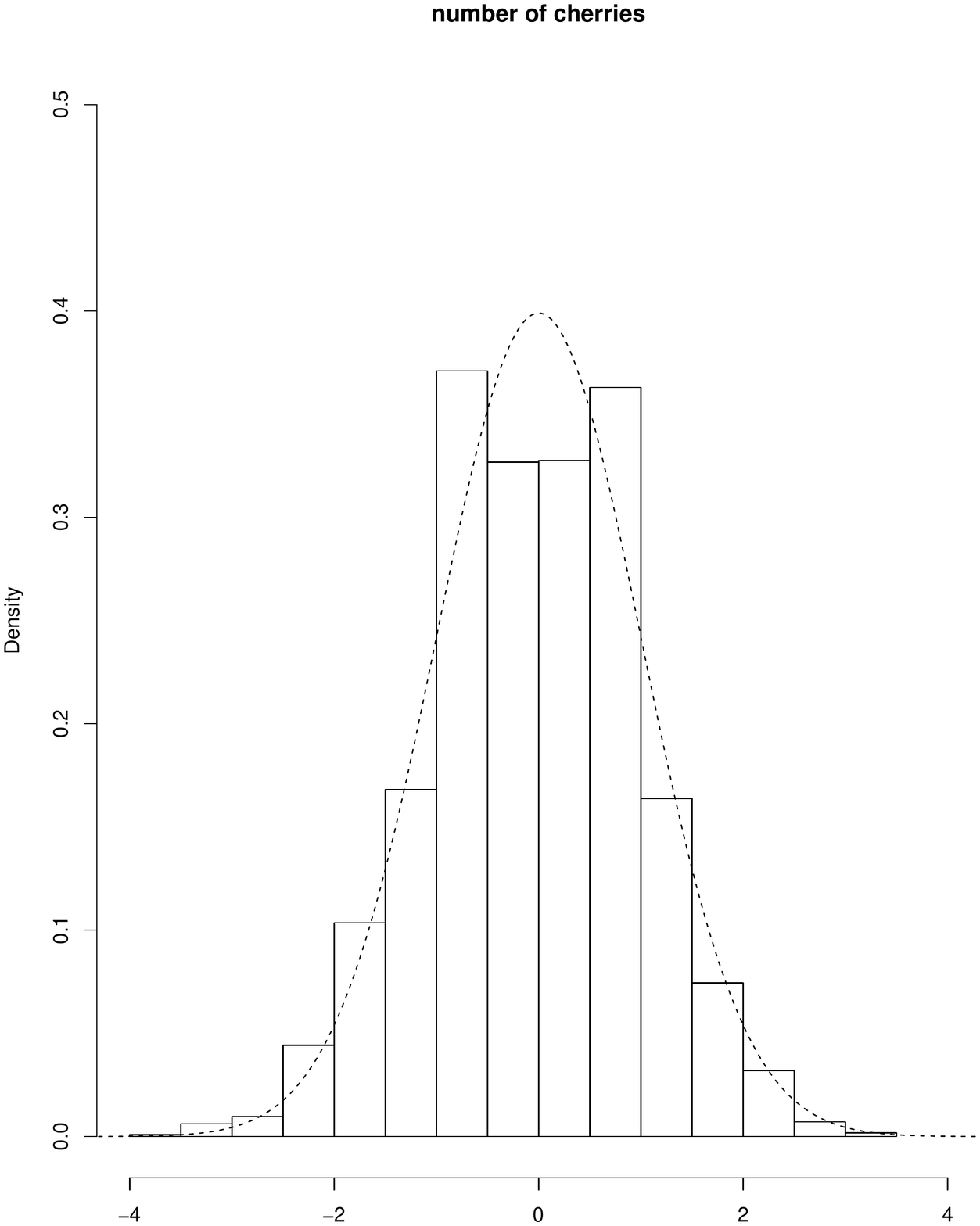}
& \includegraphics[width=6cm, trim= 0 0cm 0cm 2cm, clip=true]{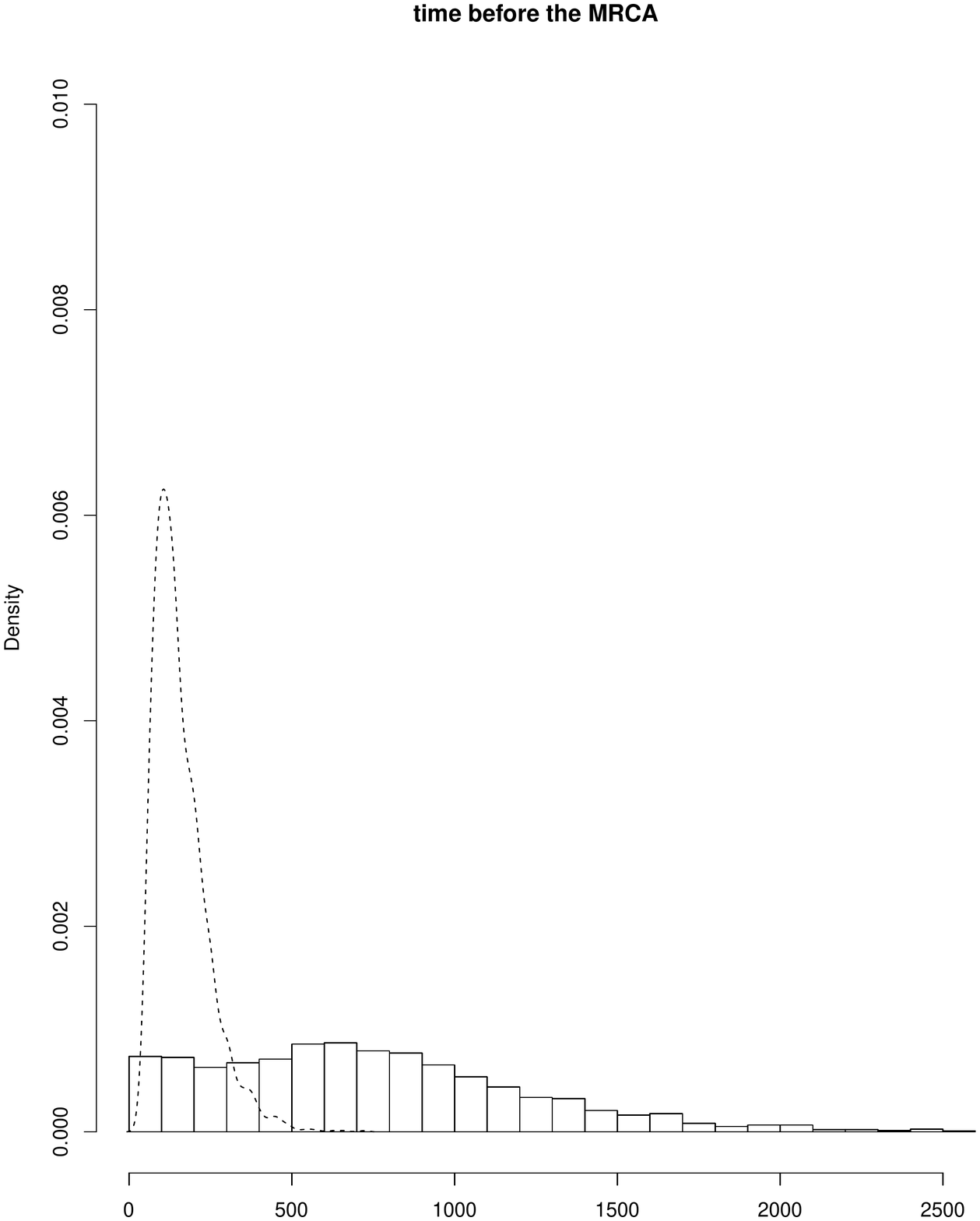}
\\
(a) & (b) & (c)
\end{tabular}
\caption{\textit{{\small Histograms of (a) the renormalized external branch lengths, (b) the renormalized number of cherries, (c) the time to the MRCA. The simulations are shown for $p=0.0076$, $q=0.7503$, $\sigma_b=1.186$, $\sigma_c=0.4951$, $\sigma_m=0.1448$, $\eta_c=0.0211$ and $t_{sim}=1025.619$ (set of parameter A in Table \ref{Table:data_parameters}. Results for three other `reference' sets are given in App. \ref{sec:neutrality-test}). The dashed line represents the distribution followed by a Kingman's coalescent ( Gaussian distribution for (a) and (b), simulations for (c)). }}}\label{Fig:Branch-cherries-lines3-6}
\end{center}
\end{figure}


Overall, we found that the coalescent trees generated by a Doebeli-Dieckmann model significantly differ from a Kingman's coalescent. In particular, we found that using a Kingman's coalescent model and ignoring the trait structure of a population tend to overestimate the recent coalescent times. The genealogies generated by the forward-backward coalescent under a Doebeli-Dieckmann's model are expected to differ from a standard or renormalized Kingman's coalescent for various reasons: i) there are multiple instantaneous coalescence events when a new lineage appears; ii) coalescence rates differ among lineages, creating asymmetries in the phylogenetic tree (trees can therefore be imbalanced); iii) coalescence rates vary in time since they depend on the structure of the population and the traits present at a given time; and iv) eco-evolutionary feedbacks and competitive interactions between lineages affect coalescent rates in the whole population.  \\

\subsection{Application 2: correlations between genetic and social structures in Central Asia}\label{sec:chaix}

In Anthropology, a common question is whether or not socio-cultural changes can affect demographic parameters, such as fertility rates. For instance, it is hypothesized that agriculturalists have a higher fertility than foragers  \cite[\eg][]{sellenandmace}, which is supported by several studies \cite[\eg][]{bentleyetal, rossetal}. In this section, we analyze genetic data in order to test whether populations with two different lifestyles and social organizations show different fertility rates. Nineteen human populations from Central Asia have been sampled in previous studies (Fig. \ref{Fig:carte}(a), \cite{chaixetal, heyeretal}). Two types of socio-cultural organizations are encountered: Indo-iranian populations are patrilineal, \text{i.e.} mostly pastoral and organized into descent groups (tribes, clans...); Turkic populations are cognatic, \text{i.e.} mostly sedentary farmers organized in nuclear families. 631 individuals have been sampled (310 from a cognatic population, 321 from a patrilineal one). Ten microsatellite loci have been genotyped on the Y-chromosome. Since there is no recombination on the sexual chromosomes in humans, it is appropriate to use our model which assumes clonal reproduction. Hence, we will perform ABC analysis on the genetic diversity following the paternal lineages.

\begin{figure}[!ht]
\begin{center}
\begin{tabular}{cc}
\includegraphics[width=8cm,height=6cm, trim=0 1cm 0 0, clip=true]{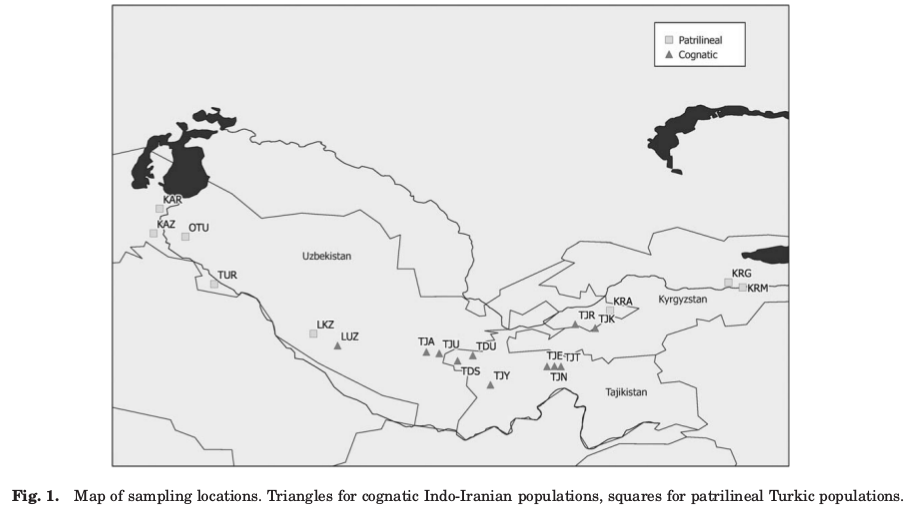}
 &
 \includegraphics[width=8cm,height=6cm]{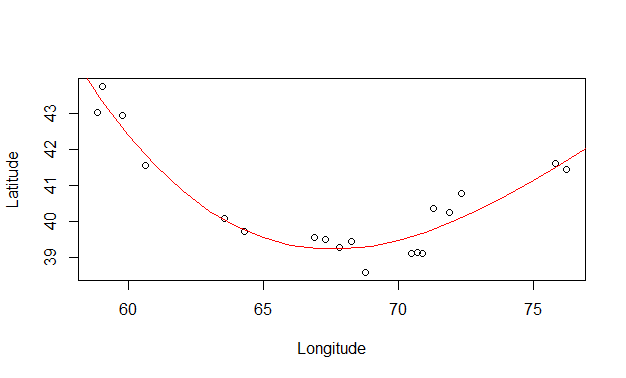}\\
 (a) & (b)
\end{tabular}
\caption{\textit{{\small
(a): Map of sampling locations from \cite{heyeretal}. Triangles correspond to cognatic Indo-Iranian populations, quares to patrilineal Turkic populations. (b): Regression of the data to a 1-dimensional problem.}}}\label{Fig:carte}
\end{center}
\end{figure}

We considered that the trait $x$ in the model is a vector containing the geographic location of the population and the social organization (cognatic or patrilineal). For geographical positions, given the Fig. \ref{Fig:carte}(a), we consider that geographic location is 1-dimensional: we can fit a polynomial curve through the geographical positions of the tribes:
\[P(x)=673.4-25.13 \ x+0.327\ x^2-1.39\ 10^{-3}\ x^3 \ (R^2= 0.92).\] Hence the location of each population is given by the coordinates $(x,P(x))$ (Fig. \ref{Fig:carte}(b)). The distance between populations is computed thanks to the line integral along the interpolated curve (see details in App. \ref{append:chaix-simul}). The neutral marker $u$ is a vector containing the genotype at the ten microsatellites. Here we assume that the neutral marker is fully linked with the trait corresponding to the social organization.

Our aim is to use our ABC procedure on the genetic data to estimate the parameters $\theta=(p_{\mbox{{\scriptsize xb01}}},b_0,b_1,p_{\mbox{{\scriptsize loc}}},q, \sigma_{\mbox{{\scriptsize loc}}},\eta_0,\eta_1,\sigma_c,t_{\sc{sim}})$ of our model. The individual birth rates is assumed to depend on social organization only and not on geographic location: $b_0$ for the patrilineal populations and $b_1$ for the cognatic ones. Death rates are supposed to be due to density-dependent competition for the sake of simplicity: the competitive effect of an individual located at coordinate $y$ on an individual in a patrilineal (resp. cognatic) population at location $y'$ is supposed $C(y,y')=\eta_0 \exp\big(-(y-y')^2/2\sigma_c^2\big)$ (resp. $C(y,y')=\eta_1 \exp\big(-(y-y')^2/2\sigma_c^2\big)$). The individual death rate at location $y$ is given by the sum of the competitive effects of all individuals. We supposed that individual can found new population after dispersal (corresponding to a mutation on the trait $x$ at birth), with probability $p_{\mbox{{\scriptsize loc}}}$, and/or change of social organization, with probability $p_{\mbox{{\scriptsize xb01}}}$. The location of the new population is randomly drawn in a centered Gaussian with standard deviation $\sigma_{\mbox{{\scriptsize loc}}}$. Following anthropological data, we assumed that social organization changes are unidirectional only from patrilineal pastoral to cognatic farmers populations \citep{chaixetal}. $t_{\sc{sim}}$ and $q$ respectively are the duration of the coalescent and the marker mutation probability.\\

Estimating the parameter $\theta$ and using the ABC procedure to select between alternative models will allow us to test whether the null hypothesis
\begin{equation}
H_0\ : \ b_0=b_1\label{test:H_0}
\end{equation} is acceptable, compared to the alternative hypothesis $H_a\ :\ b_0<b_1$ \cite[see \eg][]{grelaudrobertmarinrodolphetaly,pranglefearnheadcoxbiggsfrench,stoehrpudlocucala}. We generated a set of data with the \textit{a priori} probability $1/2$ of having $b_0=b_1$ and the \textit{a priori} probability $1/2$ of having $b_0<b_1$ (see details in App. \ref{append:chaix-simul}). {To do this, we generated 10,000 datasets with $b_0=b_1$ and 10,000 datasets with $b_0<b_1$. The ABC estimation provides weights $W_i$ for each of these 20,000 simulations (see Eq. \ref{def:posterior}) yielding the posterior distribution of the parameters (see  Fig. \ref{fig:chaix-posterior}). These weights $W_i$ also allow to compute the posterior probabilities of each hypothesis: $H_0\:\ \{b_0=b_1\}$ or $H_a\ :\ \{b_0<b_1\}$. When the estimated posterior probability for $\{b_0<b_1\}$ is larger than a certain threshold $\alpha$, the null hypothesis $H_0$ is rejected.} \\

We first checked the quality of the ABC estimation and of the test \eqref{test:H_0} on simulated data. {Among the 20,000 simulations presented in the above paragraph, we chose 200 simulations to play in turn the role of the true dataset, 100 among those with $b_0=b_1$ and 100 among those with $b_0<b_1$. }We obtained that parameters estimates were generally close to the true values (App. \ref{append:chaix-simul}). We then use these 200 datasets to perform 200 tests (using for each of them the 19,999 other simulations). Since we know for each of these 200 tests whether the data are obtain under $H_0\:\ \{b_0=b_1\}$ or $H_a\ :\ \{b_0<b_1\}$, this provides insight on the power of our test and allows us to set the threshold defining the critical region of the test. Here we can choose this threshold $\alpha=0.5$ which is very natural (see App. \ref{append:chaix-simul}). We can then conclude the test for the dataset from Central Asia populations .\\

For the ABC test, we obtained an estimated posterior probability for $\{b_0<b_1\}$ equal to 0.4518, below the threshold $\alpha=0.5$, so that the null hypothesis $H_0$ \eqref{test:H_0} can not be rejected. The p-value of the test, estimated as the proportion of these simulations where $\widehat{\P}(H_a\ |\ S_{\sc{obs}})\geq 0.4518$, can be estimated to 47\%. Hence there is no significantly higher fecundity in cognatic populations compared with patrilineal ones.



\begin{figure}[ht!]
\begin{center}
\includegraphics[width=18cm,height=10cm]{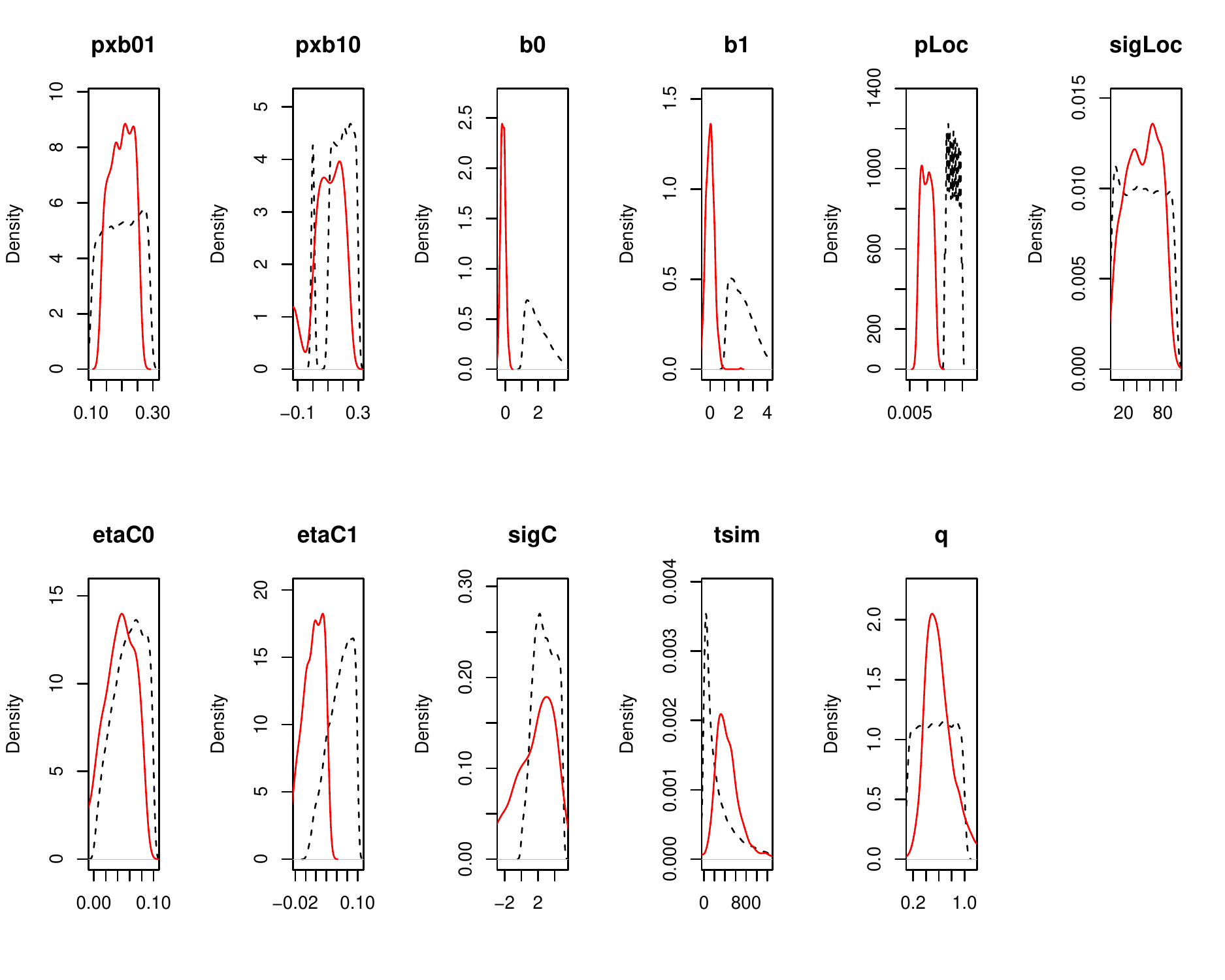}
\end{center}
\caption{{\small \textit{Results of the ABC estimation for the dataset of Heyer et al. \cite{heyeretal} for Central Asia human populations. The prior distributions are plotted in dashed lines and the posterior densities in plain red lines.}}}\label{fig:chaix-posterior}
\end{figure}

\begin{figure}[!ht]
\begin{center}
\begin{tabular}{cc}
\includegraphics[width=12cm,angle=0,height=9cm]{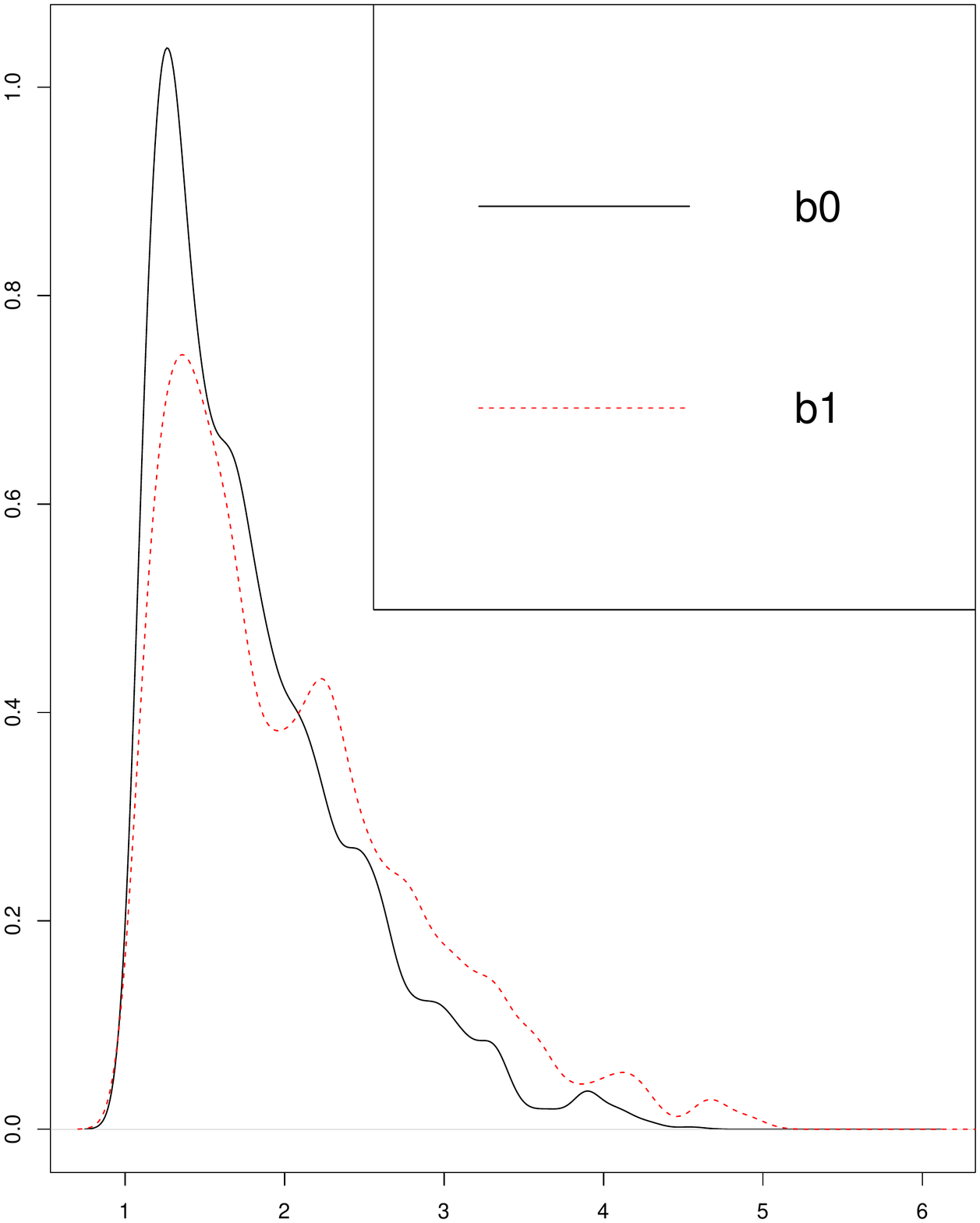} &
\includegraphics[width=12cm,angle=0,height=9cm]{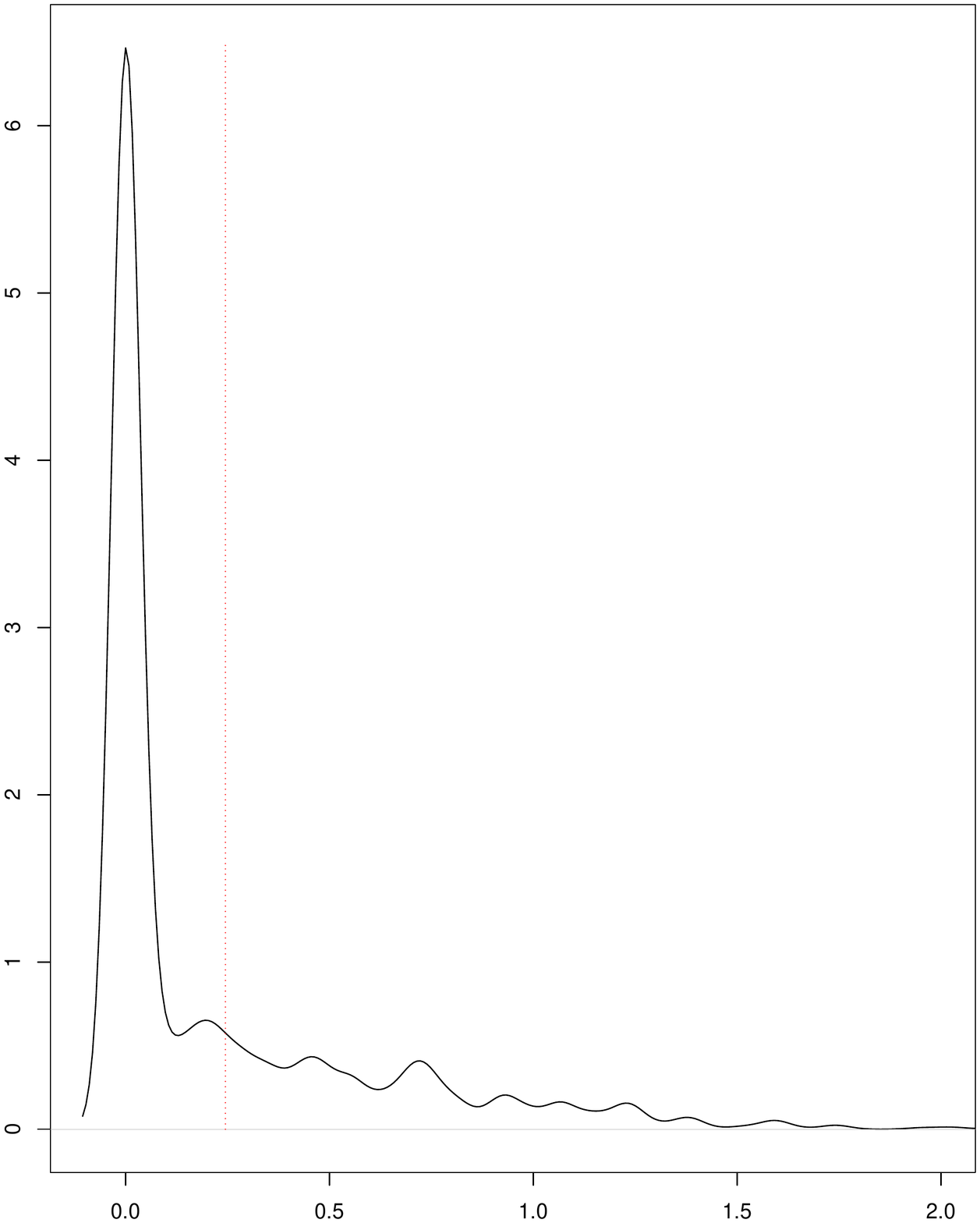}\\
(a) & (b)
 \end{tabular}
\caption{{\small \textit{(a): Approximate posterior distributions for $b_0$ and $b_1$, obtained by ABC on the Central Asian database with 40,000 simulations. (b): Approximate posterior distribution of $b_1-b_0$. The posterior mean of $b_1-b_0$, equal to 0.25, is indicated as the vertical dashed red line.}}}\label{fig:chaixb0b1}
\end{center}
\end{figure}

\clearpage

\section{Discussion}

Inferences from genetic data are most often performed under three important assumptions in the existing literature. First, the population size and structure are known parameters: either it is fixed or it follows a deterministic evolution, according to a given scenario (\eg expansion or bottleneck, or a fixed structure with known migration rates between sub-populations). Second, mutations are supposed to not affect the genealogical trees, \ie models are supposed neutral. Selection is rarely explicitly taken into account in inference methods (yet see for instance, background selection can bias the estimation of demographic variations  \citep{charlesworthetal03, johrietal2020}). Third, there is no feedback between the evolution of the population and its demography: a selected mutation is supposed not to affect population size, or the population structure. The most frequent models used in inference, the Kingman's coalescent and the Wright-Fisher model, make the three assumptions altogether. The goal of the present paper was to present a model and an inference method which allow to relax all these assumptions. We showed that by using an ABC procedure, it was possible to estimate ecological, demographic and genetic parameters from genotypic and phenotypic data. \\
Recently, \cite{rasmussenandstadler} proposed a birth-death model without interactions where mutations can affect the birth and death rates of individuals in a strain, which in return affect the genealogies. They showed how it was possible to use phylogenies to estimate the effect of mutations on fitness in some viruses. In our paper, we go a step further by allowing interactions between individuals, and population structure and demography that depend on the evolution of the population. Our model assumes two genetics traits, a selected trait which governs the structure of the population, and a marker linked to the trait which is neutral and used to infer the genealogy. We first showed how genetic diversity at the neutral marker is related to the evolution at the selected trait, and to the size and structure of the population. We then used this relationship by developing ABC procedure which allows to estimate ecological parameters based on genetic diversity at the neutral marker and on the partial or total knowledge of the population structure. We showed on simulated data that the ABC procedure gives accurate estimates of ecological parameters such as the birth, death and interactions rates, and genetic parameters such as the mutation rate. Our results also showed that non-neutral genealogies can easily be detected under our framework.\\
The ABC procedure is well fitted to deal with complex models if we can simulate the latter easily, which has become increasingly common for most ecological models \cite[\eg][]{legendreclobert,hallermesser}. Here, we applied our model and its ABC procedure to reanalyze the genetic diversity of microsatellites on Y chromosomes in Central Asia human populations. The genetic diversity are compared between two social organizations and lifestyles: patrilineal vs. cognatic. Previous studies showed a significantly different genetic diversity and coalescent trees topologies, which was interpreted as evidence of the effect of socio-cultural traits on biological reproduction, due to how wealth is transmitted within families \citep{chaixetal, heyeretal}. However, these conclusions were obtained under simplifying assumptions: genealogies followed a modified Wright-Fisher model, and the genetic diversity and coalescent trees topologies were compared independently, \ie there was no interaction between populations and between social organization. Such assumptions dismissed the possibility that socio-cultural traits and social organization could change, that new populations can be founded, and that competitive interactions between individuals within and between social organizations might affect demography and evolution. We relaxed all these limitations by applying our model. We supposed that the trait under selection can affect the birth rate. Contrarily to \cite{heyeretal}, we did not test whether wealth transmission could explain differences in genetic diversity and coalescent trees topologies. Rather, we addressed a long-standing question in anthropology: can fertility be affected by a change in a social organization, in particular with a change in the agricultural mode. We found no evidence of a fertility difference between both kinds of social organization. Our findings then ask the question why human populations can adopt new socio-cultural traits without any strong evidence of a biological advantage. Further analyses and data would be necessary to confirm our results, especially regarding the number of children per females. In the data, this information is based on a few interviews that are not at all precise  \citep[see Table S3 in the suppl. mat.,][]{chaixetal}. However, since the genetic diversity sampled in contemporaneous population is due to long historical process, it seems difficult to estimate fertility for several dozens or hundreds generations. Our results only suggest that there is, on average, no evidence of an effect of a social trait on fertility all along the history of Central Asia human populations. \\
Finally, our paper illustrates that it is actually possible to merge ecological and genetic data and models. Our model is based on classical competitive Lotka-Volterra equations, under the assumptions of rare mutations relatively to ecological processes. The genealogies and genetic diversity produced under such a model are then used to infer ecological and demographic parameters. We showed that relaxing strong assumptions of genetic models is possible, and that it allows to provide new analysis methods. Even though we applied our inferential procedure only to simulated genetic data or microsatellites genetic diversity, our model is general enough to embrace any type of data: SNPs, phenotypic traits, etc.  The development of stochastic birth and death models, with (this paper) or without \citep{rasmussenandstadler} interactions open the way to new methods for analyzing data. As highlighted by \cite{frostpybusgogviboudbonhoefferbedford}, this is particularly important for the study of epidemics and pathogens evolution. These authors give a list of current challenges which can partly addressed thanks to the method and models developed here. For instance, the role of the host structure on the pathogens evolution and genetic diversity, the role of stochasticity, and providing more complex and realistic evolutionary models.\\

\textit{Acknowledgments:} The authors thank
The authors thank Laurent S\'eries and Sylvain Ferrand for their help with the CMAP compute servers. They also thank Frédéric Austerlitz and Rapha\"elle Chaix for discussion and for sharing anthropological data from Central Asia. This research has been supported by the Chair ``Modélisation Mathématique et Biodiversité" of Veolia Environnement-Ecole Polytechnique-Museum National d'Histoire Naturelle-Fondation X. V.C.T. also acknowledges support from Labex CEMPI (ANR-11-LABX-0007-01) and B\'ezout (ANR-10-LABX-58). \\

\textit{Competing interests:} The authors declare no competing financial interests in relation to the current work.\\

\textit{Data archiving:} The genetic data, simulation and the programs developed in the paper will be archived on Dryad and Github.\\

{\footnotesize

\providecommand{\noopsort}[1]{}\providecommand{\noopsort}[1]{}\providecommand{\noopsort}[1]{}\providecommand{\noopsort}[1]{}

}

\clearpage

\appendix

\section{Mathematical construction of the PES and of the forward-backward phylogenies}\label{sec:maths}

Recall the stochastic individual-based model described in Section \ref{sec:micro}. We consider a population of clonal individuals, characterized, on the one hand, by a trait, \ie a vector of genetically determined variables $x \in \mathcal{X}\subset \R^d$, which affects the demographic processes such as birth, death and competitive interactions between individuals, and, on the other hand, by a vector of genetic markers $u \in \mathcal{U}\subset \R^q$, supposed neutral (\ie $u$ does not affect demographic processes). $\mathcal{X}$ and $\mathcal{U}$ respectively represent the sets of possible values of the trait and the neutral markers. The number of individuals in the population at time $t$ is denoted $N_t^K$, where $K$ is a scaling parameter which controls the relationship between the demographic and genetical parameters and variables. For instance, regarding competition, one can consider that each individual has a weight $1/K$ such that when $K$ increases, competition between individuals decreases and the population size increases.

The evolution of the population is a stochastic process in continuous time which depends on the rates of all possible demographic and genetic processes. Individuals with trait $x$ give birth at rate $b(x)$ and die at rate $d(x)+\frac{1}{K} C(x,y)$, where $d(x)$ is the intrinsic death rate, and $\frac{1}{K}C(x,y)$ is the additional death rate due to the competitive effect of a single individual with trait $y$. At birth, offspring' traits and markers can change by mutation with probability $p_K$ and $q_K$, respectively. The trait and marker mutation rates are respectively supposed such that $p_K=1/K^2$ and
\begin{equation}\label{hyp:tauxmutations}
q_K=p_K r_K,\quad \mbox{ with }\lim_{K\rightarrow +\infty}r_K= +\infty,\quad \mbox{ and } \lim_{K\rightarrow +\infty} q_K (\log K)^2=0.
\end{equation}
After mutation on the trait $x$, the offspring's trait is $x+h$ with $h$ randomly drawn in a distribution $m(x,h) dh$. The effect of mutations on the neutral marker are supposed to follow a Gaussian distribution with mean 0 and variance $1/K$ (but alternatives are possible, see \cite{billiardferrieremeleardtran}). {Notice that we do not need to assume small mutations for the trait.}\\

Because the population has variable size, it is conveniently represented at time $t$ by the following point measure:
\begin{equation}
\nu_t^K=\frac{1}{K}\sum_{i=1}^{N^K_t} \delta_{x_i,u_i},\label{def:nuK}
\end{equation}
where each individual is represented by a Dirac mass weighting its trait and marker values (individuals being ranked in the lexicographical order, for instance). \\

In this paper, we denote by $\mathcal{M}_F(E)$ the space of finite (non-negative) measures on $E$, and by $\mathcal{M}_1(E)$ the set of probability measures on $E$.
For a non-negative measurable or integrable function $f$ on $E$ and measure $\mu\in \mathcal{M}_F(E)$, we define $\langle \mu,f\rangle=\int_E f d\mu$.\\

In the sequel, we assume that
\begin{equation}\label{hyp:CI}
\sup_{K\in \N^*} \E\big(\langle \nu^K_0,1\rangle^3\big)<+\infty.
\end{equation}

\subsection{Large population limit in the ecological time-scale}

First, when $K\rightarrow +\infty$, the mutation rates vanish and we recover in the limit a system of ordinary differential equations.

\begin{prop}\label{prop:lotka-volterra}Let us assume that the initial condition $\nu_0^K$ converges, when $K\rightarrow +\infty$, to a trait-monomorphic initial condition of the form $\sum_{i=1}^p n_0(x_i) \delta_{x_i} \otimes \pi(x_i,du)$, where $p\in \N$, $x_1,\dots x_p\in \mathcal{X}$ and for all $i\in \{1,\dots p\},$ $\pi(x_i,du)$ is the marker-distribution conditional to the trait $x_i$. Then the marginal trait-distribution of $\nu^K$ converges to a limit of the form $\sum_{i=1}^p n_t(x_i) \delta_{x_i}$ where $(n_t(x_1),\dots n_t(x_p))$ are solution of the following system of ordinary equations \eqref{eq:lotka-volterra}. The convergence is uniform on every compact time intervals, and holds in probability.
\end{prop}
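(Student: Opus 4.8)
The plan is to use the classical martingale-problem and generator techniques for measure-valued birth-death-competition processes, following the approach of \cite{fourniermeleard,champagnat06,champagnatmeleard}. First I would write down the (pre-)generator $L^K$ acting on cylindrical test functions $F(\nu)=\phi(\langle\nu,f\rangle)$ (with $f$ depending on the trait only, to control the marginal trait-distribution), and decompose the empirical process as
\begin{align*}
\langle\nu_t^K,f\rangle = \langle\nu_0^K,f\rangle + \int_0^t \Psi^K(\nu_s^K,f)\,ds + \mathcal{M}_t^{K,f},
\end{align*}
where $\Psi^K$ is the drift coming from births, deaths and competition, and $\mathcal{M}^{K,f}$ is a càdlàg square-integrable martingale. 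A direct computation gives that the drift $\Psi^K(\nu,f)$ converges, for a monomorphic-in-trait measure of the form $\sum_i n(x_i)\delta_{x_i}$, to $\sum_i f(x_i)\big(b(x_i)-d(x_i)-\sum_\ell C(x_i,x_\ell)n(x_\ell)\big)n(x_i)$, i.e.\ exactly the right-hand side of \eqref{eq:lotka-volterra} tested against $f$ (the mutation contribution to this drift is $O(p_K)=O(1/K^2)\to 0$, using assumption \eqref{hyp:tauxmutations}, and the quadratic competition term survives because the $1/K$ weight is compensated by the $N_t^K=O(K)$ interacting pairs).

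The second step is to establish the relevant a priori moment bounds and tightness. Using \eqref{hyp:CI} (uniform bound on third moments of the initial mass) together with a standard Gronwall/stopping-time argument on the number of individuals, one controls $\sup_{K}\E\big(\sup_{s\le T}\langle\nu_s^K,1\rangle^3\big)<\infty$ for every fixed $T$; the competition term only helps here since it is a negative contribution to the death rate, so the process is dominated by a pure birth process with rate $\le b_{\max}N_t^K$. From these bounds one shows that the quadratic variation of $\mathcal{M}^{K,f}$ is $O(1/K)$ (it carries a prefactor $1/K$ from the jump sizes $1/K$ times $O(K)$ events per unit time), hence $\mathcal{M}^{K,f}\to 0$ in $L^2$ uniformly on $[0,T]$ by Doob's inequality. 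Tightness of the laws of $(\nu^K)$ in $\mathbb{D}([0,T],\mathcal{M}_F(\mathcal{X}\times\mathcal{U}))$ follows from the Aldous–Rebolledo criterion applied to $\langle\nu^K,f\rangle$ for $f$ in a convergence-determining class, using again the moment bounds to control the drift and bracket increments.

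The third step is identification of the limit and propagation of the monomorphic-in-trait structure. Any limit point $\bar\nu$ is continuous (jumps are $O(1/K)$) and, passing to the limit in the martingale decomposition, satisfies $\langle\bar\nu_t,f\rangle=\langle\bar\nu_0,f\rangle+\int_0^t \langle \bar\nu_s, f\cdot(b-d-C*\bar\nu_s)\rangle ds$ for all suitable $f$. Since $\bar\nu_0=\sum_i n_0(x_i)\delta_{x_i}\otimes\pi(x_i,du)$ is supported (in the trait variable) on the finite set $\{x_1,\dots,x_p\}$ and no trait-mutation occurs in the limit, the trait-marginal of $\bar\nu_t$ stays supported on $\{x_1,\dots,x_p\}$; writing it as $\sum_i n_t(x_i)\delta_{x_i}$ and testing against $f=\indiq_{\{x_j\}}$ shows $(n_t(x_i))_i$ solves \eqref{eq:lotka-volterra}. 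Uniqueness for this locally Lipschitz ODE system (the vector field is polynomial, hence Lipschitz on the compact invariant region given by the moment bound) gives a single limit point, so the full sequence converges; the convergence is in probability and uniform on compacts because the limit is deterministic and $\mathbb{D}$-convergence to a continuous limit upgrades to uniform convergence. The main obstacle I anticipate is not any single step but the bookkeeping needed to check that the mutation terms on both the trait and the marker are genuinely negligible at this timescale — one must verify that $p_K$ and $q_K$ chosen as in \eqref{hyp:tauxmutations} make both the additional drift and the extra jump variance vanish — and, relatedly, making the a priori moment estimates uniform in $K$ in the presence of the (sign-favorable but nonlinear) competition term; once those bounds are in hand the rest is the routine tightness-plus-identification scheme.
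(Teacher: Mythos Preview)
Your proposal is correct and follows exactly the approach the paper indicates: the paper's own proof is the single sentence ``The proof is a standard proof of tightness-uniqueness argument (see \cite{ethierkurtz} or \cite{fourniermeleard})'', and what you have written is precisely a detailed sketch of that standard scheme (martingale decomposition, moment bounds via \eqref{hyp:CI}, vanishing of the martingale bracket at order $1/K$, Aldous--Rebolledo tightness, identification of the deterministic limit and uniqueness of the Lotka--Volterra ODE). Your remark that the mutation terms are negligible at this timescale because $p_K,q_K\to 0$ is the only point one needs to check beyond \cite{fourniermeleard}, and it is straightforward.
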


\begin{proof}The proof is a standard proof of tightness-uniqueness argument (see \cite{ethierkurtz} or \cite{fourniermeleard}).
\end{proof}

For $p=1$, we recover the classical logistic equation
$$\frac{dn_t(x_0)}{dt}= \big(b(x_0)-d(x_0)-\eta(x_0) C(x_0,x_0)n_t(x_0)\big)n_t(x_0)$$
whose solutions started at any non-zero initial condition all converge to the unique stationary stable solution, in case $b(x_0)-d(x_0)>0$:
$$\widehat{n}(x_0)=\frac{b(x_0)-d(x_0)}{\eta(x_0) C(x_0,x_0)}.$$
For $p=2$, we recover a competitive 2 species Lotka-Volterra system, whose solution converges to a stationary stable solution, that can correspond to the extinction of one or both species or coexistence.
For polymorphic populations with $p>2$, the dynamics becomes much more complicated (see \cite{zeeman} for $p=3$). However the following criterion from Champagnat et al. \cite{champagnatjabinraoul} ensures the convergence to a stationary stable point:

\begin{prop}[Champagnat Jabin Raoul]\label{prop:coexistenceLV}
Assume that for all $i,j\in \{1,\dots p\}$, $\eta(x_i)C(x_i,x_j)=\eta(x_j)C(x_j, x_i)$ and that the matrix $\big(\eta(x_i)C(x_i,x_j)\big)_{1\leq i,j\leq p}$ is positive definite. Then, starting from any initial condition $(n(x_i ; x_1,\dots x_p), i\in \{1,\dots p\})$ in the positive quadrant, the Lotka-Volterra system  \eqref{eq:lotka-volterra} converges to a unique stationary stable point. In the sequel, we denote this equilibrium $\big(\widehat{n}(x_1; x_1,\dots x_p),\dots \widehat{n}(x_p; x_1,\dots x_p)\big)$.
\end{prop}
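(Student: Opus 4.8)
The plan is to invoke the result of \cite{champagnatjabinraoul} essentially as a black box, after checking that the Lotka-Volterra system \eqref{eq:lotka-volterra} satisfies the hypotheses of their theorem under the stated symmetry and positive-definiteness assumptions. First I would rewrite \eqref{eq:lotka-volterra} in the normalized form in which the cited convergence result is stated: setting $a_{ij}=\eta(x_i)C(x_i,x_j)$ and $\rho_i=b(x_i)-d(x_i)$, the system reads $\dot n_t(x_i)=n_t(x_i)\bigl(\rho_i-\sum_{j=1}^p a_{ij}n_t(x_j)\bigr)$. The symmetry hypothesis $\eta(x_i)C(x_i,x_j)=\eta(x_j)C(x_j,x_i)$ says precisely that $A=(a_{ij})$ is a symmetric matrix, and the second hypothesis says $A$ is positive definite; these are exactly the conditions under which \cite{champagnatjabinraoul} prove convergence to a globally attracting equilibrium for trajectories started in the open positive quadrant.

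Second, I would recall the Lyapunov-function mechanism underlying that result, since it is short and makes the statement self-contained. When $A$ is symmetric positive definite, the system is a (generalized) gradient system: one checks that the function $V(n)=\sum_i\bigl(n(x_i)-n^*_i\log n(x_i)\bigr)$-type free-energy functional — more precisely, for the symmetric case the quadratic form associated with $A$ combined with the logistic terms — is a strict Lyapunov function, so that $\tfrac{d}{dt}V(n_t)\le 0$ with equality only at the equilibrium. Positive definiteness of $A$ guarantees that the equilibrium equation $\rho_i=\sum_j a_{ij}n^*_j$ (on the relevant support) has a unique solution, hence uniqueness of the stationary stable point; symmetry is what makes $V$ actually decrease along trajectories. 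LaSalle's invariance principle then upgrades this to convergence of every positive-quadrant trajectory to that single equilibrium, which we christen $\bigl(\widehat n(x_1;x_1,\dots,x_p),\dots,\widehat n(x_p;x_1,\dots,x_p)\bigr)$.

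The one genuine subtlety — and the step I expect to be the main obstacle if one wants a fully rigorous self-contained argument rather than a citation — is the boundary behavior: some coordinates of the limiting equilibrium may vanish (the excerpt itself notes ``some former traits $x_1,\dots,x_p$ may be lost''), so the equilibrium need not lie in the open quadrant and one must control trajectories that approach faces of $\partial\R^p_+$. This is handled in \cite{champagnatjabinraoul} by a careful analysis showing the $\omega$-limit set is a single point and identifying on which sub-face it lies; I would simply quote this rather than reproduce it. Consequently the proof reduces to: (i) recast \eqref{eq:lotka-volterra} in the standard competitive form; (ii) verify symmetry and positive definiteness of $A$ from the hypotheses; (iii) cite \cite{champagnatjabinraoul} for global convergence to a unique stable equilibrium; (iv) fix the notation $\widehat n(\cdot;x_1,\dots,x_p)$ for that equilibrium. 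No new estimates are required beyond what is already available.
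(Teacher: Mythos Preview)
Your proposal is correct and in fact goes beyond what the paper does: the paper gives no proof at all for this proposition, treating it purely as a cited result from \cite{champagnatjabinraoul} (hence the attribution in the proposition header) and immediately moving on to a remark about when the hypotheses are satisfied. Your plan to invoke that reference as a black box after verifying that \eqref{eq:lotka-volterra} fits its framework is exactly the right level of detail; the additional sketch of the Lyapunov/LaSalle mechanism and the caveat about boundary equilibria are helpful context but not required here.
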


Notice that the conditions of Prop. \ref{prop:coexistenceLV} are satisfied for instance when $\eta(x)\equiv \eta$ is a constant function and when $C(x,y)=C(x-y)$ is symmetric with positive Fourier transform. This is in particular the case for Gaussian kernels $C(x,y)=\frac{1}{\sqrt{2\pi}\sigma}\exp\big(-\frac{(x-y)^2}{2\sigma^2}\big)$. When the competition kernel is symmetric with positive Fourier transform, the equilibria of \eqref{eq:lotka-volterra} do not depend any more on the initial conditions.

\subsection{Substitution Fleming-Viot limit in the trait-mutation time-scale}

Now let us return to the microscopic population \eqref{def:nuK}. There are three timescales underlying its dynamics: 1) the ecological timescale of births and deaths is the more rapid, the global birth and death rate being of order $K$;  2) the timescale of marker mutations is the intermediate timescale, when the population size is of order $K$, marker mutations appear at a global rate $r_K /K$; 3) the trait mutations happen on the slower timescale; for a population size of order $K$, the global trait mutation rate is of order $1/K$. \\

We now consider the dynamics in the timescale of trait mutations and consider $(\nu^K_{Kt})_{t\geq 0}$. The dynamics of this process, when $K\rightarrow +\infty$, has been studied in Billiard et al. \cite{billiardferrieremeleardtran}. It is shown that the sequence $(\nu^K_{K.})_{K\in \N^*}$ converges to a limit called Substitution Fleming-Viot Process (SFVP) and described as follows. The trait distribution evolves as the Polymorphic Evolution Sequence (PES) introduced by Champagnat and M\'el\'eard \cite{champagnatmeleard2011}. Between trait mutations, the population stabilizes at the equilibrium of the ODE system \eqref{eq:lotka-volterra} corresponding to the traits which are present in the population. Transitions, whose durations are of order $\log K$ disappear in the limit. The limiting trait distribution thus jumps from one (possible polymorphic) equilibrium to another one when successful trait mutations arise. Under our assumptions on the mutation probabilities $p_K$ and $q_K$, when a new mutant trait appears and invades into the population, the neutral marker that is linked to it benefits from a hitchhiking phenomenon (see e.g. \cite{barton98,barton,durrettschweinsberg2004,durrettschweinsberg,etheridgepfaffelhuberwakolbinger}). Between jumps of the trait distribution, the neutral distribution follows in the limit a diffusive Fleming-Viot process, that boils down to a Wright-Fisher diffusion with mutations in the case where $\mathcal{U}=\{a,A\}$ has only two elements.
 \par To give a more precise description, let us define the different ingredients appearing in the expression of the SFVP. In a trait polymorphic population at equilibrium, of the form $\sum_{i=1}^p \widehat{n}(x_i ; x_1,\dots x_p)\delta_{x_i}$, the fitness function of a new small mutant population of trait $y$ is defined by \eqref{def:fitness}. This fitness function appears in the PES process introduced by Champagnat and M\'el\'eard \cite{champagnatmeleard2011}, defined as follows.

\begin{definition}\label{def:PES}
Let us work under Assumptions of Proposition \ref{prop:coexistenceLV}. The PES process $\Lambda_.(dy)$ is a pure-jump process with values in $\mathcal{M}_F$, the set of point measures on $\mathcal{X}$. It jumps from
$$ \sum_{i=1}^p \widehat{n}(x_i ; x_1,\dots x_p)\delta_{x_i}\mbox{ to }  \sum_{i=1}^p n^*(x_i ; x_1,\dots x_{p+1})\delta_{x_i} + n^*(x_{p+1} ; x_1,\dots x_{p+1})$$with rate
$$\sum_{j=1}^p b(x_j)\widehat{n}(x_j;x_1,\dots x_p) \frac{\big[f(x_{p+1} ; x_1,\dots x_p)\big]_+}{b(x_{p+1})} m(x_j,x_{p+1}-x_j) dx_{p+1}$$
where the fitness function $f$ has been defined in \eqref{def:fitness}.
\end{definition}
By assumptions of Proposition \ref{prop:coexistenceLV}, the sizes $n^*(x_i ; x_1,\dots x_{p+1})$ are well defined.

\noindent The term $b(x_j)\widehat{n}(x_j;x_1,\dots x_p) m(x_j,x_{p+1}-x_j)$ in the jump rate appearing in the definition says that every subpopulation of trait $x_j$ can generate the mutant $x_{p+1}$, with a probability that depend on the size and birth rate of this subpopulation, and of the mutation kernel. The term $\big[f(x_{p+1} ; x_1,\dots x_p)\big]_+/b(x_{p+1})$ describes the probability that a mutant trait $x_{p+1}$ is not wiped out by the stochasticity ruling the births, deaths and competition events.\\

\me Finally, let us define the Fleming-Viot process (see \cite{dawson,dawsonhochberg,donnellykurtz_96} or \cite{etheridgebook}) that appears in our study.
\begin{definition}\label{def:FV} Let us fix $x\in {\cal X}$, $u \in {\cal U}$ and consider a polymorphic population with traits $x_1,\dots x_p \in \mathcal{X}$ that is described by the trait-distribution $\Lambda$. Let us assume that for every continuous bounded test function $\phi$ on $\mathcal{U}$,
\begin{equation}
   \label{hyp-D}
\lim_{K\rightarrow +\infty}\sup_{u\in \U} \bigg | {r_K\over K} \int_{ \U} (\phi(u+h)-\phi(u)) G_{K}(u,dh) - A\phi(u)\bigg | =0,
  \end{equation}
where  $(A, {\cal D}(A))$ is  the generator of a Feller semigroup and $\phi\in {\cal D}(A)\subseteq \mathcal{C}_b(\U,\R)$, the set of continuous bounded real functions on $\U$. \\
The Fleming-Viot process $(F_{t}^{u,\Lambda}(x,.), t\geq 0)$ indexed by $x$, started at time $0$ with initial condition $\delta_{u}$, associated with the mutation operator $A$ and evolving in a population whose trait-marginal distribution is $\Lambda(dx)$, is the  ${\cal P}({\cal U})$-valued process whose law is characterized as the unique solution of the following martingale problem. For any $\phi \in {\cal D}(A)$,
\begin{equation}
M^x_{t}(\phi)
= \langle F^{u,\Lambda}_{t}(x,.),\phi\rangle -  \phi(u) - b(x) \int_{0}^t  \langle F^{u,\Lambda}_{s}(x,.),  A\phi\rangle ds\label{PBM-FV}
\end{equation}
is a continuous square integrable  martingale with quadratic variation process
\begin{align}
\langle M^x(\phi)\rangle_{t}=  &\frac{b(x)+d(x)+\eta(x) \int_{\mathcal{X}} C(x,y) \Lambda(dy) } {\langle \Lambda, 1\rangle}
\int_{0}^t  \left(\langle F^{u,\Lambda}_{s}(x,.), \phi^2\rangle -  \langle F^{u,\Lambda}_{s}(x,.), \phi\rangle^2 \right)ds.\label{crochet-PMB-FV}
\end{align}
\end{definition}

\noindent Now we can enonciate the convergence result established in \cite{billiardferrieremeleardtran}:

\begin{thm}[Billiard, Ferri\`ere, M\'el\'eard and Tran]\label{th:SFVP}
Let us assume that \eqref{hyp:CI}, \eqref{hyp-D} and Hypothesis of  Proposition \ref{prop:coexistenceLV} hold.

The sequence $(\nu^K_{K.})_{K\in \N^*}$ converges in distribution to the superprocess $(V_t(dx,du))_{t\in \R_+}$ such that:\\
(i) $\forall t\in \R_+,\ V_t(dx,\mathcal{U})=\Lambda_t(dx)$, the PES process of Definition \ref{def:PES}. We write $V_t(dx,du)=\Lambda_t(dx)\pi_t(x,du)$, where $\pi_t(x,du)$ is the conditional probability distribution of the marker conditionally to the trait $x$.\\
(ii) At time $t$, the process $V$ jumps from a state $V_{t_-}(dx,du)=\Lambda_{t_-}(dx)\pi_{t_-}(x,du)$ with $\Lambda_{t_-}(dx)=\sum_{i=1}^{\card(\supp(\Lambda_{t_-}))} \widehat{n}\big(x_i ; \supp(\Lambda_{t_-})\big)\delta_{x_i}$ to
$$\sum_{i=1}^{\card(\supp(\Lambda_{t_-}))} n^*\big(x_i ; \supp(\Lambda_{t_-}) \cup \{y\}\big) \delta_{x_i} \pi_{t_-}(x_i,du) + n^*\big( y ; \supp(\Lambda_{t_-}) \cup \{y\}\big) \delta_{(y,v)}$$
with rate
$$\int_{\mathcal{X}\times \mathcal{U}} \Big[ b(x) \int_{\mathcal{X}}\frac{\big[f\big(y ; \supp(\Lambda_{t_-})\big)\big]_+}{b(y)} m(x,y-x) dy  \Big] \Lambda_{t_-}(dx)\pi_{t_-}(x,dv).$$
Let us denote by $\tau_0=0$ and $\tau_1,\tau_2,\dots$ the successive jump times of $V_.$ (which are also the jump times of the PES $\Lambda$).\\
(iii) Between jumps, and conditionally to the trait-distribution, the marker-probability distributions $\pi_t(x,du)$ evolve as independent Fleming-Viot superprocesses, characterized by the following martingale problem: for $t\geq \tau_k$ ($k\geq 0$), for $x\in \supp(\Lambda_t)$ and for $\phi(u)\in \mathcal{D}(A)$:
\begin{equation}
M^x_{t\wedge \tau_{k+1}}(\phi)
= \int_{\mathcal{U}} \phi(u) \pi_{t\wedge \tau_{k+1}}(x,du) -   \int_{\mathcal{U}} \phi(u) \pi_{\tau_k}(x,du) - b(x) \int_{\tau_k}^{t\wedge \tau_{k+1}}   A\phi(u) \pi_s(x,du)\  ds\label{PBM-FV-SFVP}
\end{equation}is a square integrable martingale with quadratic variation process
\begin{multline}
\langle M^x(\phi)\rangle_{t\wedge \tau_{k+1}} =  \frac{b(x)+d(x)+\eta(x) \int_{\mathcal{X}}C(x,y)\Lambda_s(dy)}{\widehat{n}\big(x ; \supp(\Lambda_{\tau_k})\big)} \\
\int_{\tau_k}^{t\wedge \tau_{k+1}} \Big(\int_{\mathcal{U}} \phi^2(u) \pi_s(x,du) -\big(\int_{\mathcal{U}} \phi(u) \pi_s(x,du)\big)^2\Big)  ds.
\end{multline}
The convergence holds in the sense of finite dimensional distributions and in the sense of occupation measures.
\end{thm}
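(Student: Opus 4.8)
The plan is to follow the phase-decomposition method that Champagnat and M\'el\'eard developed for trait substitution/evolution sequences, extended to the polymorphic case and combined with a Fleming--Viot analysis of the neutral marker; a direct generator-convergence argument is obstructed by the timescale separation, so I would not attempt it. Since the conclusion is only asserted for finite-dimensional distributions and occupation measures, it suffices, for each fixed horizon $T$, to control the marginal laws of $\nu^K_{Kt}$ ($t\le T$) together with the laws of the empirical marker distributions inside each strain, and then to pin down the limit points via a martingale problem. First I would establish relative compactness: tightness of the total mass $\langle\nu^K_{Kt},1\rangle$ from assumption~\eqref{hyp:CI} and a comparison with a pure-birth process, and — after restricting $\mathcal U$ to a large compact set, so that the within-strain marker laws live in the compact space $\mathcal P(\mathcal U)$ — tightness of the real processes $t\mapsto\langle\pi^K_{Kt}(x_i,\cdot),\phi\rangle$, $\phi\in\mathcal D(A)$, where $\pi^K_{Kt}(x_i,\cdot)$ denotes the renormalised empirical marker law of the subpopulation carrying trait $x_i$; this last tightness I would read off from the semimartingale decompositions below.

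\textbf{Trait marginal.} Next I would recover the PES for the trait component. Cutting $[0,T]$ (in the $Kt$ timescale) into the a.s.\ finitely many inter-mutation intervals, I would show that on each such interval the trait distribution of $\nu^K$ stays, with probability $\to 1$, in a small neighbourhood of a Lotka--Volterra stable equilibrium $\sum_i\widehat n(x_i;x_1,\dots,x_p)\delta_{x_i}$: Proposition~\ref{prop:coexistenceLV} gives convergence of the deterministic flow~\eqref{eq:lotka-volterra} to this equilibrium, and a large-deviation lower bound shows the stochastic system needs a real time of order $e^{cK}$ to leave the neighbourhood, hence does not move on the slow timescale. Successful mutant traits would then be shown to appear at exactly the rate of Definition~\ref{def:PES}: a resident subpopulation $x_j$ emits mutants of trait $y$ at rescaled rate of order $b(x_j)\widehat n(x_j)m(x_j,y-x_j)$; a fresh mutant reaches macroscopic size with probability $\approx[f(y;x_1,\dots,x_p)]_+/b(y)$, by coupling its early evolution with a near-critical birth-and-death branching process; and, conditioned on invasion, the Lotka--Volterra flow drives the community to the new equilibrium $n^*(\cdot;x_1,\dots,x_p,y)$ in time $O(\log K)$, which vanishes after division by $K$, the residents that disappear at the new equilibrium dying out in a further $O(\log K)$ sub-critical phase. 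This is the step where I would import, essentially verbatim, the three-phase estimates of Champagnat--M\'el\'eard (2011) in their polymorphic form.

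\textbf{Neutral marker and hitchhiking.} On an inter-jump interval $[\tau_k,\tau_{k+1})$, and conditionally on the trait configuration, I would write the semimartingale decomposition of $t\mapsto\langle\pi^K_{Kt}(x_i,\cdot),\phi\rangle$ for $\phi\in\mathcal D(A)$. The finite-variation part is produced by marker mutations at births: by assumption~\eqref{hyp-D} the $(r_K/K)$-rescaled mutation operator converges to $A$ and births occur at per-capita rate $b(x_i)$, which yields the drift $b(x_i)\langle\pi^K_{Kt}(x_i,\cdot),A\phi\rangle$. The martingale part — coming from the resampling effect of births and deaths and from the $\sqrt K$-fluctuations of the strain size around $K\widehat n(x_i)$ — has predictable bracket asymptotically $\frac{b(x_i)+d(x_i)+\eta(x_i)\int C(x_i,y)\Lambda_{\tau_k}(dy)}{\widehat n(x_i;\supp\Lambda_{\tau_k})}\big(\langle\pi^K_{Ks}(x_i,\cdot),\phi^2\rangle-\langle\pi^K_{Ks}(x_i,\cdot),\phi\rangle^2\big)\,ds$ (the numerator reducing to $2b(x_i)$ at the equilibrium), the time-change factor $K$ cancelling the $1/K$ scale of the demographic fluctuations; I would check that the trait fluctuations and the between-strain interaction terms do not contribute in the limit. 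Any limit point then solves the Fleming--Viot martingale problem~\eqref{PBM-FV-SFVP} (with the quadratic variation displayed below it), which is well posed, giving part~(iii); independence across strains would follow from the asymptotic determinism of the strain sizes and the independence of the driving Poisson clocks. For the hitchhiking of part~(ii): the new strain $y$ is founded by a single individual whose marker $v$ has, in the limit, the law $\pi_{\tau_{k+1}-}(x_j,dv)$ of its parent strain (the marker mutation possibly carried at that birth has vanishing size $O(1/\sqrt K)$); and because $q_K(\log K)^2\to 0$ in~\eqref{hyp:tauxmutations}, with high probability no marker mutation occurs in the growing mutant family during the $O(\log K)$ sweep, so $\pi_{\tau_{k+1}}(y,\cdot)=\delta_v$, while the residents' marker laws are frozen through that short transition; strains vanishing at the new equilibrium simply drop out. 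Gluing the finitely many phases on $[0,T]$ then yields convergence of the finite-dimensional distributions and of the occupation measures to $V$.

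\textbf{Main obstacle.} The hard part will be the \emph{uniform} treatment of the $O(\log K)$-long transition (invasion) phases: one must show simultaneously that they leave no trace on the trait-PES limit — which needs sharp branching-process couplings and exponential exit-time estimates applied at the \emph{random} configuration that starts each phase, glued consistently across successive phases — and that the neutral marker hitchhikes cleanly, the single founder value spreading to the whole new strain with no surviving extra mutation, which is exactly what the balance $p_K=1/K^2$, $r_K\to\infty$, $q_K(\log K)^2\to 0$ is designed to deliver. A secondary technical difficulty is that the Fleming--Viot approximation of the marker has to be carried out on intervals delimited by stopping times rather than deterministic times, and re-initialised consistently at each phase.
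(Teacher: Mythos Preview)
The present paper does not prove this theorem: it is stated as ``the convergence result established in \cite{billiardferrieremeleardtran}'' and imported wholesale from that reference, so there is no proof here to compare your proposal against. Your sketch --- phase decomposition \`a la Champagnat--M\'el\'eard for the trait PES, branching-process coupling for the invasion phase, semimartingale/martingale-problem identification for the Fleming--Viot marker dynamics between jumps, and a hitchhiking argument exploiting $q_K(\log K)^2\to 0$ --- is the natural architecture and is, to my knowledge, exactly the line of the original Billiard--Ferri\`ere--M\'el\'eard--Tran paper; but if you want to check your argument against a written proof you will have to consult that reference rather than the present one.
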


Point (ii) tells us that when a new mutant trait appears, there is a hitchhiking phenomenon and the marker that is physically linked with this trait can invade the population (see \cite{billiardferrieremeleardtran}). With a similar proof, it can be proved that when there is coexistence of other traits, the marker distributions of the sub-populations corresponding to the traits that coexist after the invasion of the new mutant trait are the same than the marker distributions before the jump.\\

Heuristically, we can think of the trait $x\in \R^d$ as characterizing the species to which the individuals belong. Put as this way, the model defined below is a speciation extinction model and we are interested in the phylogenies of a neutral marker in this framework.

\subsection{Phylogeny model for the SFVP: the PES-based phylogenies}\label{sec:phylo}

In the SFVP limit, the subpopulation of trait $x$ living at time $t$ say, can be considered of constant ``size'' $\widehat{n}(x ; \supp(\Lambda_t))$, with birth and death rates $b(x)=d(x)+\int_{\mathcal{U}}  C(x,y)\Lambda_t(dy)$. The competition term implies that the death rate in the species $x$ - or alternatively the subpopulation size $\widehat{n}(x ; \supp(\Lambda_t)) $ - depends on the composition of the whole population: each time a new species appears or each time a species gets extinct, the sizes of the species populations change. \\
Between mutations, the marker distribution evolves as a Fleming-Viot superprocess and thus we can expect that the phylogenies of individuals sampled from the species $x$ at time $t$ are distributed as a Kingman coalescent with rate $b(x)$ and efficient size $\widehat{n}(x ; \Lambda_t)$ (that depends a priori on the complete set of traits in the population), with a bottleneck at the time $\tau_x$ of appearance of the mutant trait $x$.\\

Let us first recall some results on the phylogenies of Fleming-Viot processes. The links between the processes forward and backward in time requires the notion of duality, and we refer to \cite{etheridgebook,jansenkurt} for detailed presentations. Two Markov processes $(X_t)_{t\in \R_+}$ and $(Y_t)_{t\in \R_+}$ with values in $E$ and $F$ respectively are in duality with respect to $f\in \Co(E\times F,\R)$, if for every $x_0\in E$, $y_0\in F$, $t\in \R_+$,
$\E_{x_0}\big(f(X_t,y_0)\big)=\E_{y_0}\big(f(x_0,Y_t)\big)$.
Taking the derivative, this implies that for all $x\in E$ and $y\in F$, $\mathcal{L}_X f(.,y)(x)=\mathcal{L}_Yf(x,.)(y)$, where $\mathcal{L}_X$ and $\mathcal{L}_Y$ are the generators of $X$ and $Y$. Applying duality to the Fleming-Viot process, we can obtain the distribution of the phylogenies of a sample on individuals chosen uniformly in the population at time $t$.

\begin{lem}Let $A$ be a generator and $n\in \N^*$. Let $\zeta_0\in \Co(\mathcal{U}^n,\R)$ be a function such that for every $i\in \{1,\dots, n\}$, and for every $(u_1,\dots u_{i-1}, u_{i+1},\dots u_n)\in \mathcal{U}^{n-1}$, $u_i\mapsto \zeta_0(u_1,\dots u_n)$ belongs to $\mathcal{D}(A)$. Let $(U_t^1,\dots U_t^n)_{t\in \R_+}$ be a $\mathcal{U}^n$-valued process whose generator $G$ is defined for a function as $\zeta_0$ by:
\begin{align}
\mathcal{G}_U \zeta_0(u_1,\dots u_n)= & \frac{1}{2}\sum_{{\scriptsize \begin{array}{c}i\not= j\\i\in \{1,\dots n\}\\
j\in \{1,\dots,n\}\end{array}}} \Big(\zeta_0\big(u_1,\dots u_{j-1},u_i,u_{j+1},\dots u_n\big)-\zeta_0\big(u_1,\dots u_n\big)\Big)\nonumber\\
+  & \sum_{i=1}^n A\big(x_i\mapsto \zeta_0(x_1,\dots x_n)\big)(x_i).\label{def:G}
\end{align}
The genealogies of the process $(U^1_t, \dots U^n_t)_{t\geq 0}$ is a Kingman coalescent with parameter 1.
\end{lem}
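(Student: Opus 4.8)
The plan is to establish a duality relation between the $\mathcal{U}^n$-valued process $(U^1_t,\dots,U^n_t)$ and a Fleming--Viot superprocess with mutation generator $A$, and then read off the genealogy from the dual. First I would recall the moment-dual structure of the Fleming--Viot process: if $(\mu_t)_{t\geq 0}$ is the $\mathcal{P}(\mathcal{U})$-valued Fleming--Viot process with mutation operator $A$ (and resampling rate normalised to $1$), then for test functions of the form $F_{\zeta_0}(\mu)=\langle \mu^{\otimes n},\zeta_0\rangle=\int_{\mathcal{U}^n}\zeta_0(u_1,\dots,u_n)\,\mu(du_1)\cdots\mu(du_n)$ with $\zeta_0$ as in the statement, the Fleming--Viot generator acts as
\begin{equation}
\mathcal{L}_{FV}F_{\zeta_0}(\mu)=\Big\langle \mu^{\otimes n},\ \tfrac12\sum_{i\neq j}(\Theta_{ij}\zeta_0-\zeta_0)+\sum_{i=1}^n A_i\zeta_0\Big\rangle,
\end{equation}
where $\Theta_{ij}$ is the substitution operator replacing the $j$-th coordinate by the $i$-th and $A_i$ denotes $A$ acting on the $i$-th variable. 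This is precisely $\langle \mu^{\otimes n},\mathcal{G}_U\zeta_0\rangle$, so with $f(\mu,(u_1,\dots,u_n)):=\zeta_0(u_1,\dots,u_n)$, viewed as a function $\langle\mu^{\otimes n},\cdot\rangle$ in the first argument, one has $\mathcal{L}_{FV}f(\cdot,\mathbf{u})(\mu)=\mathcal{G}_U f(\mu,\cdot)(\mathbf{u})$, i.e.\ the two processes are dual with respect to $f$.

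Next I would invoke the standard fact (see \cite{etheridgebook,donnellykurtz_96,jansenkurt}) that the dual of the Fleming--Viot process run backward describes the genealogy of a uniform sample: the coordinates $(U^1_t,\dots,U^n_t)$ are the ancestral types, the resampling/substitution part of $\mathcal{G}_U$ encodes coalescence events, and the $A_i$ part encodes mutation along lineages. Concretely, the block $\tfrac12\sum_{i\neq j}(\Theta_{ij}\zeta_0-\zeta_0)$ is exactly the generator of the $n$-coalescent's action on type configurations: each unordered pair of lineages is merged at rate $1$, which is the defining property of Kingman's coalescent with parameter $1$. Since the mutation part $\sum_i A_i$ commutes (as an operator on configurations) with the coalescent part and only relabels types along fixed lineages, it does not affect the tree topology or the coalescence times. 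Therefore the genealogical structure embedded in $(U^1_t,\dots,U^n_t)_{t\geq 0}$ — i.e.\ the partition-valued process recording which of the $n$ initial labels have merged — is distributed as the Kingman coalescent with rate $1$.

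To make this rigorous I would: (1) check that $F_{\zeta_0}$ lies in the domain of $\mathcal{L}_{FV}$ for $\zeta_0$ satisfying the stated coordinatewise-domain hypothesis, using that $A$ generates a Feller semigroup; (2) verify the generator identity above by a direct computation, expanding the Fleming--Viot generator on $\langle\mu^{\otimes n},\zeta_0\rangle$ and matching the resampling term with the sum over pairs $i\neq j$ and the mutation term with $\sum_i A_i$; (3) apply the duality theorem of \cite{ethierkurtz} (Cor.~4.4.13) or \cite{jansenkurt} to conclude $\E[\zeta_0(U^1_t,\dots,U^n_t)\,|\,\mathbf{u}_0]=\E[\langle\mu_t^{\otimes n},\zeta_0\rangle\,|\,\mu_0=\delta_{\mathbf{u}_0}/\dots]$, identify the partition process of $(U^i_t)$ with the Kingman coalescent, and note that the normalisation of the resampling rate to $1$ in $\mathcal{G}_U$ yields coalescence parameter $1$. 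The main obstacle I anticipate is the bookkeeping in step (2): one must be careful that the factor $\tfrac12$ and the ordered double sum over $i\neq j$ in \eqref{def:G} correctly reproduce the coalescent rate (each unordered pair contributing rate $1$, hence $\binom{m}{2}$ total when $m$ lineages remain), and that the substitution operator $\Theta_{ij}$ in the configuration dual matches the "pick two individuals, one replaces the other" resampling mechanism of Fleming--Viot. Once the generator identity is pinned down, the genealogical interpretation is the classical lookdown/duality argument and the rest is routine.
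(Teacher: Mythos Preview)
Your proposal is correct but takes a substantially more indirect route than the paper. The paper's proof is a two-line direct reading of the generator: the jump part of $\mathcal{G}_U$ says that for each ordered pair $(i,j)$ with $i\neq j$, particle $j$ is replaced by a copy of particle $i$ at rate $1/2$; hence each unordered pair of lineages merges at total rate $1$, and between jumps the coordinates evolve independently under $A$. That is exactly the definition of the Kingman $n$-coalescent with parameter $1$ (with mutation along lineages), so the genealogical partition process is Kingman. No reference to Fleming--Viot or duality is needed.

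Your route---set up the moment duality with the Fleming--Viot process, then invoke the classical identification of the FV dual with the Kingman coalescent---does reach the same conclusion, and the generator computation you sketch in step~(2) is precisely the content of the paper's argument. But the duality machinery is superfluous here: in the paper this Lemma is a \emph{preliminary} to the Proposition that establishes the FV duality, so proving it by appealing to FV duality would reverse the logical order (and risk circularity if you were building the theory from scratch). The advantage of your presentation is that it situates the Lemma in the broader duality framework and makes the connection to the sampling interpretation explicit; the advantage of the paper's proof is that it is self-contained and essentially immediate once one parses the generator. Your ``main obstacle'' (the factor $\tfrac12$ bookkeeping) is in fact the entire proof: two ordered pairs per unordered pair, each at rate $1/2$, gives rate $1$ per unordered pair.
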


\begin{proof}The generator $G$ says that
\begin{itemize}
\item for all $i,j\in \{1,\dots,n\}$, the particle $j$ is replaced by a particle at the same state as the particle $i$ with rate $1/2$.
\item between jumps, each component $U^i_t$ of the vector process evolves independently following the generator $A$.
\end{itemize}
Then, that the genealogies of the process $(U^1_t, \dots U^n_t)_{t\geq 0}$ is a Kingman coalescent is straightforward, if we recall that a Kingman coalescent on $\{1,\dots n\}$ is a process on the set of partitions of $\{1,\dots n\}$ where two different elements of the partition merge after an independent exponential time of parameter 1.
\end{proof}

\begin{prop}Let us consider the Fleming-Viot process $(\pi_t(du))_{t\geq 0}$ started from $\pi_0$ and associated with the generator $A$ (we omit here the trait parameters $x$ and the support $\Lambda$ of the trait distribution) that is defined in Th. \ref{th:SFVP} (iii). Let $n\in \N^*$ be the number of individuals drawn independently in $\pi_t(du)$ at time $t>0$. Let $\zeta_0\in \Co(\mathcal{U}^n,\R)$ such that for every $i\in \{1,\dots, n\}$, and for every $(u_1,\dots u_{i-1}, u_{i+1},\dots u_n)\in \mathcal{U}^{n-1}$, $u_i\mapsto \zeta_0(u_1,\dots u_n)$ belongs to $\mathcal{D}(A)$. Let $(U_t^1,\dots U_t^n)_{t\in \R_+}$ be a $\mathcal{U}^n$-valued process with generator $G$ defined in \eqref{def:G}.
Then,
\begin{align}
\E_{\pi_0}\big(\langle \pi_t^{\otimes n},\zeta_0\rangle\big)=\int\dots \int \E_{(u_1,\dots u_n)}\Big(\zeta_0\big(U^1_t,\dots U^n_t\big)\Big) \pi_0(du_1)\dots \pi_0(du_n).\label{identif:loisphylo}
\end{align}
\end{prop}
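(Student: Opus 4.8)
The plan is to prove \eqref{identif:loisphylo} by the classical moment‑duality argument for Fleming--Viot processes, with the $n$‑particle system of the previous Lemma playing the role of the dual. Write $F_\zeta(\mu):=\langle \mu^{\otimes n},\zeta\rangle$ for $\mu\in \mathcal{P}(\mathcal{U})$ and $\zeta\in \Co(\mathcal{U}^n,\R)$ that is regular enough in each variable, and let $(P^U_s)_{s\geq 0}$ be the semigroup of the process $(U^1,\dots,U^n)$ with generator $\mathcal{G}_U$ from \eqref{def:G}, so that $\zeta_s:=P^U_s\zeta_0$ satisfies $\zeta_s(u_1,\dots,u_n)=\E_{(u_1,\dots,u_n)}\big(\zeta_0(U^1_s,\dots,U^n_s)\big)$ and the right‑hand side of \eqref{identif:loisphylo} is exactly $\langle \pi_0^{\otimes n},\zeta_t\rangle=F_{\zeta_t}(\pi_0)$, while the left‑hand side is $\E_{\pi_0}\big(F_{\zeta_0}(\pi_t)\big)$. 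It thus suffices to establish the duality relation $\E_{\pi_0}\big(F_{\zeta_0}(\pi_t)\big)=F_{P^U_t\zeta_0}(\pi_0)$ for every $t\geq 0$. Note that the hypothesis $u_i\mapsto\zeta_0\in\mathcal{D}(A)$ together with $\mathcal{D}(A)\subseteq\Co_b(\U,\R)$ forces $\zeta_0\in\Co_b(\mathcal{U}^n,\R)$, which makes all the quantities below well defined.

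The key step is the infinitesimal form of this identity at the level of generators: for $\zeta$ such that $u_i\mapsto\zeta(u_1,\dots,u_n)$ lies in $\mathcal{D}(A)$ for each $i$, the generator $\mathcal{L}$ of the Fleming--Viot process $\pi$ (read off from the martingale problem \eqref{PBM-FV-SFVP} and its quadratic variation) applied to $F_\zeta$ equals
\[
\mathcal{L}F_\zeta(\mu)=\langle \mu^{\otimes n},\mathcal{G}_U\zeta\rangle,\qquad \mu\in\mathcal{P}(\mathcal{U}).
\]
Indeed, the drift term of \eqref{PBM-FV-SFVP} contributes, by linearity of $\mu\mapsto\langle\mu^{\otimes n},\cdot\rangle$, the term $\langle\mu^{\otimes n},\sum_{i=1}^n A_i\zeta\rangle$, where $A_i$ denotes $A$ acting on the $i$‑th variable; this is the mutation part of $\mathcal{G}_U$. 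The quadratic‑variation term of \eqref{PBM-FV-SFVP} produces the second‑order part of $\mathcal{L}$, and a direct computation — expanding the second variational derivative of $F_\zeta$ against $(\delta_x(dy)-\mu(dy))\mu(dx)$, the constant prefactor of the bracket being normalised so that pairs coalesce at rate $1$ (consistent with ``Kingman coalescent with parameter $1$'' in the Lemma) — yields $\tfrac12\sum_{i\neq j}\big(\langle\mu^{\otimes n},\Theta_{ij}\zeta\rangle-\langle\mu^{\otimes n},\zeta\rangle\big)$, where $\Theta_{ij}\zeta$ is $\zeta$ with its $j$‑th variable replaced by the $i$‑th. Since $\langle\mu^{\otimes n},\Theta_{ij}\zeta\rangle$ equals $\langle\mu^{\otimes n},(\text{replace }u_j\text{ by }u_i)\,\zeta\rangle$, this is precisely the resampling part of \eqref{def:G}, and adding the two contributions gives the displayed identity. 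One must also check that $F_\zeta$ belongs to the domain of $\mathcal{L}$, which reduces to the stated regularity of $\zeta$ in each variable and to the Feller property of $A$.

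With the generator identity at hand, fix $t>0$ and set $G(s):=\E_{\pi_0}\big(F_{\zeta_{t-s}}(\pi_s)\big)=\E_{\pi_0}\big(\langle\pi_s^{\otimes n},\zeta_{t-s}\rangle\big)$ for $s\in[0,t]$. Applying the martingale problem for $\pi$ to the time‑dependent functional $F_{\zeta_{t-s}}$ and using the Kolmogorov backward equation $\partial_s\zeta_{t-s}=-\mathcal{G}_U\zeta_{t-s}$ for the dual semigroup — valid because $\zeta_0$, hence each $\zeta_r=P^U_r\zeta_0$, stays in $\mathcal{D}(\mathcal{G}_U)$ (the resampling part of $\mathcal{G}_U$ is bounded and the mutation part preserves $\mathcal{D}(A)$ in each variable by the Feller property) — one gets
\[
\frac{d}{ds}G(s)=\E_{\pi_0}\big(\mathcal{L}F_{\zeta_{t-s}}(\pi_s)\big)-\E_{\pi_0}\big(\langle\pi_s^{\otimes n},\mathcal{G}_U\zeta_{t-s}\rangle\big)=0
\]
by the generator identity. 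Hence $G$ is constant on $[0,t]$, and comparing $G(t)=\E_{\pi_0}\big(\langle\pi_t^{\otimes n},\zeta_0\rangle\big)$ with $G(0)=\langle\pi_0^{\otimes n},\zeta_t\rangle$ gives \eqref{identif:loisphylo}.

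The main obstacle is the generator identity of the second paragraph: matching the quadratic‑variation (resampling) term of the Fleming--Viot martingale problem with the coalescence part of $\mathcal{G}_U$ requires the combinatorial bookkeeping sketched above, and one must simultaneously control the domains — that $F_\zeta\in\mathcal{D}(\mathcal{L})$ and that $P^U_s\zeta_0$ retains enough regularity in each coordinate to legitimise the differentiation/Dynkin step, which (as $\mathcal{U}$ need not be compact) is exactly where the hypotheses that $\zeta_0$ is bounded continuous with $u_i\mapsto\zeta_0\in\mathcal{D}(A)$ and that $A$ generates a Feller semigroup are used. The interpolation argument of the third paragraph is then routine, possibly up to approximating $\mathcal{L}$ by generators of regularised processes if one prefers to avoid invoking an extended generator directly; alternatively, the whole statement can be quoted from the duality theory in \cite{dawson,etheridgebook,jansenkurt}.
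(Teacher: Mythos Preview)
Your proposal is correct and follows the same moment-duality strategy as the paper. The only organisational difference is that the paper factors the argument through an intermediate function-valued process $(\zeta_t)_{t\geq 0}$ (first establishing duality between $\pi_t^{\otimes n}$ and $\zeta_t$ via It\^o's formula on product test functions $\zeta_0=\prod_i\phi_i$, then a second duality between $\zeta_t$ and $(U^1_t,\dots,U^n_t)$), whereas you carry out the direct interpolation $G(s)=\E_{\pi_0}\big(\langle\pi_s^{\otimes n},P^U_{t-s}\zeta_0\rangle\big)$ in one step; the generator identity $\mathcal{L}F_\zeta(\mu)=\langle\mu^{\otimes n},\mathcal{G}_U\zeta\rangle$ that you isolate is exactly what the paper's It\^o computation produces, so the two routes are equivalent.
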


The relation \ref{identif:loisphylo} shows that the distribution of $n$ individuals sampled independently from $\pi_t$ at time $t$ is the same as the distribution of the diffusion $(U_t^1,\dots U^n_t)$ started at independent positions drawn in $\pi_0$.

\begin{proof}First, let us consider the distribution of $n$ individuals sampled uniformly in $\pi_t(x,du)$, for $x\in \mathcal{X}$. To use the duality techniques, we compute $\E_{\pi_0}\big(F(\zeta_0,\pi_t(x,.)^{\otimes n})\big)$ for $\zeta_0\in \Co(\mathcal{U}^n,\R_+)$ and for the test function $F\in \Co\big(\Co(\mathcal{U}^n)\times \mathcal{M}_1(\mathcal{U}^n),\R\big)$ such that $$\forall \zeta_0\in \Co(\mathcal{U}^n,\R), \forall \mu\in \mathcal{M}_1(\mathcal{U}^n), F(\zeta_0,\mu)=\int_{\mathcal{U}}\dots \int_{\mathcal{U}} \zeta_0(u_1,\dots u_n) d\mu(u_1,\dots u_n).$$
For $\zeta_0(u_1,\dots u_n)=\prod_{i=1}^n \phi_i(u_i)$ where $\phi_1,\dots \phi_n\in \Co(\mathcal{U},\R)$ bounded, we have for any measure $\pi\in \mathcal{M}_1(\mathcal{U})$,
$F(\zeta_0,\pi^{\otimes n})=\prod_{i=1}^n \langle \pi,\phi_i\rangle.$
Then, by Itô's formula for such a function $\zeta_0$,
\begin{align*}
F\big(\zeta_0,\pi_t^{\otimes n}(x,.)\big)= & F\big(\zeta_0,\pi_0^{\otimes}(x,.)\big) +M_t^F
+  \sum_{i=1}^n \int_0^t \Big(\prod_{j\not= i} \langle \pi_s(x,.),\phi_j\rangle\Big) \int_{\mathcal{U}} A\phi_i(u)\pi_s(x,du)\ ds\\
+ & \frac{1}{2} \sum_{i\not= j } \int_0^t \Big(\langle \pi_s(x,.),\phi_i \phi_j\rangle - \langle \pi_s(x,.),\phi_i\rangle \langle \pi_s(x,.),\phi_j\rangle\Big) \Big(\prod_{k\notin \{i,j\}} \langle \pi_s(x,.),\phi_k\rangle \Big) \ ds\\
= & F\big(\zeta_0,\pi_0^{\otimes n}(x,.)\big)+M_t^F
+  \sum_{i=1}^n \int_0^t F\big(A_i \zeta_0,\pi_s^{\otimes n}(x,du)\big)ds\\
+ & \frac{1}{2} \sum_{i\not=j} \int_0^t \Big(F\big(\Phi_{ij}\zeta_0,\pi_s^{\otimes n}(x,du)\big)-F\big(\zeta_0,\pi_s^{\otimes n}(x,du)\big)\Big) ds
\end{align*}where $(M_t^F)_{t\geq 0}$ is a square integrable martingale, where
 $A_i \zeta_0 (u_1,\dots u_n)= A\big(u_i\mapsto \zeta_0 (u_1,\dots u_n)\big)(u_i)$,
 and where $\Phi_{ij}\zeta_0(u_1,\dots u_n)=\zeta_0(u_1,\dots ,u_{j-1},u_i,u_{j+1},\dots u_n)$. Taking the expectation:
 \begin{align*}
\E_{\pi_0(x,.)}\Big(F\big(\zeta_0,\pi_t^{\otimes n}(x,.)\big)\Big)= & F\big(\zeta_0,\pi_0^{\otimes n}(x,.)\big)
+  \sum_{i=1}^n \int_0^t \E_{\pi_0(x,du)} \Big(F\big(A_i \zeta_0,\pi_s^{\otimes n}(x,du)\big)\Big) ds\\
+ & \frac{1}{2} \sum_{i\not=j} \int_0^t \E_{\pi_0(x,du)}\Big(F\big(\Phi_{ij}\zeta_0,\pi_s^{\otimes n}(x,du)\big)-F\big(\zeta_0,\pi_s^{\otimes n}(x,du)\big)\Big) ds
\end{align*}
Looking at the generator appearing in the above equation, we introduce the Markov process $(\zeta_t)_{t\in \R_+}$ with values in $\Co(\mathcal{U}^n,\R)$ that jumps from $\zeta$ to $\Phi_{ij}\zeta$ with rate $1/2$ for all $i,j\in \{1,\dots n\}$ and whose evolution between jumps is given by the generator $\sum_{i=1}^n A_i$.
The generator of $(\zeta_t)_{t\in \R_+}$ thus satisfies, for a measure $\pi\in \mathcal{M}_1(\mathcal{U}^n)$,
\begin{align}
\mathcal{G}_\zeta F(.,\pi)(\zeta)=
\frac{1}{2}\sum_{{\scriptsize \begin{array}{c}i\not= j\\i\in \{1,\dots n\}\\
j\in \{1,\dots,n\}\end{array}}} \Big(F\big(\Phi_{ij}\zeta,\pi\big)-F\big(\zeta,\pi\big)\Big)+ F\Big(\sum_{i=1}^n A_i\zeta,\pi\Big).\label{def:Gzeta}
\end{align}Thus, $(\zeta_t)_{t\in \R_+}$ and $(\pi^{\otimes n}_t(x,.))_{t\in \R_+}$ are in duality through $F$, and:
\begin{align*}
\E_{\pi_0(x,.)}\Big(F\big(\zeta_0,\pi_t^{\otimes n}(x,.)\big)\Big)
= & \E_{\zeta_0}\Big(F\big(\zeta_t,\pi_0^{\otimes n}(x,.)\big)\Big)\\
= & \int_{\mathcal{U}}\dots \int_{\mathcal{U}} \E_{\zeta_0} \Big(\zeta_t(u_1,\dots u_n)\Big) \pi_0(x,du_1)\dots \pi_0(x,du_n),
\end{align*}by Fubini's theorem.\\

Now, comparing their generators \eqref{def:G} and \eqref{def:Gzeta}, we see that $(\zeta_t)_{t\in \R_+}$ is itself in duality with $(U_t^1,\dots U_t^n)_{t\in \R_+}$ through the function $f(\zeta, (u_1,\dots u_n))=\zeta(u_1,\dots u_n)$. Thus:
\begin{align*}
\E_{\zeta_0} \Big(\zeta_t(u_1,\dots u_n)\Big)= & \E_{(u_0,\dots u_n)} \Big(\zeta_0(U^1_t,\dots U_t^n)\Big),
\end{align*}from which we deduce \eqref{identif:loisphylo}.
\end{proof}


\bigskip
Thus, it appears that if we draw $n$ individuals independently in $\pi_t$, the distribution of their markers is the same as if we consider a Kingman coalescent started from $n$ individuals at time $t$, draw the marker value of the individuals of time $0$ independently in $\pi_0$ and let these values evolve along the branches according to the generator $G$. In another words, the distribution $\pi_t^{\otimes n}$ is the same as the distribution of the values of diffusions $G$ along the branches of a Kingman coalescent.\\

\bigskip

From these results, we deduce the forward-backward coalescent: conditionally on the species tree obtained by the PES, we have along each branches a Kingman coalescent whose parameters depend on the sizes of each species, and hence on the entire trait distribution.

\subsection{Simulation of phylogenies with the forward-backward coalescent process}\label{sec:simulations}

We describe how to simulate the phylogenies of $n$ individuals sampled at time $0$. Because the simulation starts with obtaining a path of the PES in the forward sense, a parameter $-t_{\sc{sim}}$ ($t_{\sc{sim}}>0$) corresponding to the starting point in the past of the PES is needed. By translation, it is equivalent to say that we start the simulation of the PES at time $t_0=0$ and we reconstruct the lineages of individuals at the present time $t_{\sc{sim}}$. Once the PES has been simulated, phylogenies of individuals sampled at time $t_{\sc{\sc{sim}}}$ can be simulated conditionally to the PES, backward time, as described in Section \ref{sec:def-FBcoal}. We emphasize again that because of competition and interactions between individuals, the phylogenies can not be obtained by considering a sole process that would be backward in time. It would be possible to simulate the phylogenies of the whole population forward time and then to sample within these simulated phylogenies, but that would be time-consuming. Instead, the forward-backward coalescent process is based on the (forward in time) PES, which is a macroscopic process at the population level, and then reconstitutes the phylogenies of sampled individuals at $t_{\sc{sim}}$ (backward time).

To keep it intuitive, we describe the algorithm for the Dieckmann-Doebeli model described in Section \ref{sec:Dieckmann-Doebeli}.

\subsubsection{Simulation of the PES}

At $t_0=0$, the population is assumed monomorphic. The trait $x_0$ of the first population is drawn in a normal distribution (mean $0$, standard deviation $\sigma_m$). Its density is the equilibrium of the corresponding isolated population: $\widehat{n_0} = \frac{b(x_0) - d(x_0)}{ \eta(x_0) C(x_0,x_0)}$.\\

We then simulate mutation events recursively until the end of the simulation (when time $t_{\sc{sim}}$ is reached). Assume that at the current step, the population consists of $p$ traits $x_1,\dots x_i, \dots x_p$. \\
For each sub-population, the duration before the emergence of a possible new mutant in this sub-population is randomly drawn in an exponential distribution of rate $\widehat{n}(x_i ; x_1\dots x_p) b(x_i)$. The shortest of these durations determines the mutation that occurs (if its occurs before the end of the simulation). The mutant trait $x_{\sc{mut}} \in [-2,2]$ is drawn in a truncated normal distribution with mean the trait value of the sub-population from which it emerges and standard deviation $\sigma_m$.\\
Ecological drift can cause the extinction of the mutant before its invasion: its survival probability depends on its initial growth rate in the resident population at equilibrium (the $p$ existing populations of trait $x_1,\dots x_p$ with their densities). If the mutant does not reach a deterministic threshold, which happens with probability
$ [f(x_{\sc{mut}} ; x_1,\dots x_p)]_+/b(x_{\sc{mut}})$,
nothing happens, and the simulation keeps going with the same traits and densities. Another mutant emergence is then simulated.\\
Once a mutant emerges, the new Lotka-Volterra system \eqref{eq:lotka-volterra} must be solved. The mutant density is initially low. Numerically, we took the density $\widehat{n}(x_{\sc{mut}} ; x_1, \dots x_p, x_{\sc{mut}})$ of an isolated population with trait $x_{\sc{mut}}$ divided by 100. The system is integrated until it approaches the equilibrium (all growth rates close of 0) and then a Newton method for Lotka-Volterra ODE is used to refine the solution. If a sub-population density is below a certain fixed threshold, it is considered extinct and suppressed from the population.\\

We store the polymorphic evolutionary sequence (PES): a list of population structures (sub-population traits, densities, emergence time and origin). Example of PES creation and emergence of a mutant is given in Fig. \ref{Fig:PES}.

\begin{figure}[ht!]\hspace{0cm}
\begin{tabular}{cc}
(a) & \includegraphics[width=16cm]{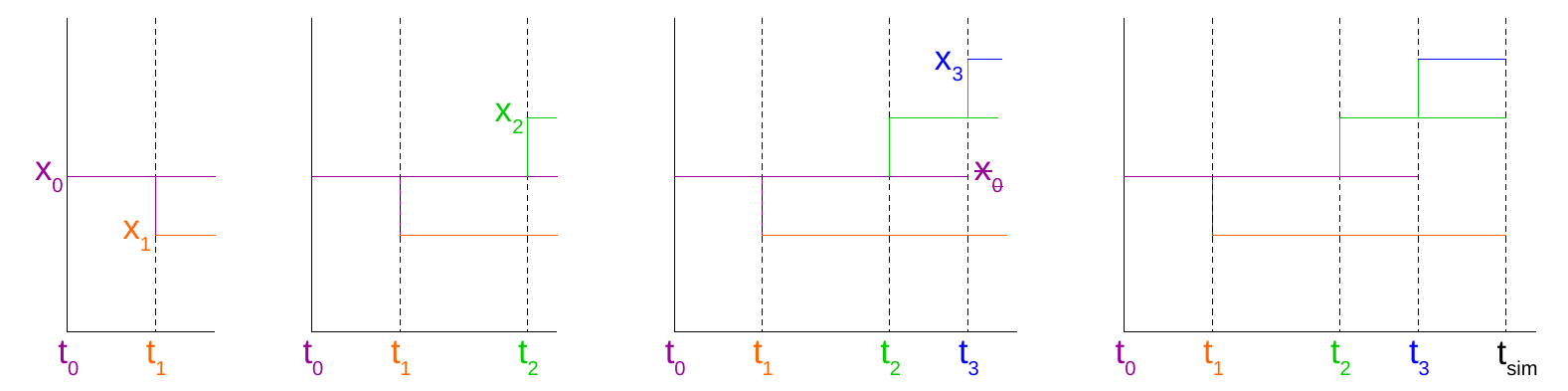}\\
(b) & \includegraphics[width=16cm]{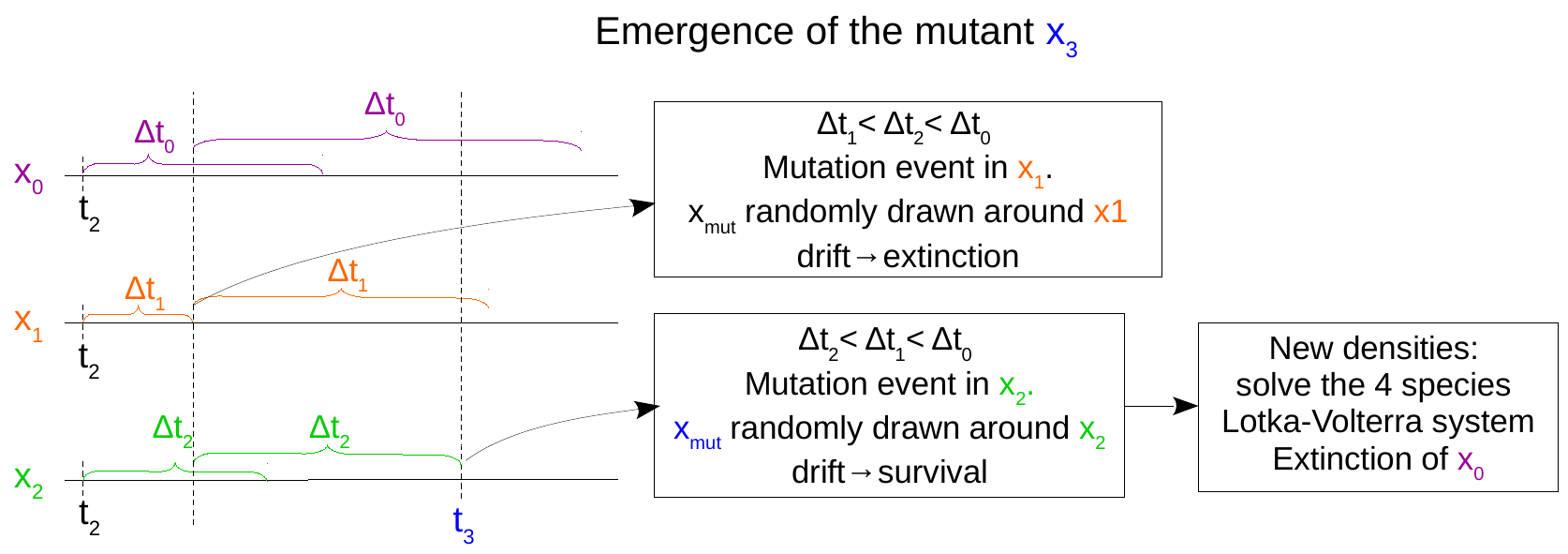}
\end{tabular}
\caption{\textit{{\small (a) Example of the creation of a PES. From left: the first mutation time $t_1$ and the first mutant $x_1$, which invades, are simulated. Second, a second mutation time $t_2$ and a second mutant which invade are simulated. Third: a third mutant $x_3$ invades and the trait $x_0$ is eliminated by the appearance of this new species. Finally, no more mutation occur before $t_{\sc{sim}}$. (b) Example of the emergence of the mutant in a population with three traits $x_0, x_1, x_2$. $\Delta t_i$ corresponds here to the exponential time before the occurrence of a mutant in the species with trait $x_i$. Here the first mutant appears in the population of trait $x_1$ but fails to invade. Three new clocks are then defined at the time of the failed invasion. It is finally in the species of trait $x_2$ that the new successful mutant occurs.}}}\label{Fig:PES}
\end{figure}

\subsubsection{Simulation of sampled phylogenies}

First, $n=1000$ individuals are sampled in the population at $t_{\sc{sim}}$. Their traits depend on the relative abundances of the sub-populations at $t_{\sc{sim}}$. If the population consists of $p$ species with traits $x_1,\dots x_p$, the sample is drawn from a multinomial distribution of probabilities \[ \frac{\widehat{n}(x_i ; x_1, \dots x_p)}{\sum_{j=1}^k \widehat{n}(x_j ; x_1, \dots x_p)}.\]

Then, we consider the coalescence of the sampled individuals. Individuals coalesce within their species following the Kingman coalescence model. In a population of $p$ species with traits $x_1,\dots x_p$, to each pair of individual in the same species, say of trait $x_i$ for $i\in \{1,\dots p\}$, is associated an independent random clock with parameter \[\frac{b(x_i)}{\widehat{n}(x_i ; x_1, \dots x_p)}.\]The next coalescence corresponds to the smallest of these exponential random times. The simulation can be done directly at the level of species, which spares some simulation of random exponential variables. In the species of trait $x_i$, the next coalescence hence occurs after an exponential random time of parameter $b(x_i) n_i (n_i - 1) / (2 \widehat{n}(x_i ; x_1, \dots x_p))$; with $n_i$ the number of distinct lineages in the species of trait $x_i$. The two individuals that coalesce are then drawn randomly in the species of trait $x_i$ and are replaced by their common ancestor. Notice that the coalescence rate must be modified at each trait-mutation event.\\

While going backward in time, we may encounter trait-mutation events. In the sub-population emerging from this mutation event, all individual phylogenies that are still distinct coalesce to produce the common ancestor of the sub-population. This common ancestor is then assigned to its parental sub-population (the sub-population in which the mutant emerged). Then coalescence within sub-population starts again until the next mutation event (or $t_0$).
Coalescence steps are described in Fig \ref{Fig:Coalescent}.


\begin{figure}[ht!]\hspace{0cm}
\parbox{16cm}{
\includegraphics[width=16cm]{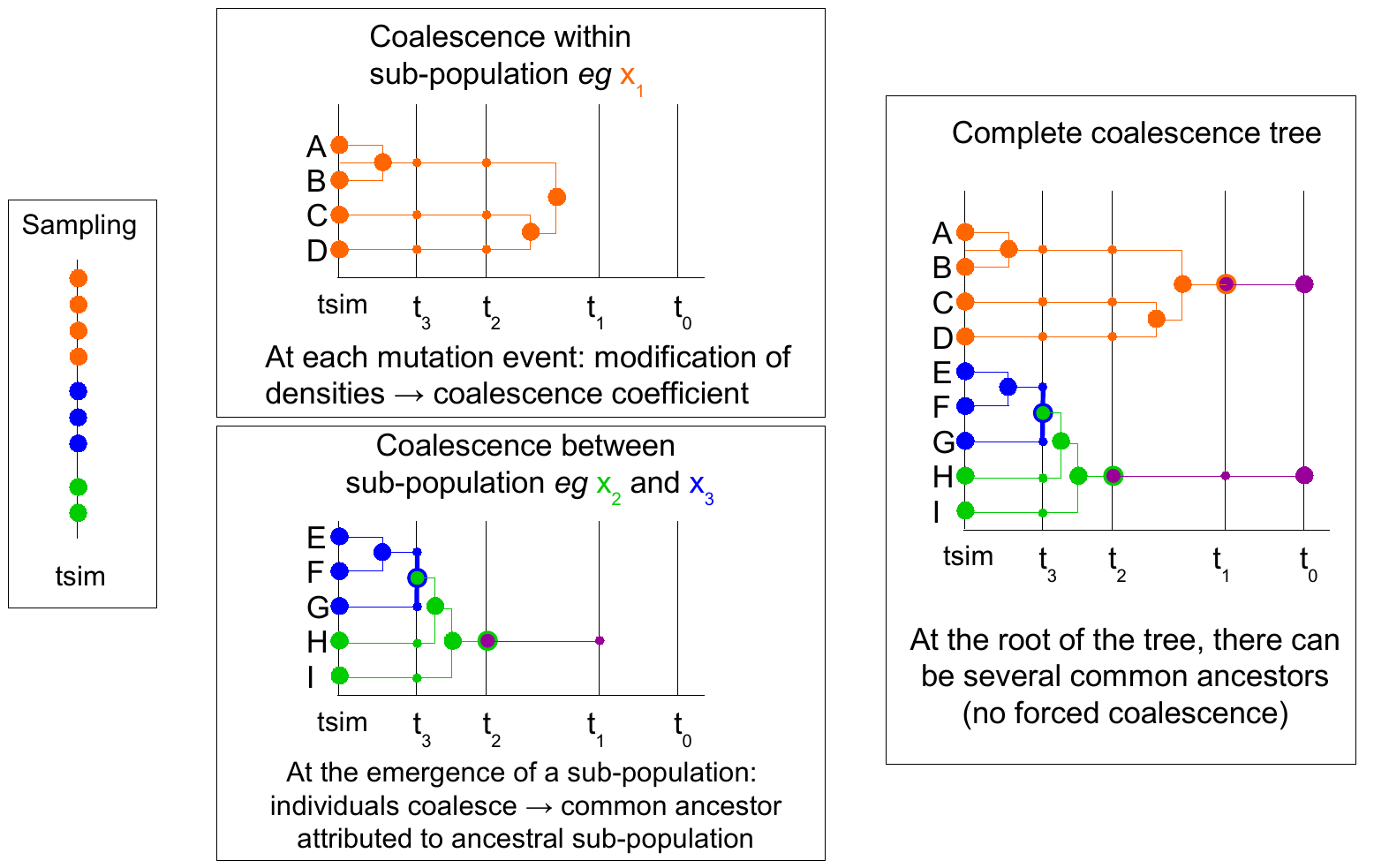}
\caption{\textit{{\small Example of the emergence of a mutant in a population with 3 traits.}}}\label{Fig:Coalescent}}
\end{figure}

\subsubsection{Neutral locus mutation along the phylogeny}

If we want to simulate the marker process along lineage, once the phylogenies have been obtained, then we simulate forward in time Brownian motions along each branch of the phylogenetic tree, with every Brownian motion starting from the value at which the Brownian motion of the mother branch stopped.\\

However, many descriptive statistics of population structure depend on neutral markers rather than on selected traits. This can be handled with a generator slightly different from the one presented in \eqref{eq:FV}. We simulated neutral microsatellite markers mutations along the phylogenetic tree: at each mutation event, the marker gain or loss one motif repetition. So here $\U=\N$ and the transitions are $\pm 1$ except if $u=0$ in which case it can only increase by 1. The mutation events occur in forward time: the mutations are transmitted from the common ancestor to its line of descent until the sampled individuals. When the individual separates into two individuals, they inherit the markers of their parent and then mutations accumulate separately in the two lines.\\

For each lineage, the time step between each neutral mutation is randomly drawn in an exponential distribution of rate $b(x_i) \theta$ (if the trait of the individual changes, the coefficient too). The mutation event results in the loss or in the gain of a motif repeat, with equal probability. Notice that the mechanism generating such marker mutation does not correspond to assumption \eqref{hyp:tauxmutations}. They are much slower than expected. However, the results can be carried to this case: the fact that phylogenies are Kingman coalescent processes can be adapted in this case by adding a color that mutates with the right probability $q_K$.

\section{Principle and framework of the ABC estimation}\label{sec:ABC-recap}


Because the competition and interactions involve the whole population, one is confronted, when trying to write the likelihood of sampled individuals' phylogenies to formulas where many terms are unobserved. Since we are in continuous time with sub-populations of varying densities, numerical integration over the missing events whose number is unknown is impossible. The ABC method (see \cite{beaumontzhangbalding,marinpudlorobertryder} for a presentation) appears as an alternative which does not use likelihood functions but rather relies on numerical simulations and comparisons between simulated and observed summary statistics. We conducted the ABC analysis using the package \rm{abc} in R \cite{package-abc}.\\

The ABC method is a Bayesian method. Let us denote by $\theta$ the parameters to estimate and by $\pi(d\theta)$ its prior distribution. The data is denoted by $\mathbf{x}$. Instead of trying to estimate the posterior distribution $\pi(d\theta \ |\ \mathbf{x})$, which might be very intricate when explicit formulas for the likelihoods are not available, the ABC method proposes to approximate the target distribution $\pi(d\theta \ |\ S(\mathbf{x}))$ where $S(\mathbf{x})$ is a vector of descriptive statistics, called summary statistics. Let us denote by $S_{\sc{obs}}=S(\mathbf{x})$ the value of the chosen vector of descriptive statistics computed on the observations. \\

The principle of the ABC method is to simulate new data corresponding to new parameters drawn into the prior distribution. The parameters that generate data leading to summary statistics close to the observed $S_{\sc{obs}}$ are then given a more important weight. The approximation of the target distribution is then the reweighted and smoothed empirical measure of the simulated parameters. The idea of ABC is grounded on non-parametric statistical theory (see e.g. Blum \cite{blum}) and on corrections methods (see \cite{beaumontzhangbalding}). More precisely, the ABC estimation requires the following steps:
\begin{enumerate}
\item Draw $N$ independent parameter sets $\theta_i$, $i\in \{1,\dots N\}$ in the prior distribution $\pi(d\theta)$,
\item For each $i\in \{1,\dots N\}$, simulate a realization of the forward-backward coalescent process following Section \ref{sec:simulations}.
\item Determine for each simulation the corresponding descriptive statistics $S_i$,
\item Compare the descriptive statistics $S_i$, $i\in \{1,\dots N\}$, with $S_{\sc{obs}}$ to determine the set of parameter values ($\theta_i$) leading to the best fit between $S_i$ and $S_{\sc{obs}}$:
\begin{itemize}
\item[(a)] rejection method: retain only the $\delta=0.1\%$ simulations with the lower Euclidean distance between $S_i$ and $S_{\sc{obs}}$, by setting a weight $W_i=1/(\delta N)$ to the latter and $W_i=0$ otherwise.
\item[(b)] smooth re-weighting: to each simulation is computed a weight $W_i=K_\delta(S_i-S_{\sc{obs}})$ where $\delta$ is a tolerance threshold and $K_\delta$ a smoothing kernel.
\end{itemize}
\item the posterior distribution is then estimated thanks to the weights $(W_i)_{1\leq i\leq n}$ defined above:
\begin{equation}\label{def:posterior}
\widehat{\pi}(d\theta | S_{\sc{obs}}):= \sum_{i=1}^N \frac{W_i}{\sum_{j=1}^N W_j} \delta_{\theta_i^*}(d\theta)
\end{equation}where $\theta^*_i$ is a correction of $\theta_i$ accounting for the difference between $S_i$ and $S_{\sc{obs}}$, which we detail in the next paragraph.
\end{enumerate}

In the \rm{abc} package, several correction methods are available for computing $\theta^*_i$. These corrections have been proposed in \cite{beaumontzhangbalding}. The idea is that if locally around $S_{\sc{obs}}$ the parameter $\theta$ is a function of the summary statistics, say $\theta=f(S,\varepsilon)$ where $\varepsilon$ is a noise component, then when $S_i$ is close to $S_{\sc{obs}}$, it is possible to correct $\theta_i=m(S_i)+\varepsilon_i$ into $\theta_i^*=m(S{\sc{obs}})+\varepsilon_i$ (see also the remark in \cite[Eq. (4.1)]{blumtran}). Using the neural network method in the \rm{abc} package, 
and based on the couples $(\theta_i,S_i)$, $i\in \{1,\dots N\}$, we can estimate the (possibly non-linear) regression function $m$ in the neighborhood of $S_{\sc{obs}}$. If $m$ is assumed locally linear at $S_{\sc{obs}}$, a  local  linear  model  can be chosen (\rm{loclinear} in the
\rm{abc} function), else, neural networks (\rm{neuralnet}) can be used to account  for  the  non-linearity  of $f$. Once the regression is performed, using the estimator $\widehat{m}$ of $m$, the corrected values $\theta_i^*$ are defined as
$$\theta_i^* = \widehat{m}(S_{\sc{obs}}) + \frac{\widehat{\sigma}(S_{\sc{obs}})}{\widehat{\sigma}(S_i)} \big(\theta_i-\widehat{m}(S_i)\big) $$
here the bracket in the r.h.s. is an estimator of the residuals $\varepsilon_i$.
Additionally, a correction for heteroscedasticity is applied (by default)
where $\widehat{\sigma}(.)$ is an estimator of the conditional standard deviation (see \cite{blumfrancois} for details).

\section{Parameters and priors for the estimation of the Dieckmann-Doebeli model}\label{sec:prior}

In order to use the ABC method, $N=400 000$ simulations were run using independent random parameter values that are sampled from their prior distributions. The prior distributions of each parameter of the model should reflect the expected values of the parameters. The parameter sets for the four \textit{pseudo-data} sets used in the ABC procedure are given in Tab. \ref{Table:data_parameters}. Notice that the pseudo-data A and B (resp. C and D) in Tab. \ref{Table:data_parameters} have been obtained by simulations with the same sets of parameters. For most parameters, we do not have insights on the expected distribution, thus we choose uniform distributions (details in table \ref{Table:prior_distrib}).

\begin{table}[ht!]
\begin{center}
\begin{tabular}{|c|c|c|c|c|c|c|c|c|}
\hline
parameter & ID & $p$ & q &$\sigma_b$	 & $\sigma_c$ & $\sigma_m$ & $\eta_c$ & $t_{\sc{sim}}$ \\
\hline
value & A & $0.0076$ & $0.7503$ &$1.186$ & $0.4951$ & $0.1448$ & $0.0211$ & $1025.619$ \\
\hline
value & B & $0.0076$ & $0.7503$ &$1.186$ & $0.4951$ & $0.1448$ & $0.0211$ & $1025.619$ \\
\hline
value & C & $0.0071$ & $0.6815$ &$1.7725$ & $0.3866$ & $0.1372$ & $0.0306$ & $335.2116$ \\
\hline
value & D & $0.0071$ & $0.6815$ & $1.7725$ & $0.3866$ & $0.1372$ & $0.0306$ & $335.2116$\\
\hline
\end{tabular}
\caption{\textit{{\small Parameters of the simulations used as data set (referred with their ID in the text).}}}\label{Table:data_parameters}
\end{center}
\end{table}

\begin{figure}[!ht]
\begin{center}
\includegraphics[width=4cm, angle=-90]{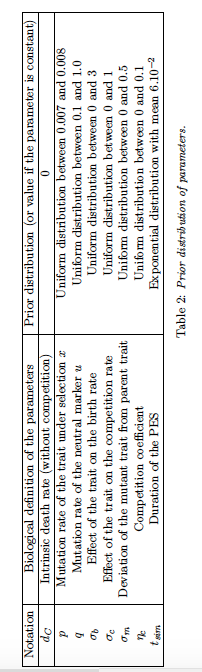}
\caption{\textit{{\small Prior distribution of parameters.}}}\label{Table:prior_distrib}
\end{center}
\end{figure}

\section{Summary statistics for the ABC estimation based on phylogenetic trees}\label{sec:summarystatistics}

We detail here several summary statistics that can be used for the ABC estimation. Depending on how much information we have in the data, we can use all these statistics for the ABC or only a subset of them.
The three first families (Sections \ref{sec:sumstat-famille1}-\ref{sec:sumstat-famille3}) provide description of the trait and marker distribution in the population at the final time. These statistics can be computed only at the sample level or also at the population level if additional knowledge is available. Many of these statistics are listed in Pudlo et al. \cite{pudlomarinestoupcornuetgautierrobert}. The family \ref{sec:sumstat-famille4} contains statistics describing the shape of the phylogenetic tree and can be used if some additional statistics, for instance related to fossils or datation of past speciation events, are  available.

\subsection{Population structure from the total population}\label{sec:sumstat-famille1}

Statistics measured on the last step of the PES (at $t_{\sc{sim}}$):
\begin{itemize}
\item number of coexisting species (number of traits in the population),
\item mean abundance of the coexisting species,
\item variance of the abundance of the coexisting species,
\item mean trait $x$ of the species,
\item trait variance between species,
\item mean trait $x$ of the individuals (depends on sub-population densities),
\item variance between the traits of individuals (depends on  sub-population densities).
\end{itemize}

\subsection{Population structure from the sampled individuals}\label{sec:sumstat-famille2}

Statistics measured on  the $n$ sampled individuals:
\begin{itemize}
\item number of sampled sub-populations,
\item relative abundance variance between sampled sub-populations,
\item mean trait of sampled sub-populations,
\item trait variance between sampled sub-populations,
\item mean trait of sampled individuals (depends on relative abundance),
\item trait variance between sampled individuals (depends on relative abundance).
\end{itemize}

\subsection{Description of the allelic distribution of the trait and marker}\label{sec:sumstat-famille3}

\begin{itemize}
\item number of distinct alleles for the marker,
\item variance of the marker's allele distribution,
\item genetic diversity (measuring the width of the marker's allele support). In case the marker corresponds to microsatellites, the genetic diversity is expressed as
\[\sum_{i} \sum_{j} \left((i\ h_{i}  - j\ h_{j} \right)^2,\]
with $h_i$ and $h_j$ the frequencies of the populations with $i$ and $j$ repetitions. This index corresponds to the expected difference between each couple of individuals in the population. 
\item Unbiased Gene Diversity, which in the case of microsatellites rewrites as
\[\frac{n}{n-1}\sum_{i} \sum_{j} \left((i\ h_{i}  - j\ h_{j} \right)^2.\]
It is the average difference between two individuals in the population.
\item M-index: ratio of the number of alleles for the marker on the width of the marker's allele support,
\[M=\frac{\mbox{number of alleles}}{\mbox{max(number of repetitions)} - \mbox{min(number of repetitions)}}.\]This index is the average percentage of intermediate allelic states that are occupied. The lower the M index, the more the population has lost possible alleles.
\item number of traits in the population,
\item mean and variance of traits
\item abundance distribution for traits.
\end{itemize}




The allelic diversity is useful to infer population size through time, because drift leads to the loss of alleles in small populations. The M-index is useful to detect reductions in population size (their intensity and duration): during a bottleneck, alleles are lost randomly (not specifically at the end of the range size), thus the percentage of occupied intermediate allelic states decreases (in huge population, alleles should exist throughout the range size). Low allele diversity and low M-index indicate recent reduction of size while low allele diversity but high M-index corresponds to populations that have been small for a long time.\\


Usually markers that are used correspond to several loci. In this case, additional indices can be added  to describe the joint distribution of the marker on these loci. Let us denote by $r$ the number of loci. Let us consider two species of traits $x$ and $y$ respectively. We denote by $x_{ij}$ and $y_{ij}$ the frequencies at the locus $j$ of the allele $i$ in the populations $x$ and $y$:
\begin{itemize}
\item $F_{ST} = \frac{\dfrac{\sum_{j}^r \sum_{i} \frac{x_{ij}^2}{r} + \sum_{j}^r \sum_{i} \frac{y_{ij}^2}{r}}{2} - \sum_{j}^r \sum_{i} \frac{x_{ij} y_{ij}}{r}}{1 - \sum_{j}^r \sum_{i} \frac{x_{ij} y_{ij}}{r}}$
\item $\delta\mu^2 = \frac{1}{r} \sum_{j}^r \left( \sum_{i} i\ x_{ij}  - \sum_{i} i\ y_{ij} \right)^2 $. The alleles ($i$) must be expressed as a number of motif repeats.
\item Nei's $D_A$ distance: $D_A = 1 - \sum_{j}^r \sum_{i} \sqrt{x_{ij} y_{ij}}$
\item Nei's standard genetic distance $D_S$: $D_S = - \ln \left( \frac{\sum_{j}^r \sum_{i} \frac{x_{ij} y_{ij}}{r}}{\sqrt{(\sum_{j}^r \sum_{i} \frac{x_{ij}^2}{r} ) (\sum_{j}^r \sum_{i} \frac{y_{ij}^2}{r} )}} \right)$
\end{itemize}

$F_{ST}$, $\delta\mu^2$, $D_A$ and $D_S$ are calculated with equal weight given to each population (densities of the populations are not taken into account), or weighted by the densities of the populations (for each pair of population $x$ and $y$, the statistic is weighted with $\frac{n_x \times n_y}{\sum_{z\in \mathcal{X}} n_{z}}$.

\subsection{Statistics measured on the PES and on the coalescent tree}\label{sec:sumstat-famille4}

Statistics measured on the whole PES:
\begin{itemize}
\item number of mutation events along the PES
\item Sackin's index $I_S^n = \sum_{i=1}^n N_i$ with $n$ the number of leaves of the tree and $N_i$ the number of internal nodes from the leave to the root.
\item sum of the branch length
\item sum of the external branch length\\
\end{itemize}

Statistics measured on the coalescent tree:
\begin{itemize}
\item sum of the branch length
\item sum of the external branch length
\item number of cherries (\textit{i.e.} leaves that coalesce together)
\item time before the most recent common ancestor
\item time before the  most recent common ancestor if time of coalescence was neutral
\end{itemize}

\section{Results of the ABC estimation for the sets of pseudo-data}\label{sec:posterior-distrib}

Prior and posterior distribution of parameters estimation determined with the ABC method, for each of the models A-D corresponding to Table \ref{Table:data_parameters}. All descriptive statistics (blue), statistics with complete knowledge of the population (pink), statistics on sampled individuals, with a knowledge of the trait (red) and statistics on sampled individuals, without any knowledge on the trait (orange).\\

For the pseudo-dataset $A$, we also investigate the impact of the number of microsatellites. As can be seen in Fig. \ref{Fig:nb_microsat_all}, the posterior square deviations for each of the parameters remains quite stable. More precision is achieved on $t_{sim}$ when we have more microsatellite data. For the parameters $p\times \sigma_m$ or $q$, we see that a higher number of microsatellites allows to reach a better precision under scenario 4.

\begin{figure}[ht!]
\begin{center}
\includegraphics[width=14cm]{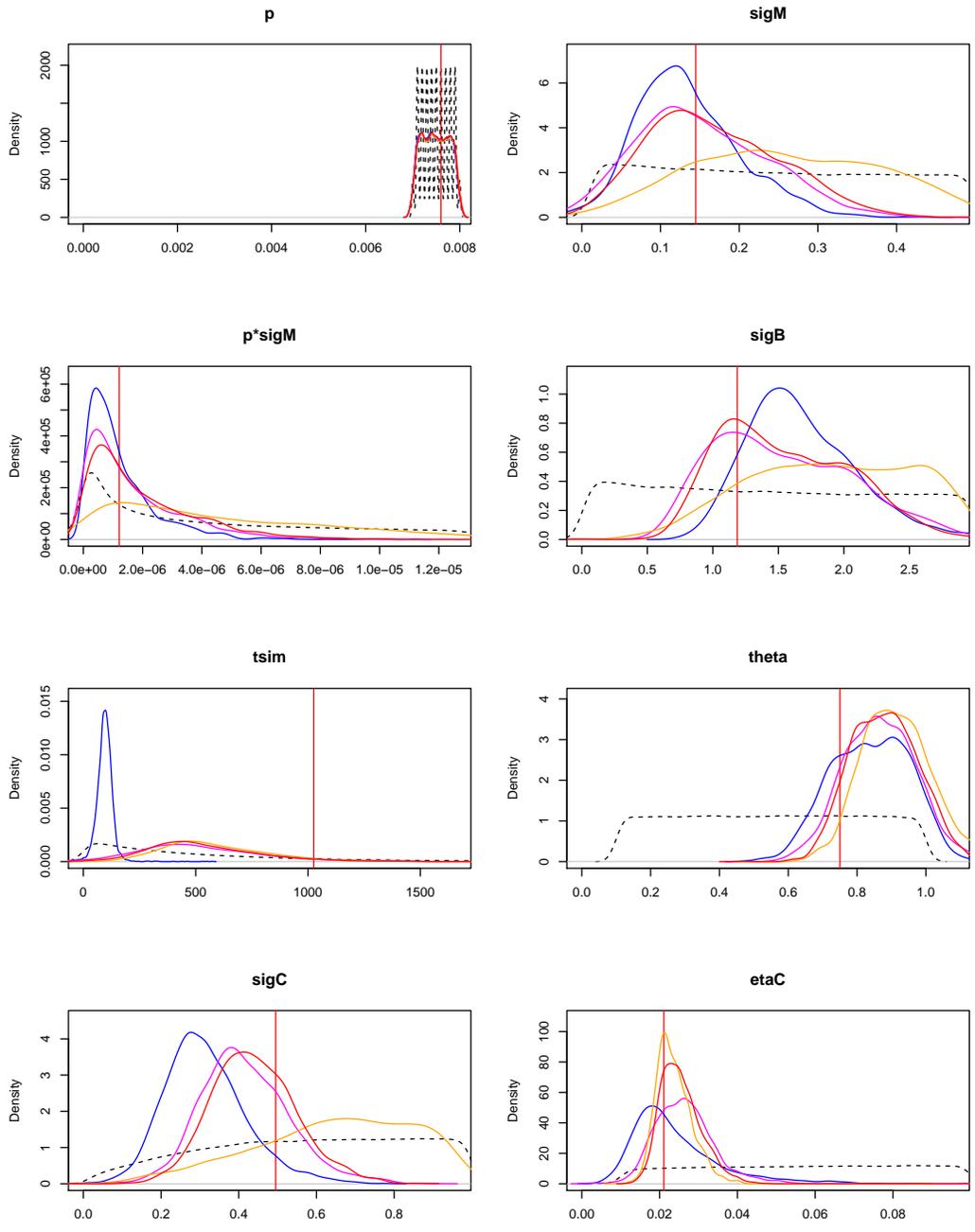}\\
\caption{{\small \textit{Results of the ABC estimation for the pseudo-data B (see \ref{Table:data_parameters})}}}
\end{center}
\end{figure}

\begin{figure}[ht!]
\begin{center}
\includegraphics[width=14cm]{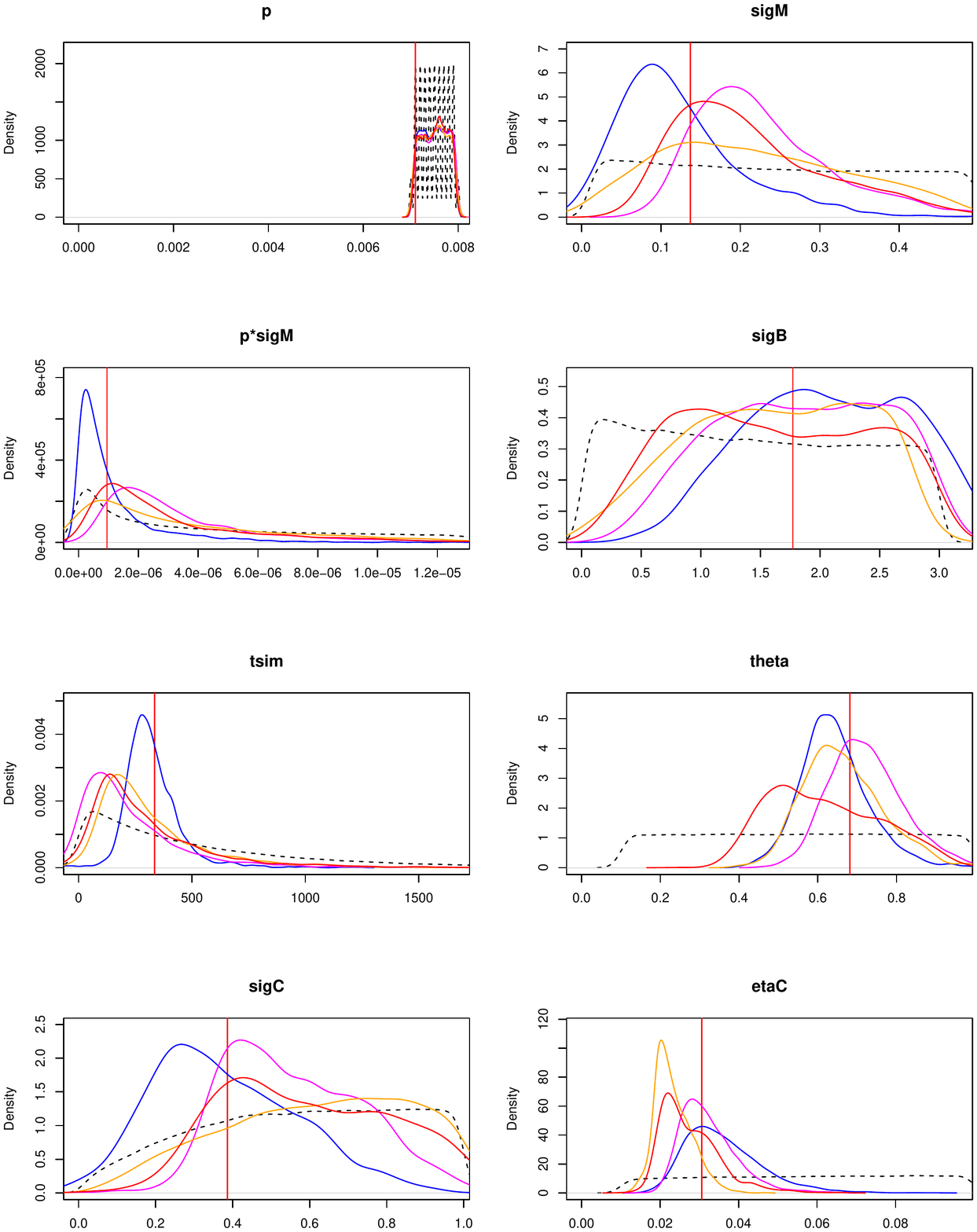}\\
\caption{{\small \textit{Results of the ABC estimation for the pseudo-data C (see \ref{Table:data_parameters})}}}
\end{center}
\end{figure}

\begin{figure}[ht!]
\begin{center}
\includegraphics[width=14cm]{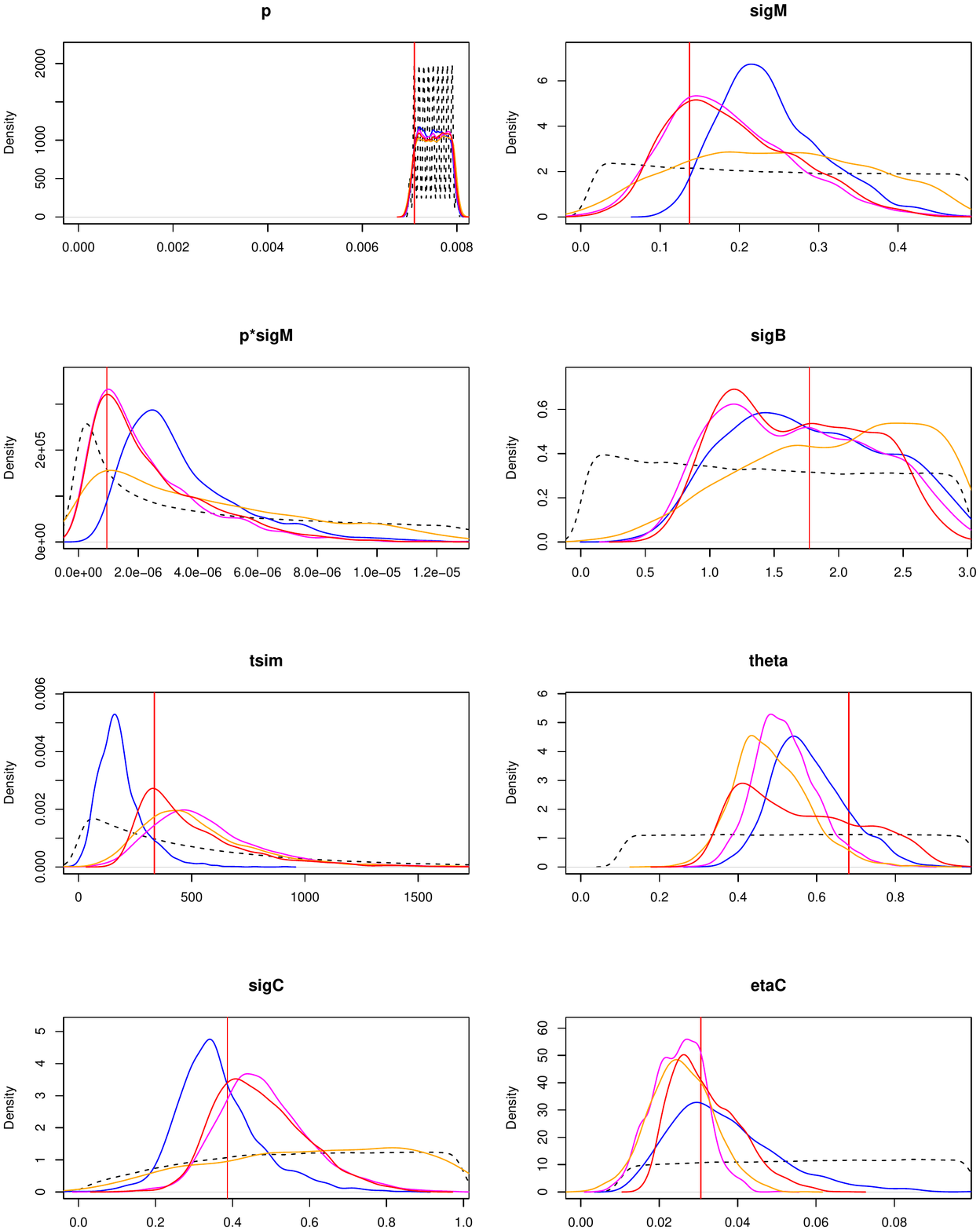}\\
\caption{{\small \textit{Results of the ABC estimation for the pseudo-data D (see \ref{Table:data_parameters})}}}
\end{center}
\end{figure}

\begin{figure}[ht!]
\begin{center}
\includegraphics[width=14cm]{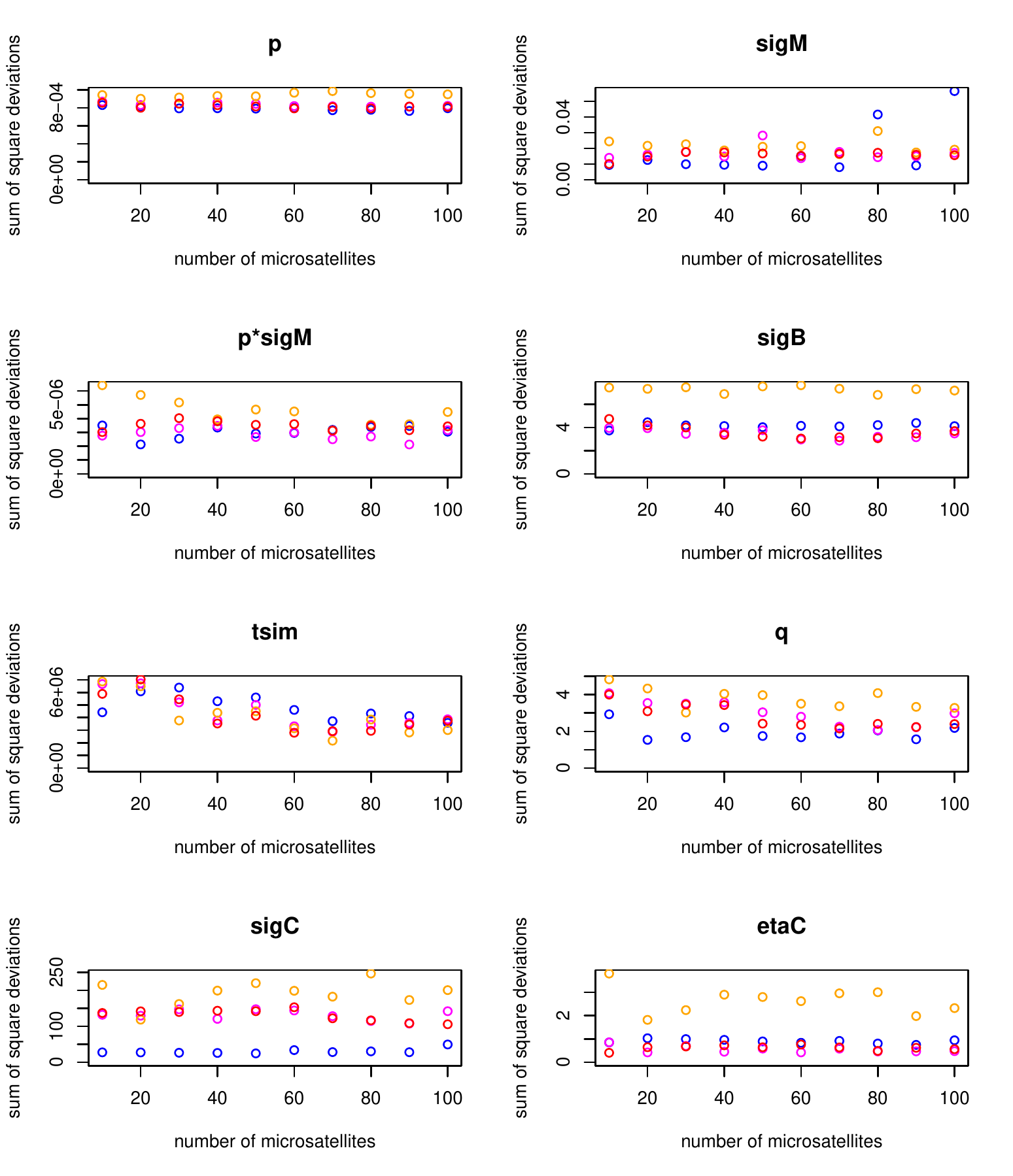}
\caption{\textit{\footnotesize Posterior square deviations for the different parameters under the various estimation scenarios (pseudo-data set $A$, see App. Tab. \ref{Table:data_parameters}): Blue, Scenario 1 (All descriptive statistics are available); Pink, Scenario 2 (data from the totality of the population); Red, Scenario 3 (data from a sample of the population); Orange, Scenario 4 (data from a sample of the population, the traits $x$ is not known). }}\label{Fig:nb_microsat_all}
\end{center}\end{figure}

\clearpage

\section{Does the Doebeli-Dieckmann coalescent follow a Kingman's coalescent?}\label{sec:neutrality-test}
\subsection{Neutrality tests: statistics and procedures}\label{sec:neutrality}

When there is no interaction and when the species play no role, the phylogenies of $n$ individuals are described by Kingman coalescent processes. In absence of interaction but when species still have to be taken into account, we have Kingman coalescent processes nested in the species tree. In the latter case, there is no coalescence between phylogenies belonging to distinct species until the speciation event where both species are reunited into a mother species.\\
In many studies, phylogenies are assumed to stem from neutral models (Kingman coalescent, branching Brownian motions). Then likelihood methods based on this assumption are used for example. The forward-backward coalescent process that is used here brings more complexity but proposes a way to take into account the interactions between species and the fact that the phylogenetic trees might be imbalanced. A first natural step is to check that the new model produces trees that could not be seen as generated by usual models (our choice goes here to Kingman coalescent). This shows that structure has been taken into account.

\textit{Test statistics.} Following the path opened by Fu and Li \cite{fuli}, we can consider several statistics that are usually chosen as test statistics to test the neutrality of a phylogenetic tree:
\begin{itemize}
\item the distribution of cherries, i.e. the number $C_n$ of internal nodes of the tree having two leaves as descendants. The normalized distribution of the number of cherries for a Kingman coalescent follows a Gaussian distribution (see \cite{blumfrancois2005-Sackin}):
\begin{equation}\label{test-stat:Cn}
\lim_{n\rightarrow +\infty}\frac{C_n - \frac{n}{3}}{\sqrt{\frac{2n}{45}}} \stackrel{(d)}{=}\mathcal{N}(0,1)
\end{equation}
\item the length $L_n$ of external branches, i.e. edges of the phylogenetic tree of the $n$ sampled individuals admitting one of the $n$ leaves as extremity. A beautiful result by Janson and Kersting \cite{jansonkersting} shows that when $n$ converges to infinity, the distribution of the normalized total length of external branches of a Kingman coalescent converges to a Gaussian distribution:
\begin{equation}\label{test-stat:Ln}
\lim_{n\rightarrow +\infty}\frac{L_n - \E(L_n)}{\sqrt{\frac{ \Var(L_n)}{2}}} \stackrel{(d)}{=} \mathcal{N}(0,1).
\end{equation}
\item the time $\TnMRCA$ to the most recent common ancestor (MRCA). In a Kingman coalescent, this time is distributed as the sum of $n$ independent exponential random variables with respective rates $\frac{b}{\widehat{n}}\frac{i (i-1)}{2}$, $i\in \{1,\dots n\}$, where $\widehat{n}$ is the density at equilibrium and $b$ is the natural growth rate when there is no species structure in the population.
\end{itemize}

Using the three test statistics that we have presented, we tested whether the observed phylogenies could be described by a Kingman coalescent or not. For the number of cherries and for the external branch length, we computed the renormalized statistics are performed normality tests: a Mann-Whitney test (using the function wilcox.test in R, with unpaired distributions), a Shapiro test and a Kolmogorov-Smirnov test. For the time to the MRCA, we performed an adequation test between the distribution of the MRCA in our model and the distribution of the sum of independent exponential random variables described above. \\
In all the tests, we set the significance level to $\alpha=5\%$ and the null hypothesis $H_0$ is that the observed tree can stem from a Kingman coalescent. If the p-value of the test is lower than this threshold $\alpha$, then we must reject $H_0$ and consider that distributions are different in our model and Kingman's model.\\

To check that these tests perform well, we applied them to simulated Kingman coalescent processes and checked when the null hypotheses were correctly accepted (see App. \ref{sec:test-neutralite-kingman}).\\

\textit{Application to the forward-backward coalescent.}  To test that the forward-backward coalescent produces non-neutral phylogenies, a first idea would be to perform a Monte-Carlo test by simulating several trees from this model and running the neutrality tests on these simulations. However, such result would be dependent on the parameters chosen for the simulations and would also necessitate a Monte-Carlo loop for exploring the set of parameters. Keeping these ideas in mind, we proceed a bit differently. Assume that we had data $\mathbf{x}$ generated by the forward-backward coalescent and \textit{a priori} distributions for the parameters $\theta$. In this Bayesian framework, would the null hypothesis $H_0\ :\ $ ``a phylogeny produced by the forward-backward coalescent is distributed as a Kingman coalescent" be accepted or not?\\

Given the data $\mathbf{x}$, the \textit{a posteriori} distribution of $\theta$ is computed using ABC, yielding an \textit{a posteriori} distribution on the phylogenies conditionally to $\mathbf{x}$. The estimation of the probability $\P\big( H_0 \mbox{ is accepted }\ |\ \mathbf{x}\big)$ can then be approximated by Monte-Carlo using the $N$ simulations performed by the ABC procedure: we use the simulations retained by ABC in order to reject abnormal simulations, and the associated weights produced by the ABC. This provides an approximation of the distribution of the test statistics \eqref{test-stat:Cn} and \eqref{test-stat:Ln} conditionally on $S_{\sc{obs}}=S(\mathbf{x})$ yielding in turn an approximation of $\P\big( H_0 \mbox{ is accepted }\ |\ S(\mathbf{x})\big)$:
\[\widehat{\P}\big( H_0 \mbox{ is accepted }\ |\ S(\mathbf{x})\big)=\sum_{i=1}^N \frac{W_i(\mathbf{x})}{\sum_{j=1}^N W_j(\mathbf{x})} \ind_{\frac{n}{3}-1.96 \sqrt{\frac{2n}{45}}\leq C_n(i)\leq \frac{n}{3}+1.96 \sqrt{\frac{2n}{45}}},\]
where $C_n(i)$ counts the number of cherries in the simulation number $i$.\\

 Since this question is investigated in a general framework without data, we use a Monte-Carlo approach to sum over the possible data $\mathbf{x}$. Averaging this result over datasets $\mathbf{x}$ conditionally on the number of species at the final time of the simulations allows to estimate the probability of accepting $H_0$ conditionally on the number of species at the sampling time. We condition on the number of species at the sampling time as we noticed that this variable impacts the outcome of the test. When there is only one species for example, our model predicts that the phylogenies look like Kingman coalescents (with a possible multiple merge at the first adaptive jump encountered). 
 However, with a growing number of species, we expect a deviation from the Kingman model due to forced coalescence at the creation of each species, and to the different coalescence rates between species.


\begin{itemize}
\item for each number of species ($m$ from $1$ to $10$), we randomly chose a simulation that we used as pseudo-data $\mathbf{x}$ in the ABC an Alyssa. We repeated this action $100$ times (providing $\mathbf{x}=\mathbf{x}_1,\dots \mathbf{x}_{100}$ for each value of $m$).
\item for each $\mathbf{x}$, an ABC is performed and provides an approximated \textit{a posteriori} empirical distribution conditionally on $S(\mathbf{x})$: to each simulations $i\in \{1,\dots N\}$ is associated a weight $W_i(\mathbf{x})$, see \eqref{def:posterior}. For each of these weighted simulations, the statistics $C_n$ and $L_n$ can be computed and the normality tests \eqref{test-stat:Cn} and \eqref{test-stat:Ln} can be performed. We deduce from this the approximated \textit{a posteriori} distribution of the p-values.
    \[\sum_{i=1}^N \frac{W_i(\mathbf{x})}{\sum_{j=1}^N W_j(\mathbf{x})} \delta_{{\scriptsize\mbox{p-value for the simulation }i}}.\]
\item Averaging over $\mathbf{x}$, conditionally on the number of species at the sampling time, we obtain an estimator of the probability that $H_0$ is rejected conditionally to the number of species at the sampling time.
\end{itemize}
The normality of the test statistics \eqref{test-stat:Cn} and \eqref{test-stat:Ln} are performed on the same ABC runs. Fig. \ref{Fig:Branch-cherries-lines3-6} gives the distributions of the normalized external branch length and the number of cherries of the weighted simulations from the ABC analysis presented in the previous section.\\

We set the significance level to $\alpha=5\%$ for all these tests. If the p-value of the test is lower than this threshold, then we must reject $H_0$ and consider our distribution does not follow a Gaussian distribution.
The results for the external branch lengths and cherries are shown in Fig. \ref{Fig:neutral-test}(a) and (b) and reveal that our coalescent trees differ from Kingman coalescent trees.\\

We also wanted to determine whether the coalescence time depended on the number of species. For this, we made a classical test of mean comparison associated with the null hypothesis: $H_0 \ :\ $``the mean time to MRCA in the data is equal to the mean time to MRCA in a Kingman coalescent''. This test is based on the following Student  statistic:
\begin{equation}\label{test:comparaison-moyennes}
\sqrt{n}\frac{\mbox{mean}(\mbox{distrib1}) - \mbox{mean}(\mbox{distrib2})}{\sqrt{\mbox{sd}(\mbox{distrib1}) + \mbox{sd}(\mbox{distrib2})}} \xrightarrow{d} {\cal N}(0,1).
\end{equation}
One can see in Fig \ref{Fig:neutral-test}(c) that the distributions differ, which is confirmed by the test. This confirms the finding obtained with the study of external branch lengths: timescale suffers from ignoring the interaction between species and can lead to false datings.

\subsection{Neutrality of the model without mutation}\label{sec:test-neutralite-kingman}

We first tested the consistency of our model with Kingman coalescent trees when no mutation of the trait can emerge: the population is monomorphic and the trait mutation is set to 0. We simulated trees with no mutation of the trait and tested the normality of the distribution of normalized external branch length and the normalized number of cherries. We also tested the adequation between the distribution of the time to the MRCA in our model and in a Kingman coalescent. The empirical distributions of these three statistics are represented in Fig \ref{Fig:Cn_Ln_tMRCA_p0_all}. Visually, these empirical distributions fit the targeted distributions under $H_0$ that are given in Section \ref{sec:neutrality}.

\begin{figure}[ht!]
\begin{center}
\begin{tabular}{ccc}
\includegraphics[height=5cm]{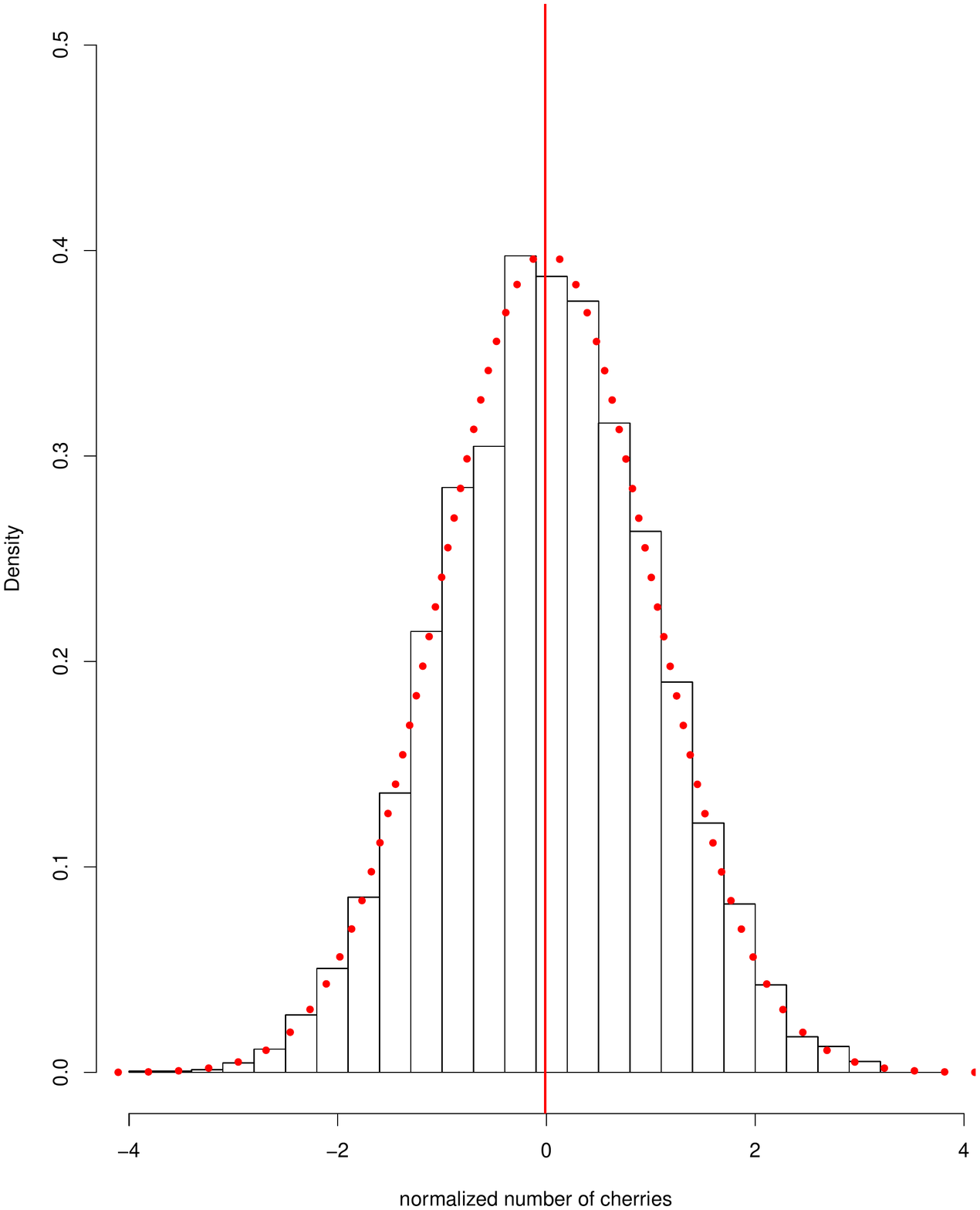} &
\includegraphics[height=5cm]{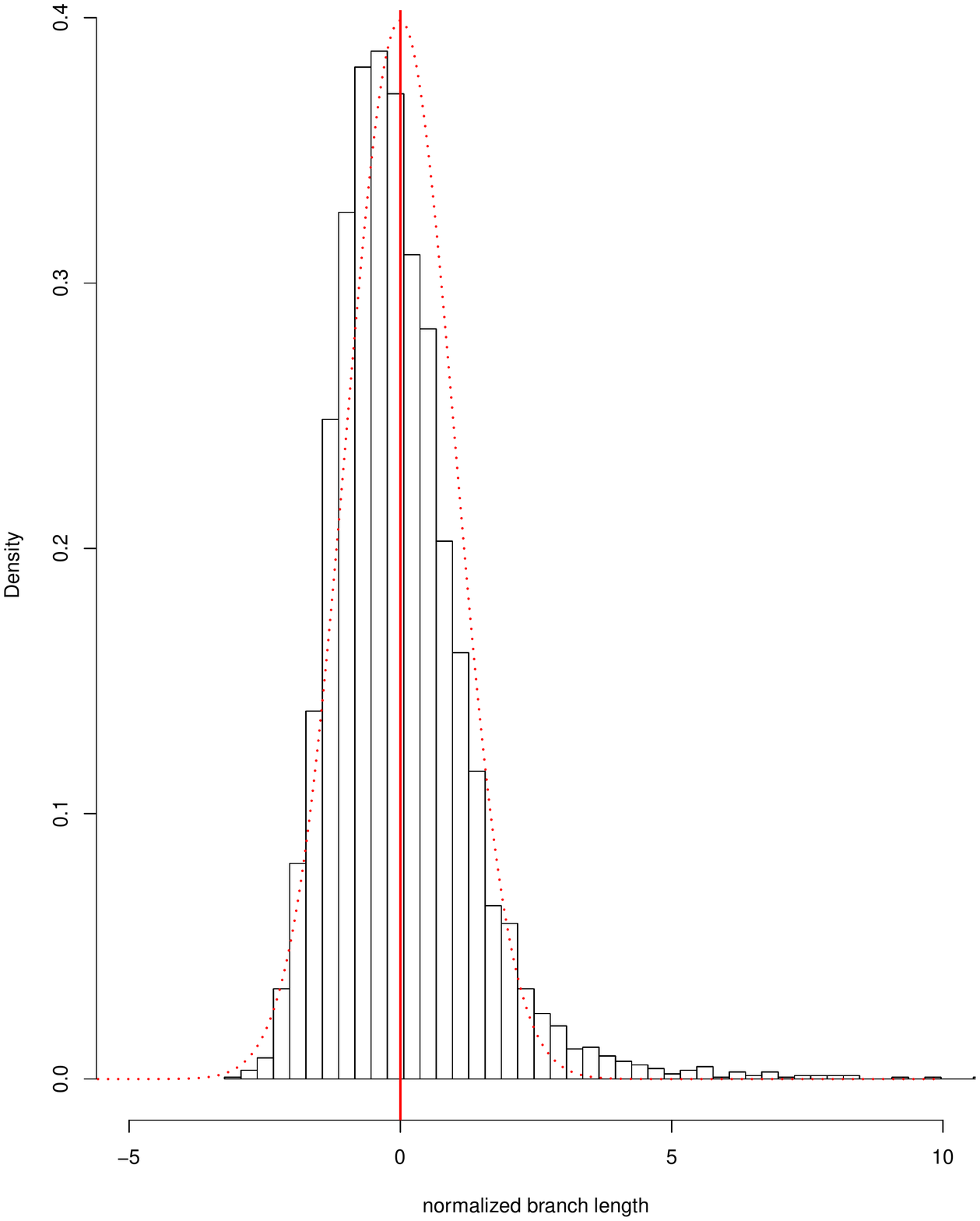} &
\includegraphics[height=5cm]{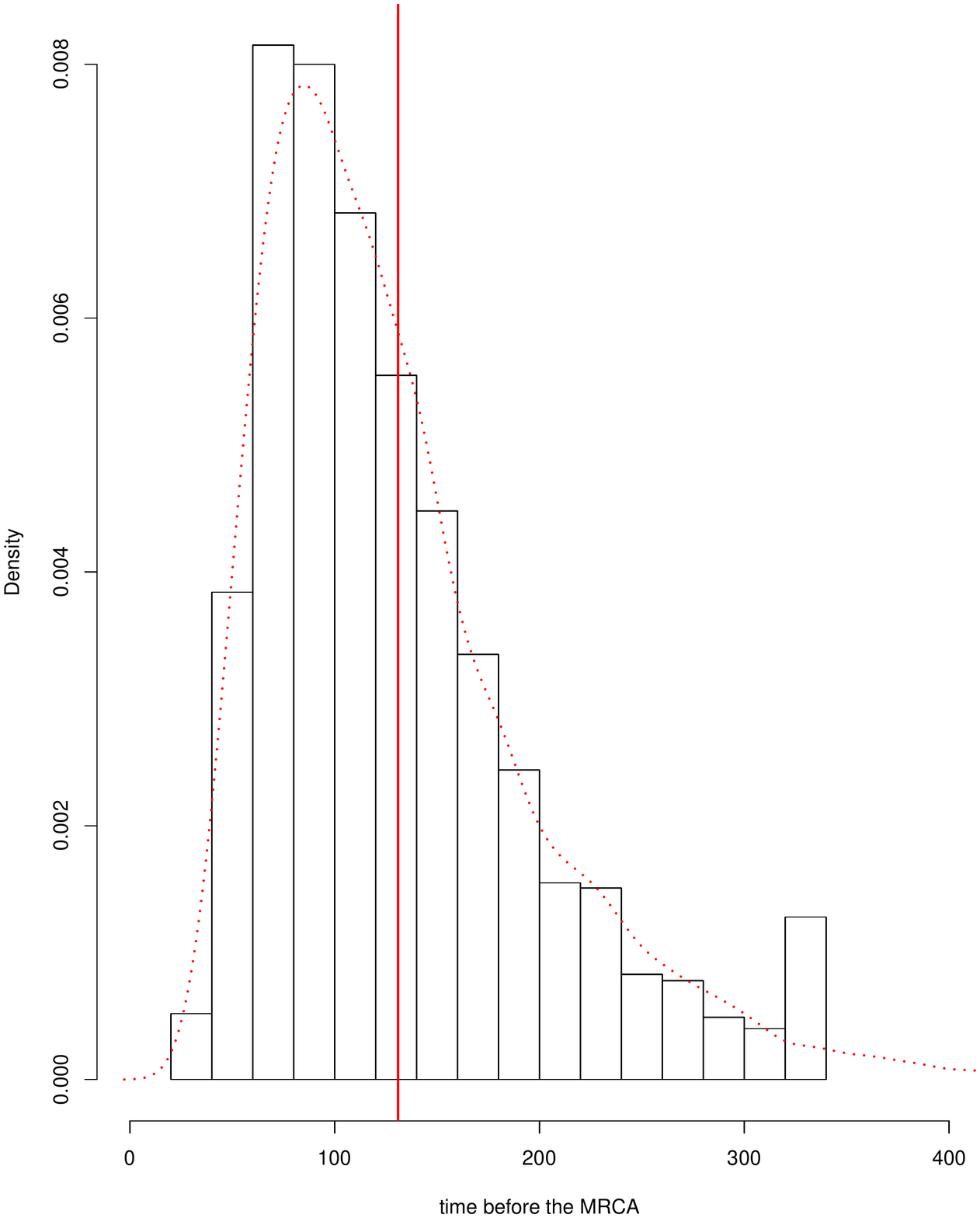}\\
(a) & (b) & (c)
\end{tabular}
\caption{\textit{{\small Distribution of (a) the normalized number of cherries, (b) the normalized external branch length of the coalescent tree and (c) the time before the MRCA of $1000$ simulations with no mutation. The means of the distributions are represented by red lines, and the distribution assuming Kingman's coalescent are represented by dashed red lines.}}}\label{Fig:Cn_Ln_tMRCA_p0_all}
\end{center}
\end{figure}

Fig \ref{Fig:Cn_Ln_tMRCA_p0_sample} represents the results of the tests presented in Section \ref{sec:neutrality} depending on the number of simulations, for samples of 100 simulations (as will be done in Section \ref{sec:docoalescenttreessignificantlydifferfromKingman} in the body of the paper). We see that most tests accept neutrality, except Shapiro-Wilkinson test once, but this test is known to be overly conservative when the size of the sample is a bit large.

\begin{figure}[!ht]
\begin{center}
\begin{tabular}{ccc}
\includegraphics[height=5cm]{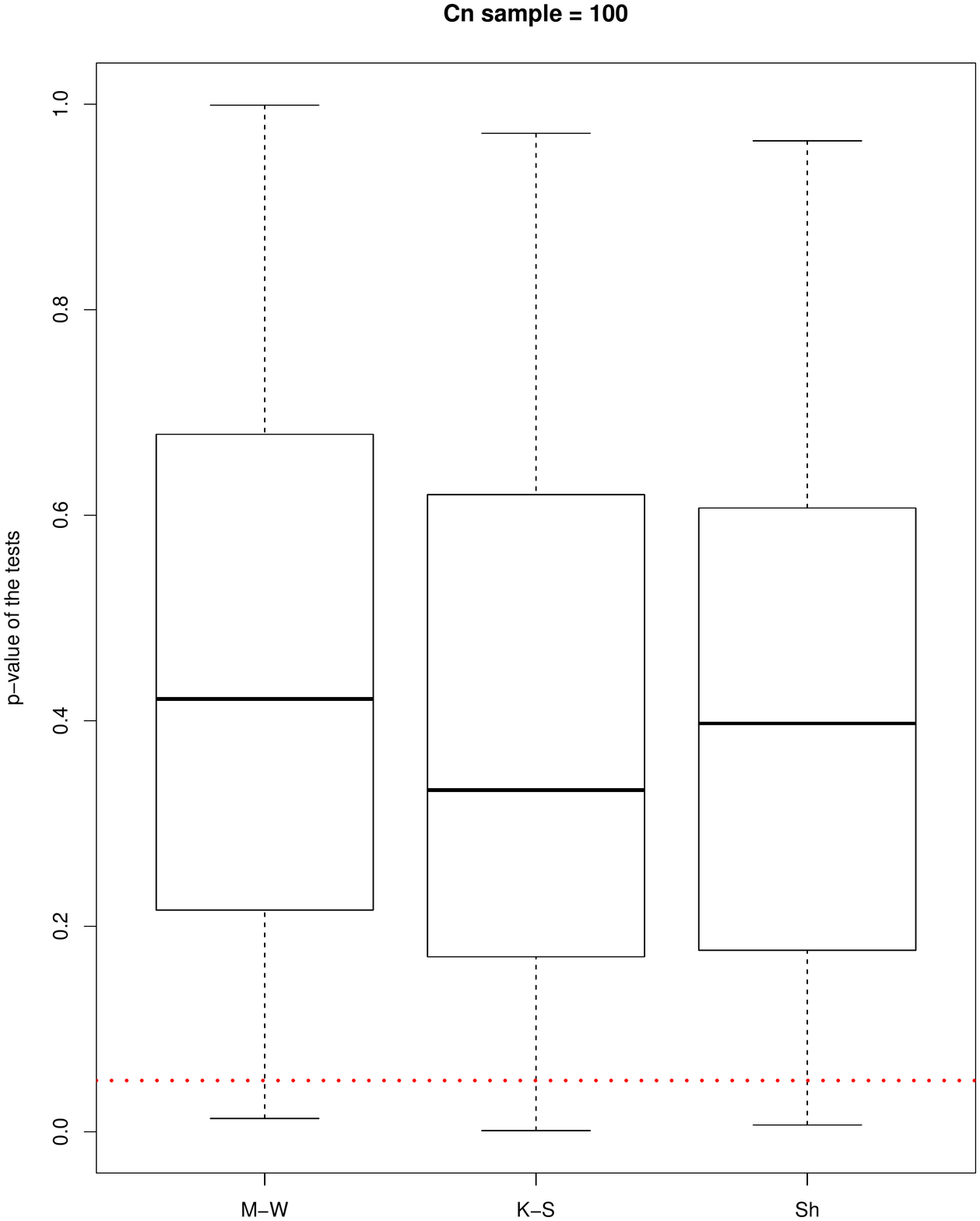} &
 \includegraphics[height=5cm]{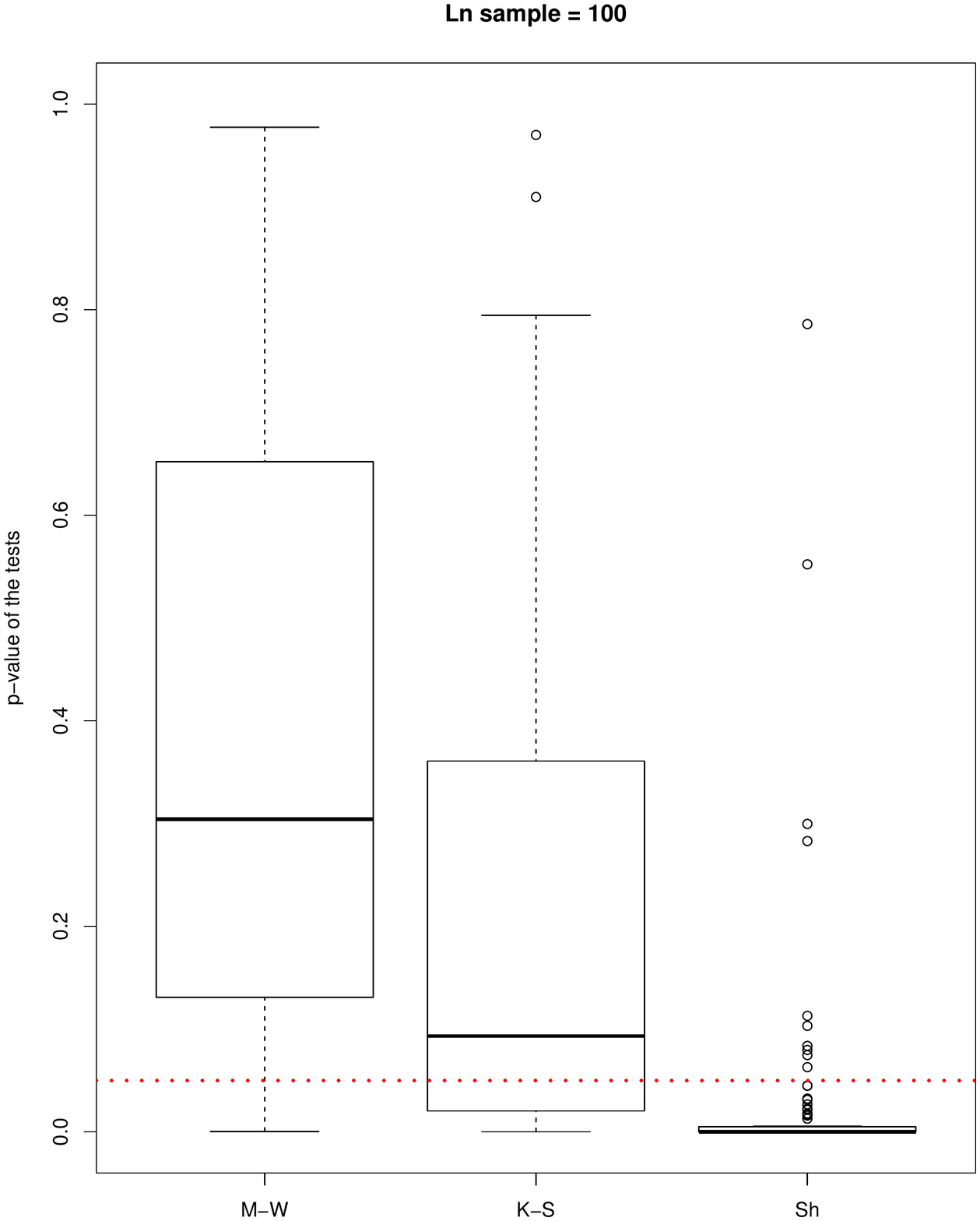} &
 \includegraphics[height=5cm]{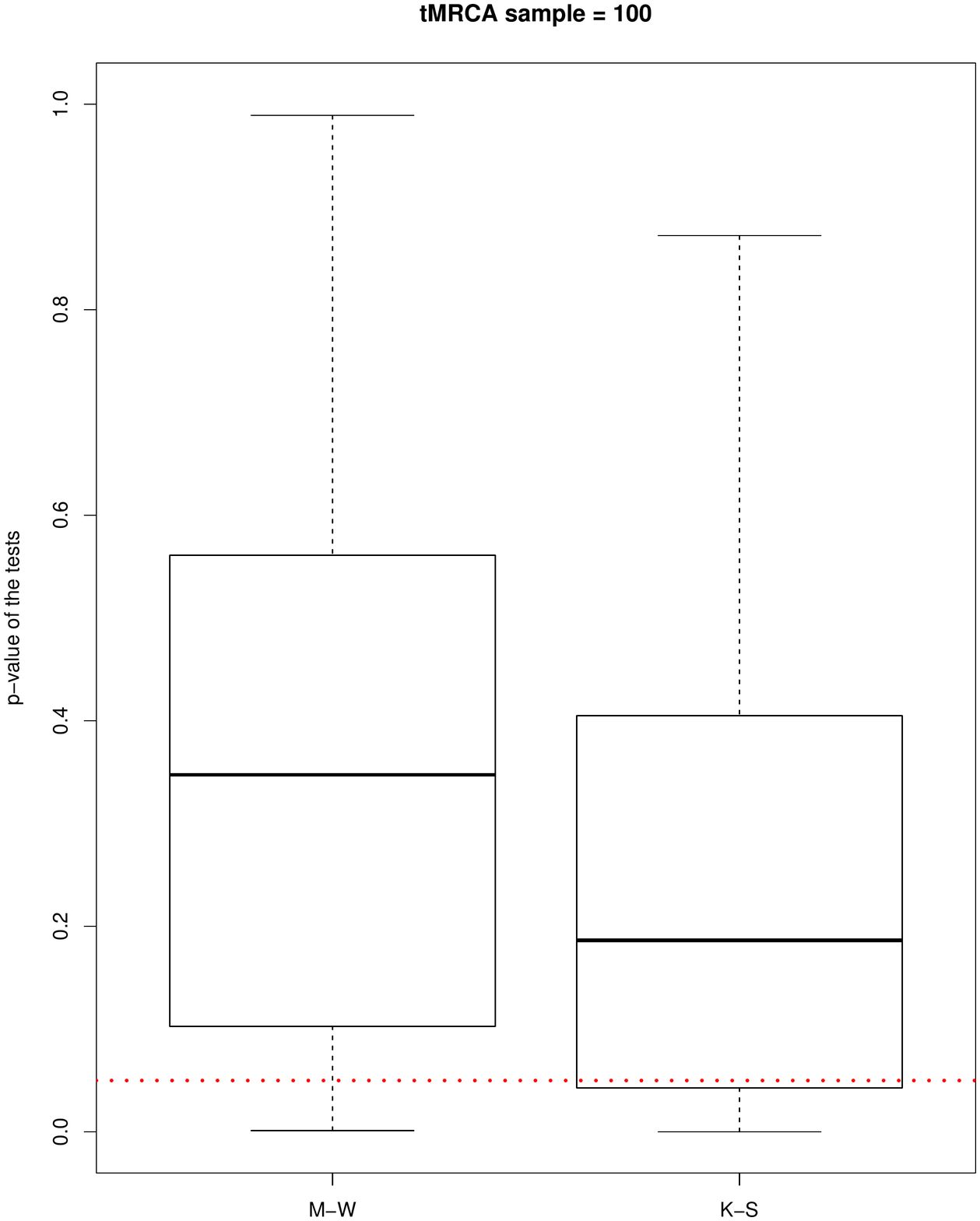}\\
 (a) & (b) & (c)
\end{tabular}
\caption{\textit{{\small
Tests of neutrality of three statistics depending on the number of sampled simulations: (a) distribution of normalized number of cherries ($C_n$)
(b) distribution of normalized external branch length of coalescent trees ($L_n$) and (c) time before the MRCA. Box-plots of the p-values of Mann-Whitney (M-W) and Kolmogorov-Smirnov (K-S) and Shapiro (Sh) tests for $100$ random samples of $100$ simulations are shown. The threshold value of rejection of $H_0$ (0.05) is represented by the dashed red line. If the p-values are inferior to this threshold, the distributions are statistically different from the targeted distribution under $H_0$. The p-value of the tests computed on all simulations are depicted by the blue symbols.}}} \label{Fig:Cn_Ln_tMRCA_p0_sample}
\end{center}
\end{figure}

\clearpage
\subsection{Neutrality test for the model with mutation}

\begin{figure}[ht!]
\begin{tabular}{cccc}
\includegraphics[height=3.5cm,width=4cm]{2019_10_16_Hist_Ln_simulA.pdf}&
\includegraphics[height=3.5cm,width=4cm]{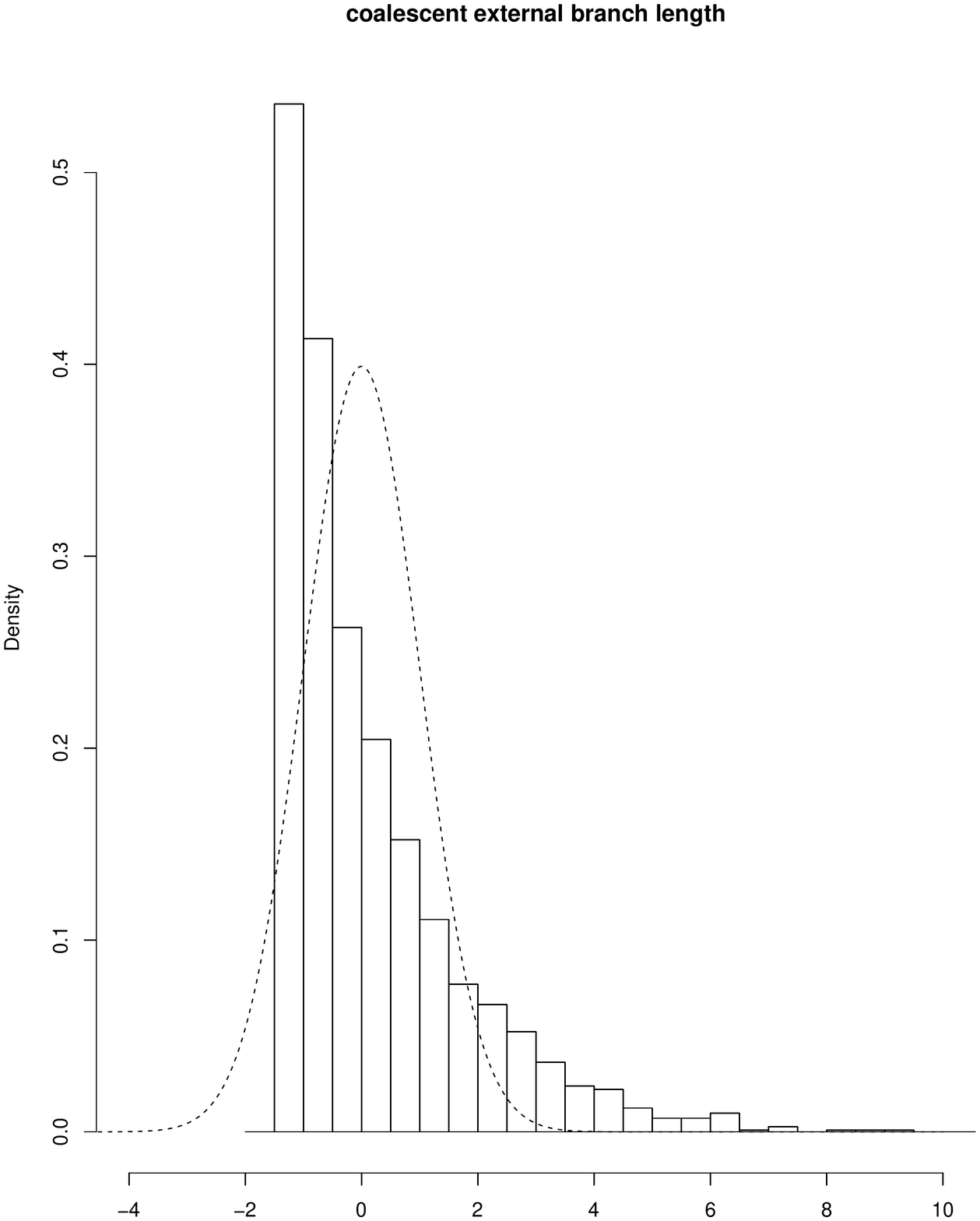} &
\includegraphics[height=3.5cm,width=4cm]{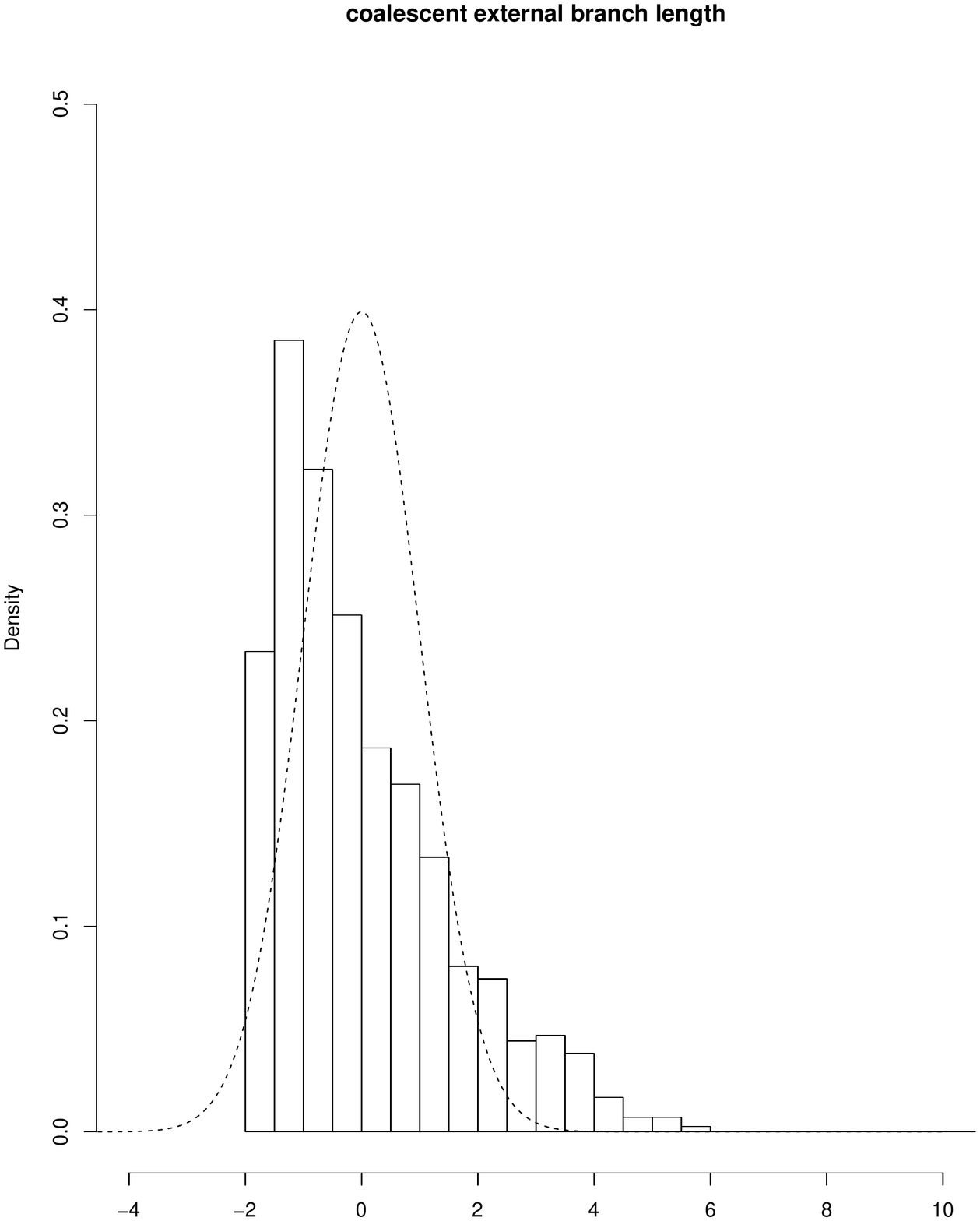} &
 \includegraphics[height=3.5cm,width=4cm]{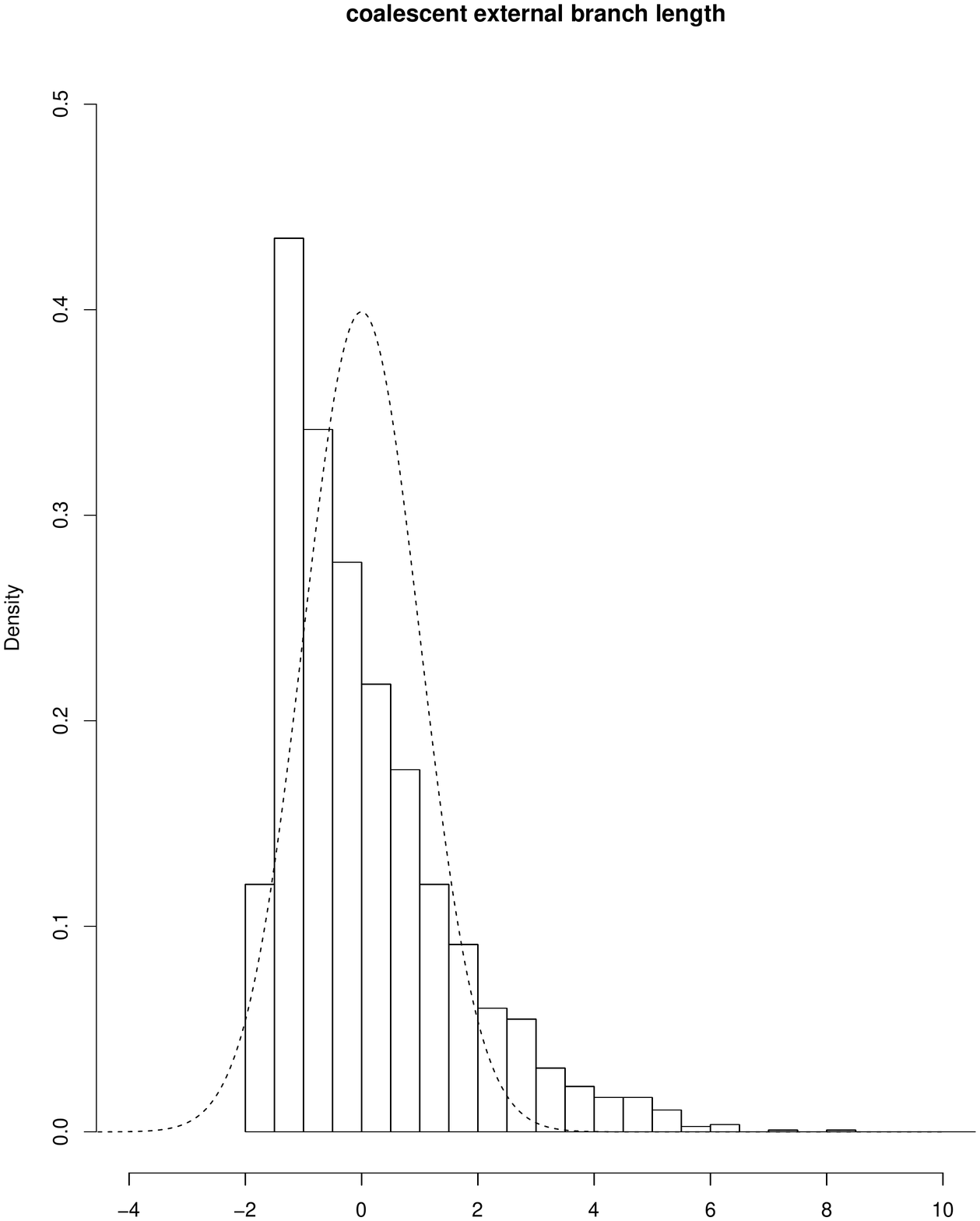}\\
 A & B & C & D
\end{tabular}
\caption{\textit{{\small Histograms of the renormalized external branch lengths produced by the ABC on the pseudo-data A to D. The dashed line represents the distribution followed by Kingman coalescent (Gaussian distribution)}}}\label{Fig:Branch-lines3-6}
\end{figure}

\begin{figure}[!ht]
\begin{tabular}{cccc}
\includegraphics[height=3.5cm,width=4cm]{2019_10_16_Hist_Cn_simulA.pdf} &
\includegraphics[height=3.5cm,width=4cm]{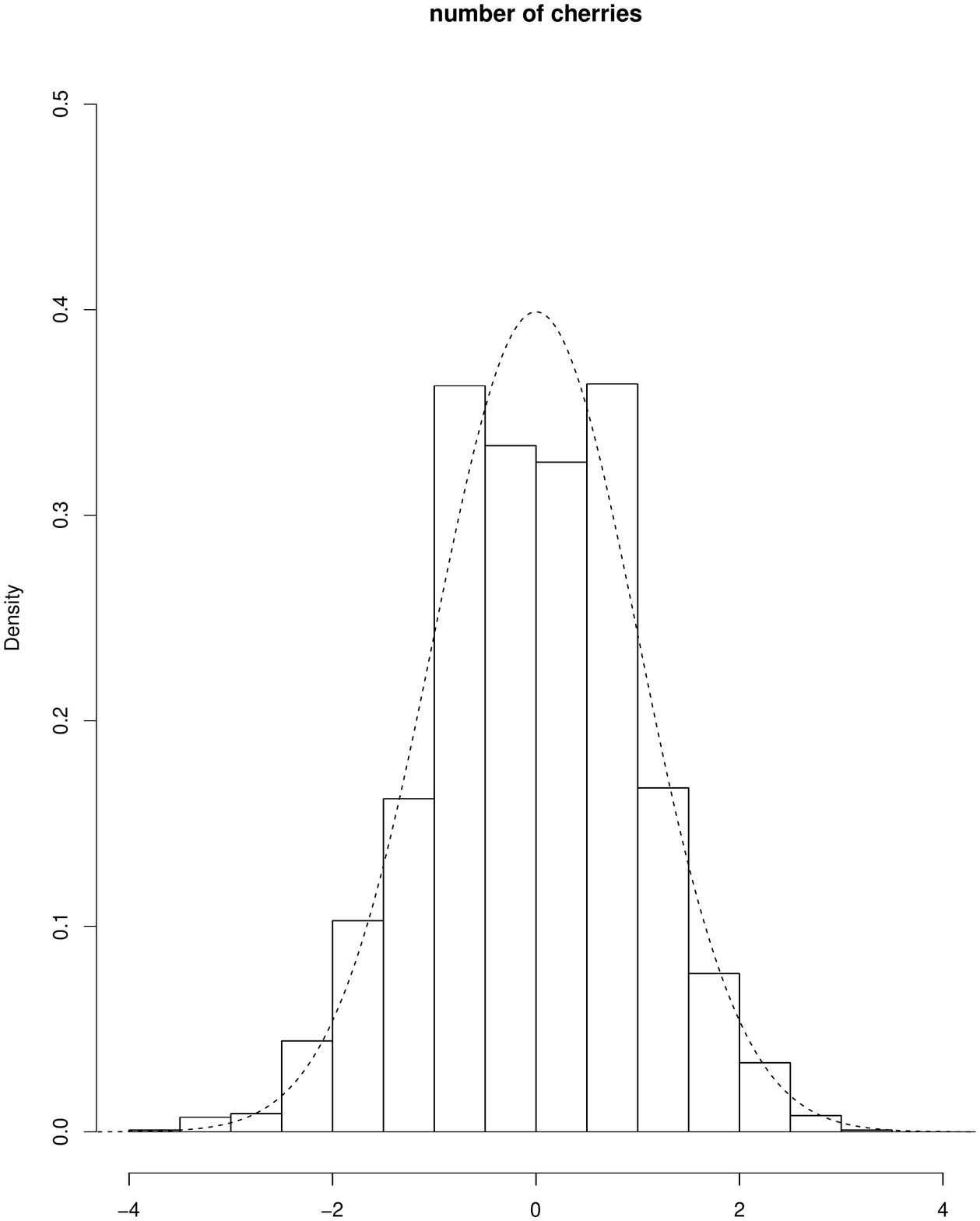} &
\includegraphics[height=3.5cm,width=4cm]{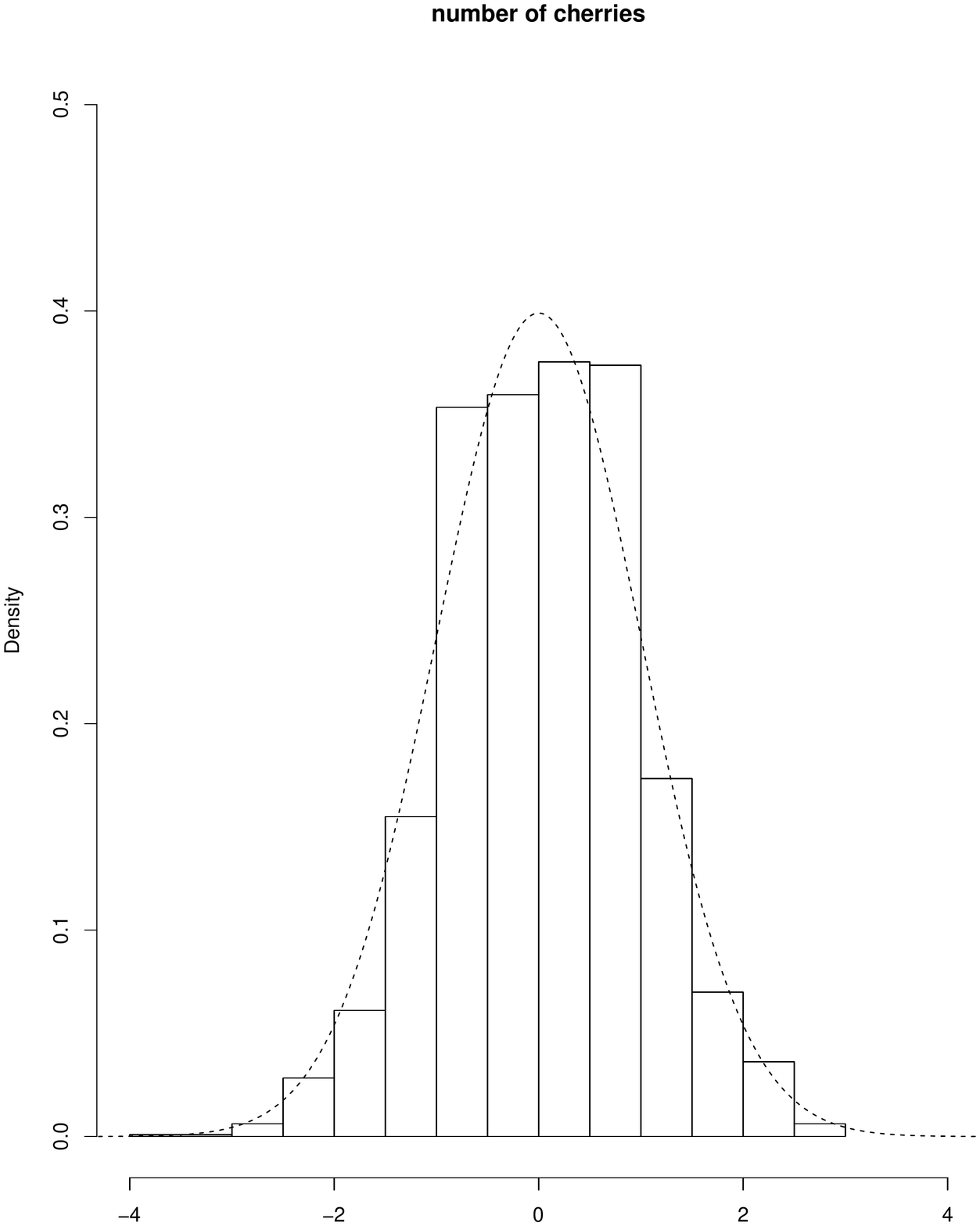} &
\includegraphics[height=3.5cm,width=4cm]{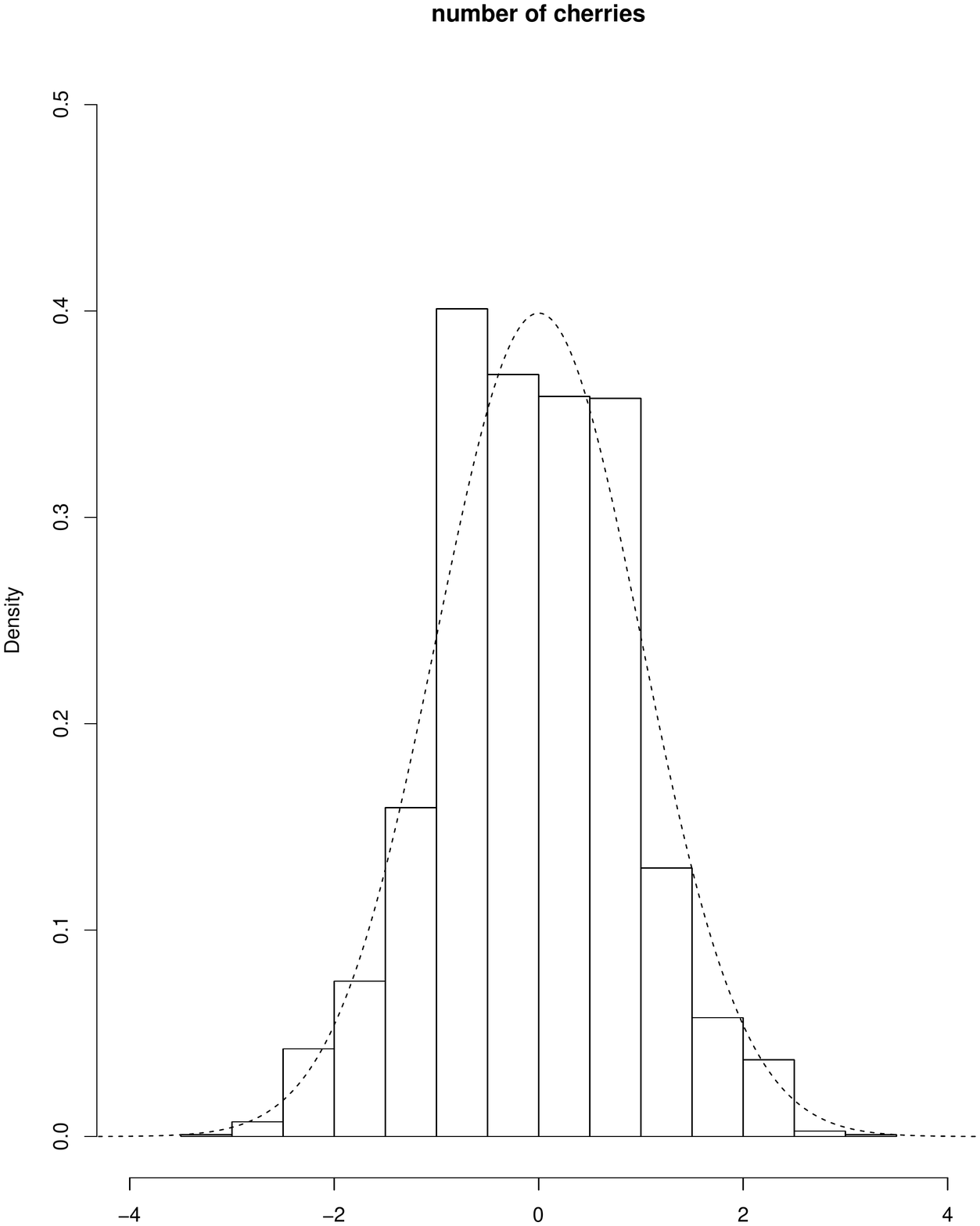}\\
A & B & C & D
\end{tabular}
\caption{\textit{{\small
Histograms of the renormalized number of cherries produced by the ABC on the pseudo-data A to D. The dashed line represents the distribution followed by Kingman coalescent (Gaussian distribution)}}}\label{Fig:Cherries-lines3-6}
\end{figure}

\begin{figure}[!ht]
\begin{tabular}{cccc}
\includegraphics[height=3.5cm,width=4cm]{2019_10_22_Hist_Tmrca_simulA.pdf} &
\includegraphics[height=3.5cm,width=4cm]{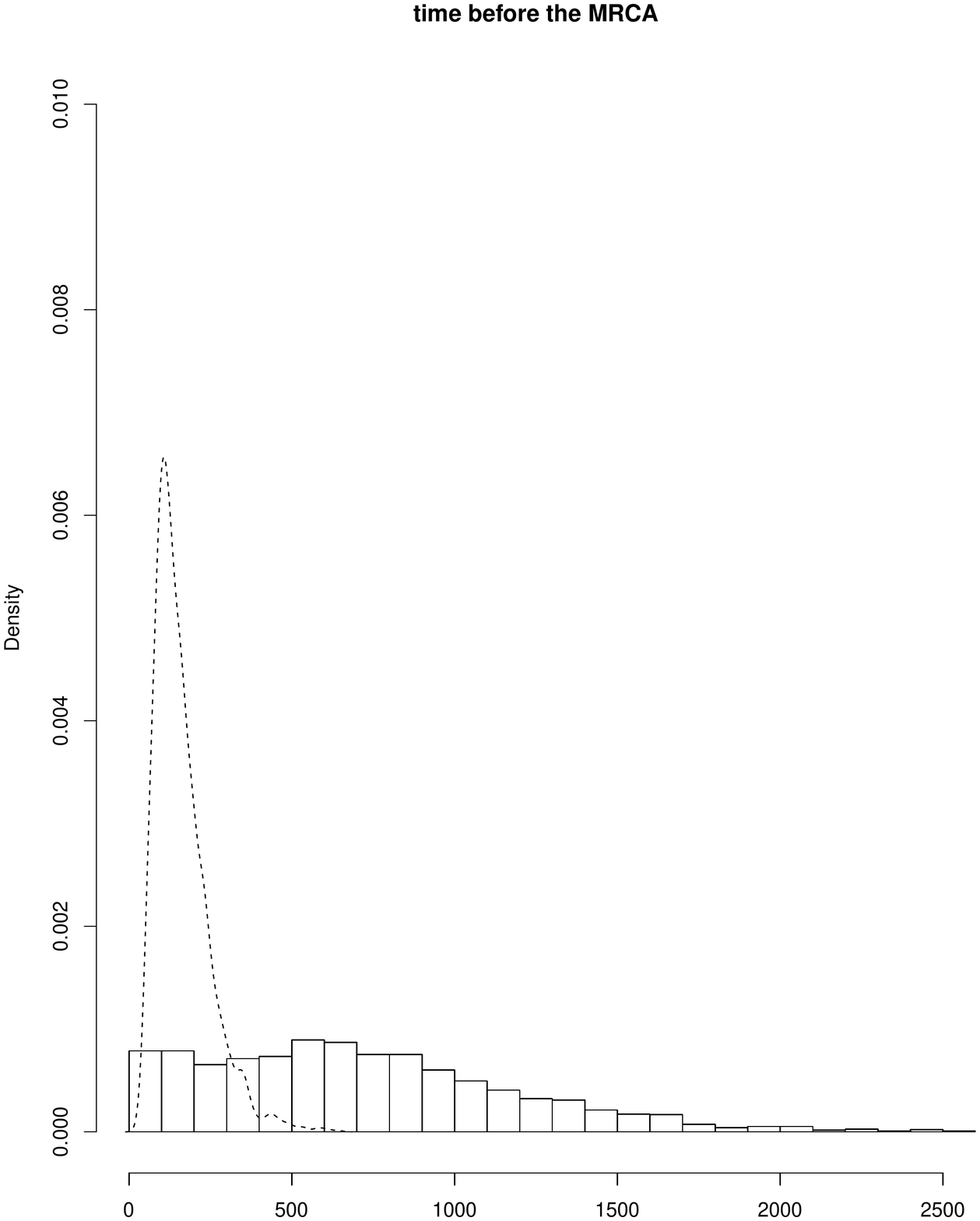} &
\includegraphics[height=3.5cm,width=4cm]{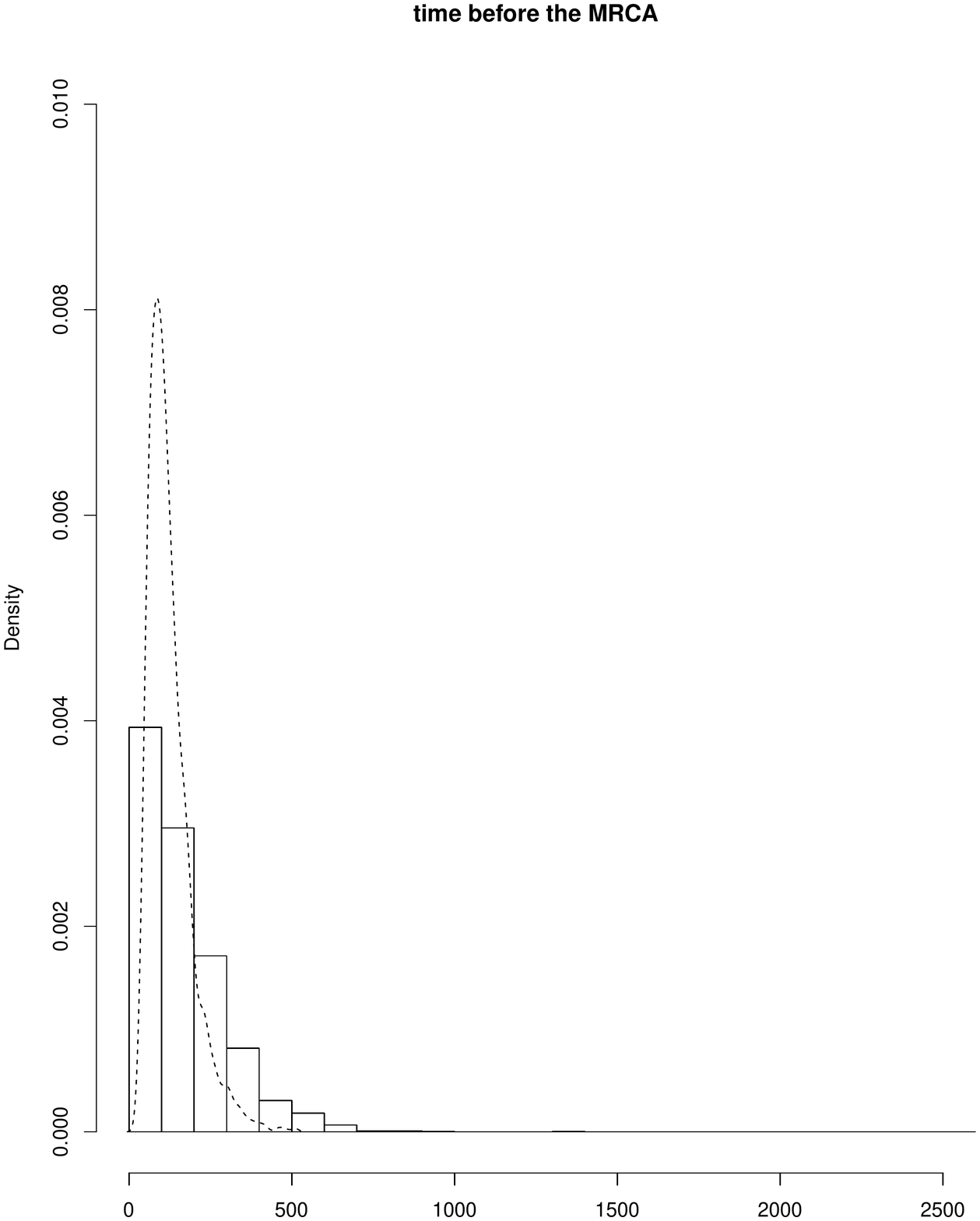} &
\includegraphics[height=3.5cm,width=4cm]{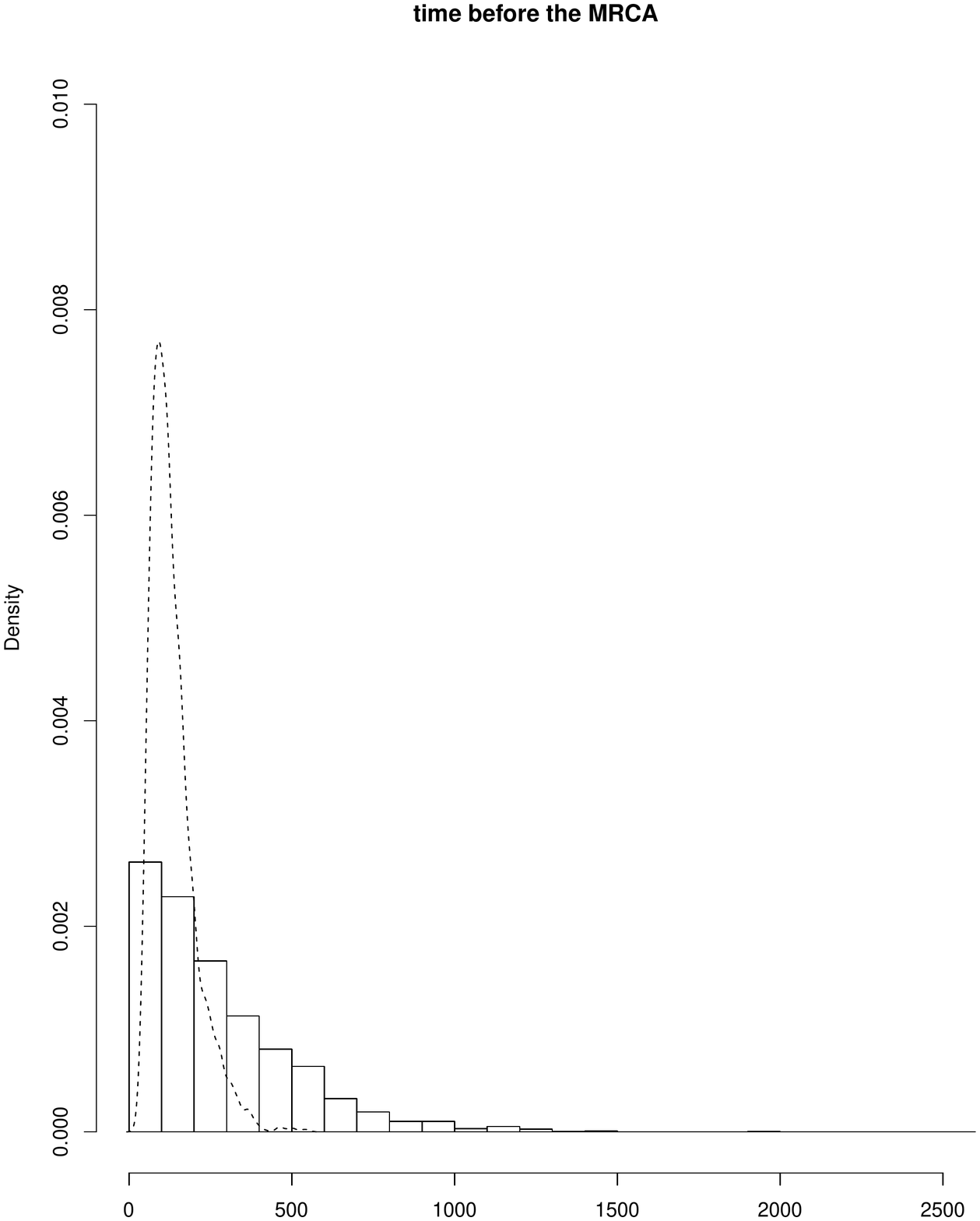}\\
A & B & C & D
\end{tabular}
\caption{\textit{{\small
Histograms of the time to the MRCA produced by the ABC on the pseudo-data A to D. The dashed line represents the distribution followed by Kingman coalescent (obtained by simulations)}}}\label{Fig:tMRCA-lines3-6}
\end{figure}

\section{Application to patrilineal and cognatic populations in Central Asia}

\subsection{Distances between populations}

In Figure \ref{Fig:carte}(a), it appears that the populations that are considered are distributed roughly along a curve that is plotted in Fig. \ref{Fig:carte}(b).
The interpolation curve corresponds to a polynomial of degree 3 giving the latitude $y$ as a function of the longitude $x$, and that is fitted by ordinary least squares:
\[y=P(x)=673.4-25.13 \ x+0.327\ x^2-1.39\ 10^{-3}\ x^3.\]
The $R^2$ associated with this regression is 92.52\%. \\

The populations are then projected on this curve and the distances between two locations are then computed using the line integral. Hence, two populations at locations $z_0=(x_0,y_0)$ and $z_1=(x_1,y_1)$ on the graph of $x\mapsto P(x)$ are considered at distance:
\[d(z_0,z_1)=\int_{x_0}^{x_1} \sqrt{1+P'(x)} dx.\]

\subsection{Test of the ABC procedure }\label{append:chaix-simul}
Our goal is to test the null hypothesis
\begin{equation}
H_0\ : \ b_0=b_1\label{test:H_0-annexe}
\end{equation}
with the alternative hypothesis $H_a\ : b_0<b_1$.
For that 20,000 simulations have been performed with parameters randomly drawn in their prior distribution with an additional constraint that either $b_0=b_1$ or $b_0<b_1$, constituting a total of 20,000 simulations. An additional constraint was that both social models survive, so that cases with $b_0<b_1$ and $b_0=b_1$ clearly have different dynamics. Two hundred pseudo-data sets are chosen among these 20,000 simulations: 100 with $b_0=b_1$ and 100 with $b_0<b_1$. \\

We first used the ABC procedure to estimate the posterior probabilities (see Fig. \ref{fig:ABC-Asian-Simul}). Our results show that for most parameters, the estimate parameters are close to their true values.
\clearpage

\begin{figure}[ht!]
\begin{center}
\begin{tabular}{cc}
(a) & \includegraphics[width=14cm]{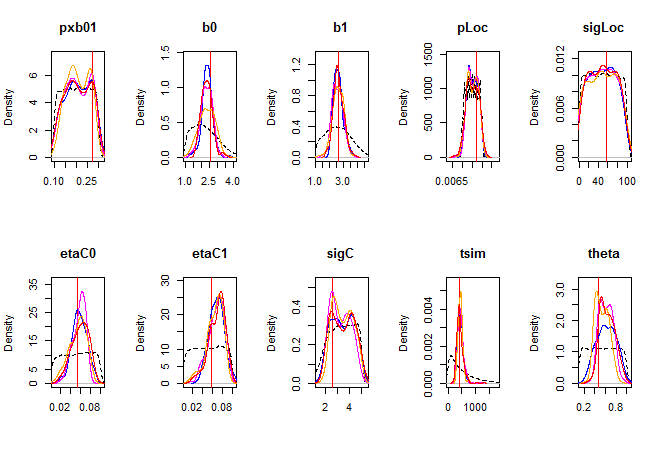}\\
(b) & \includegraphics[width=14cm]{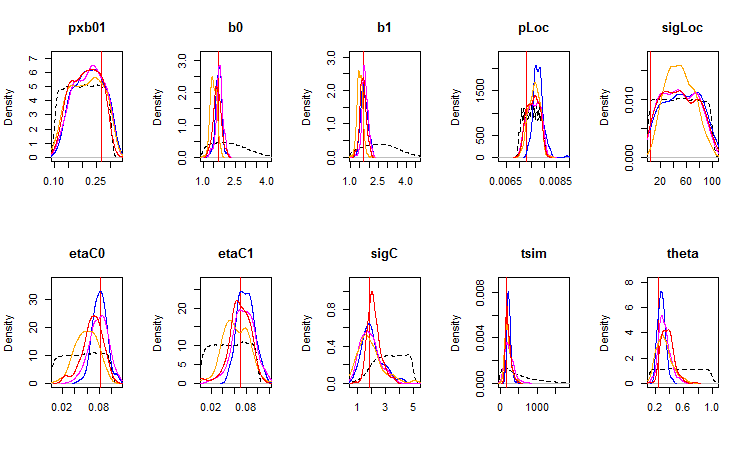}
\end{tabular}
\end{center}
\caption{{\small \textit{Results of the ABC estimation for one of the simulated datasets. (a) when the dataset satisfies $b_1>b_0$, (b) when the dataset satisfies $b_0=b_1$. Legend as in Fig.\ref{Fig:PriorPost_line3}.}}
}\label{fig:ABC-Asian-Simul}
\end{figure}

Second, the approximate posterior probabilities that $b_0=b_1$ and $b_0<b_1$, respectively noted $\widehat{\P}(H_0 \ |\ S_{\sc{obs}})$ and $\widehat{\P}(H_a\ |\ S_{\sc{obs}})=1-\widehat{\P}(H_0 \ |\ S_{\sc{obs}})$, were computed such that
\begin{equation}\label{eq:teststat}
\widehat{\P}(H_a\ |\ S_{\sc{obs}})=\sum_{i=1}^{20,000} \ind_{b_1^{(i)}>b_0^{(i)}} W_i,
\end{equation}
where $b_0^{(i)}$ and $b_1^{(i)}$ are the birth rate in the simulation $i\in \{1,\dots 20,000\}$, and where $W_i$ is the weight defined in App. \ref{sec:ABC-recap}. We compute these quantities \eqref{eq:teststat} for the real data from Central Asia, but also for 200 `training datasets': 100 datasets chosen from the simulations with $b_0=b_1$ presented above and 100 datasets chosen from the simulations with $b_0<b_1$. Computing \eqref{eq:teststat} for these 200 training sets provides an indication of the distribution of this statistics under $H_0$ and $H_a$: for example, we find that the medians of the empirical distributions under $H_0$ and $H_a$ are respectively $0.4335$ and $0.5843$. This is represented in Fig. \ref{fig:test1} (a).\\

The $H_0$ hypothesis is rejected if $\widehat{\P}(H_a\ |\ S_{\sc{obs}})>\alpha$ where $\alpha$ is a threshold parameter chosen in order to minimize the probabilities of Type I and Type II errors. To approximate these, we will use the 200 `training datasets' defined above. A natural choice is $\alpha=0.5$ that lies in the middle of the two medians computed in the preceding paragraph. Another possibility is to minimize the sum of these probabilities for different $\alpha$ (Fig. \ref{fig:test1}(c)), which leads to $\alpha=0.49$, close to $\alpha=0.5$.\\
For $\alpha=0.5$, approximation of the probabilities of Types I and II errors based on the 200 tests performed on simulations are shown in Table \ref{tab:test}.

\begin{table}[!ht]
\begin{center}
\begin{tabular}{|c|c|c|}
\hline
\multicolumn{3}{|c|}{$\alpha=0.5$}\\
\hline
& $b_0=b_1$ & $b_1>b_0$\\
\hline
not reject $H_0$ & 64 \%& 25\%\\
reject $H_0$ & 36\% & 75\%\\
\hline
\end{tabular}
\caption{{\small \textit{Confusion matrix based on the 200 tests performed on simulated data with the threshold $\alpha=0.50$. Each of the 200 tests are based on ABC estimation using 20,000 simulations, and choosing in turn 200 of these to play the role of the dataset: 100 with $b_0=b_1$ and 100 with $b_1>b_0$. We have computed the type I (resp. II) errors by counting the proportion of the 100 tests using data with $b_0=b_1$ that has lead to the choice $H_a$ using the threshold $\alpha=0.50$ (resp. with $b_0<b_1$ that has lead to $H_0$).}}}\label{tab:test}
\end{center}
\end{table}

\begin{figure}[!ht]
\begin{center}
\begin{tabular}{ccc}
\includegraphics[width=7.5cm,angle=0,height=5cm]{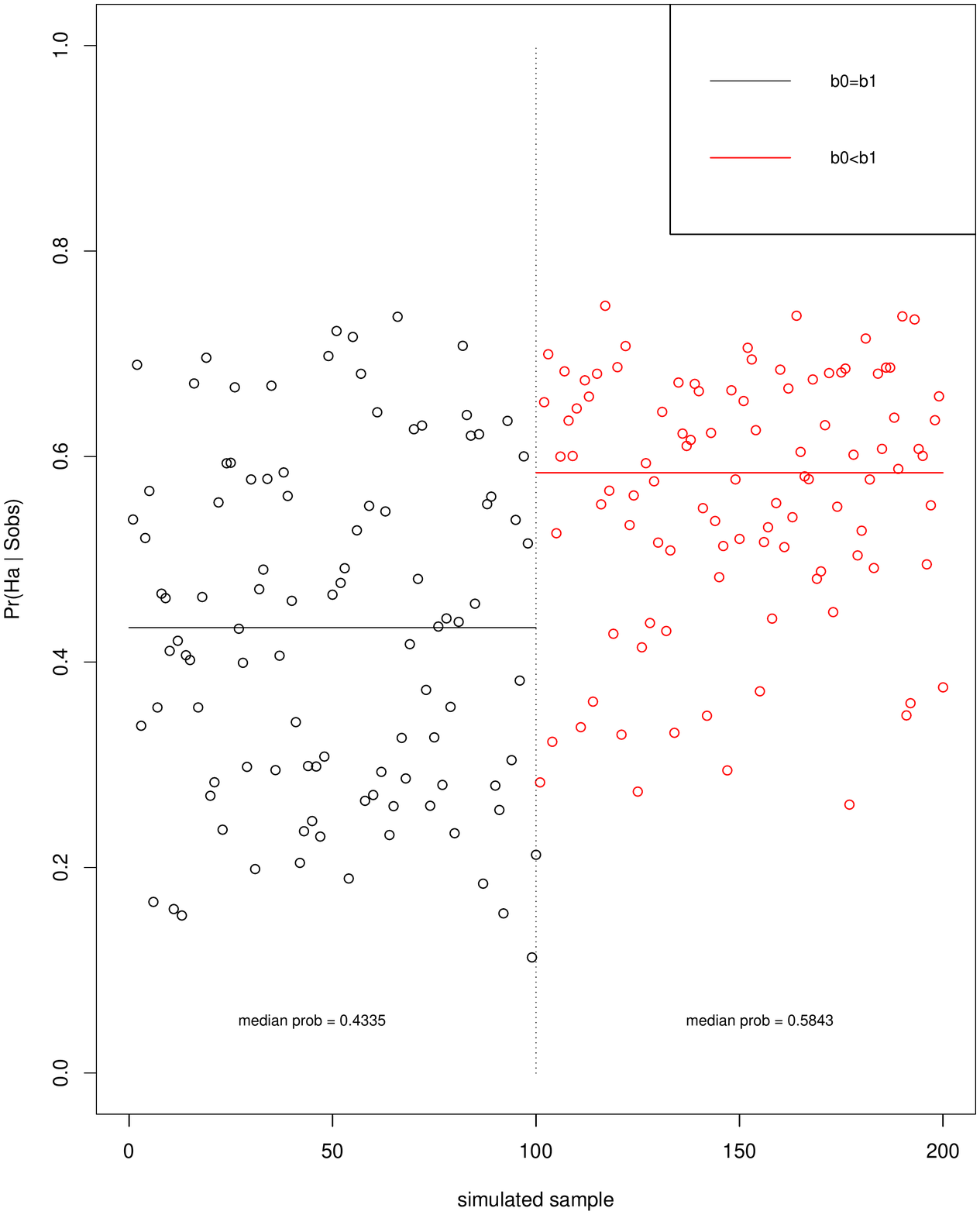} &
\includegraphics[width=7.5cm,angle=0,height=5cm]{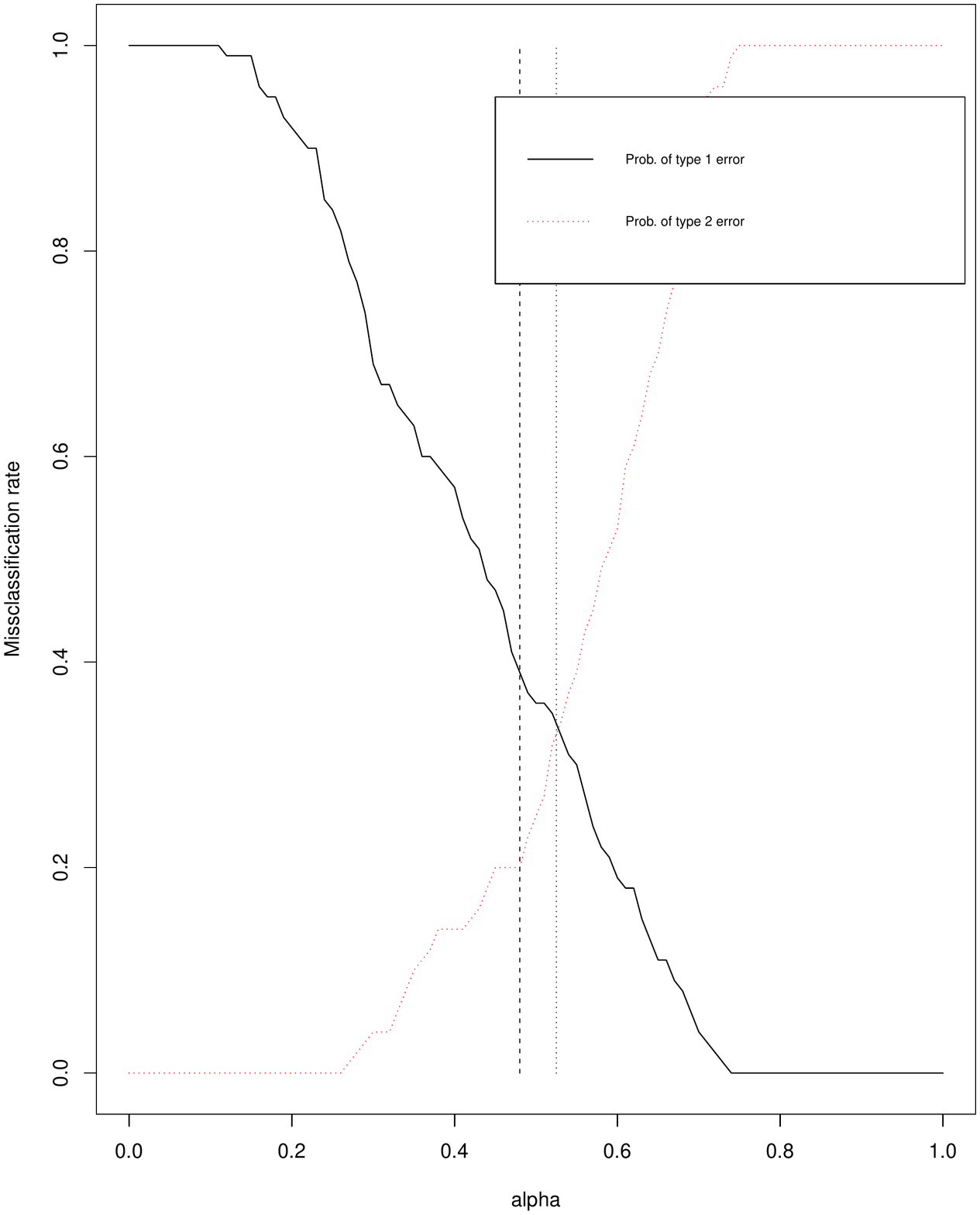}
&
\includegraphics[width=7.5cm,angle=0,height=5cm]{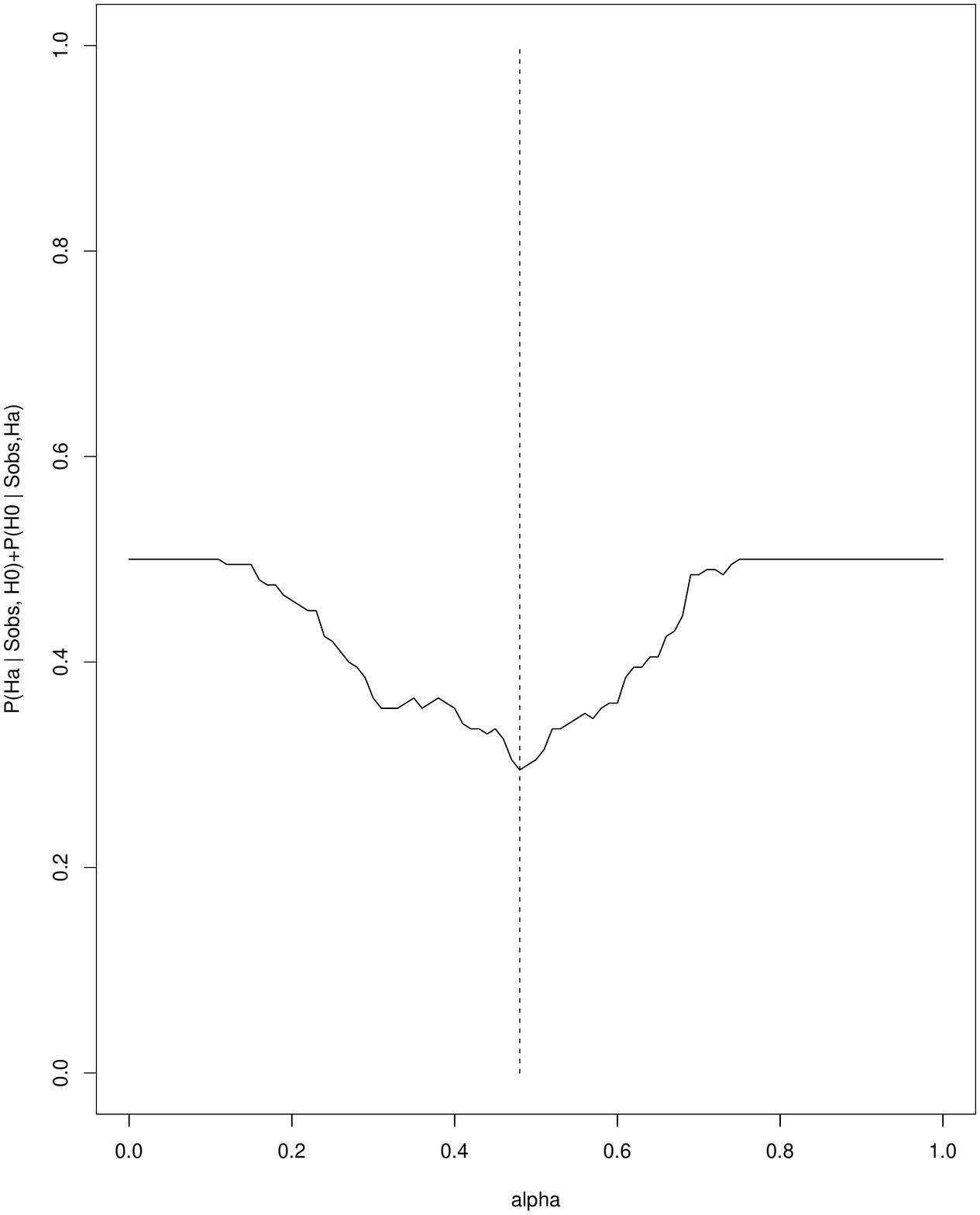}
\\
(a) & (b) & (c)
 \end{tabular}
\caption{{\small \textit{(a): Estimated posterior probability that $b_0<b_1$ conditional on the summary statistics, for each of the 200 simulations chosen in turn to be the `data'. The index of the simulation is in abscissa. The 100 first simulations have been obtained under the constraint $b_0=b_1$ while the 100 last ones are under the constraint $b_0<b_1$. For each simulation, the posterior probability $\widehat{P}(H_a\ |\ S_{\sc{obs}})$ is computed using \eqref{eq:teststat}. The two plain horizontal lines correspond to the medians of these estimations for the 100 simulations where $b_0=b_1$ and for the 100 simulations where $b_0<b_1$: these medians are respectively $0.4335$ and $0.5843$. (b): The estimated probabilities of Type I and Type II errors are plotted as a function of the threshold $\alpha$ defining the critical region: $\{\widehat{\P}(H_a\ |\ S_{\sc{obs}})>\alpha\}$. The intersection of these two curves corresponds to $\alpha=0.53$ (vertical dotted line). The value $\alpha=0.49$ is indicated in dashed line. (c): The sum of the estimated probabilities of Type I and Type II errors is plotted as a function of $\alpha$. The sum is minimal when $\alpha=0.49$ (dashed line).  }}}\label{fig:test1}
\end{center}
\end{figure}

\subsection{ABC on the Central Asian database}\label{append:chaix-donnees}
We performed the statistical test \eqref{test:H_0} on the Central Asian dataset. Using the same ABC framework with the same 20,000 simulations, we computed \eqref{eq:teststat} and performed the ABC test presented as with the pseudo-data sets (App. \ref{append:chaix-simul}).
We find
$\widehat{\P}(H_a\ |\ S_{\sc{obs}})=0.4518$ which is below the threshold $\alpha=0.50$.

We are in the acceptance region and the null hypothesis $H_0\ : \ b_0=b_1$ can not be rejected.
Hence, the test concludes that there is no significant difference of fertility rates between the two social organizations: patrilineal and cognatic.\\

Using the 100 simulations where $b_0=b_1$, the p-value of the test, estimated as the proportion of these simulations where $\widehat{\P}(H_a\ |\ S_{\sc{obs}})\geq 0.4518$, can be estimated to 47\%.

\end{document}